\documentclass[11pt,a4paper]{article}
\usepackage{jheppub}

\usepackage{amsmath,amsopn,amssymb,amsfonts}
\usepackage{booktabs}
\usepackage{verbatim}
\usepackage{amsthm}
\textwidth 6.3in

\def\beq{\begin{equation}}
\def\eeq{\end{equation}}
\def\bea{\begin{eqnarray}}
\def\eea{\end{eqnarray}}
\def\half{\frac{1}{2}}

\def\Tr{{\rm Tr}}

\def\qb {\bar{q}}

\def\cF{\cal F}

\newcommand{\RR}{{\mathbb R}}
\newcommand{\CC}{{\mathbb C}}
\newcommand{\ZZ}{{\mathbb Z}}
\newcommand{\HH}{{\mathbb H}}




\theoremstyle{plain}
\newtheorem{thm}{Theorem}[section]
\newtheorem{corollary}[thm]{Corollary}
\newtheorem{lemma}[thm]{Lemma}
\newtheorem{proposition}[thm]{Proposition}

\theoremstyle{definition}
\newtheorem{definition}[thm]{Definition}
\newtheorem*{remark}{Remark}



\def\CF{{\cal F}}

\def\CN{{\cal N}}
\def\CS{{\cal S}}
\def\CT{{\cal T}}

\def\CC{{\cal C}}

\newcommand{\IR}{\mathbb{R}}
\newcommand{\IC}{\mathbb{C}}
\newcommand{\IZ}{\mathbb{Z}}
\newcommand{\IH}{\mathbb{H}}

\newcommand{\SCFT}{\mathrm{SCFT}}

\def\R{\mathbb{R}}
\def\Z{\mathbb{Z}}



\newcommand{\pa}{\partial}

\def\i{i}

\def\bea{\begin{eqnarray}}
\def\eea{\end{eqnarray}}
\def\be{\begin{equation}}
\def\ee{\end{equation}}
\def\ba{\begin{align}}
\def\ea{\end{align}}
\def\bse{\begin{subequations}}
\def\ese{\end{subequations}}

\newcommand{\bem}{\begin{pmatrix}}
\newcommand{\eem}{\end{pmatrix}}
\newcommand{\bmat}[1]{\left[ \smallmatrix #1 \endsmallmatrix \right]}

\def\qb {\bar{q}}

\def\p{\partial}
\def\tbar{\overline{\tau}}
\def\zbar{\overline{z}}
\def\ubar{\overline{u}}

\def\lp{\left(}
\def\rp{\right)}
\def\lb{\left[}
\def\rb{\right]}

\def\aa{\bar{a}}
\def\bb{\bar{b}}

\def\sl{SL(2,\mathbb{R})}
\def\slc{SL(2,\mathbb{R})/U(1)}

\def\={\;  = \;}
\def\+{\, + \,}

\def\wt{\widetilde}
\def\wh{\widehat}
\def\bar{\overline}

\def\rt2{\sqrt{2}}

\newcommand{\I}{i}
\newcommand{\dd}{\mathrm{d}}




\parskip 0.2 cm

\def\vth{\vartheta}
\def\ve{\varepsilon}
\def\v{\varphi}

\def\t{\tau}

\def\b{\beta}

\def\O{{\Omega}}

\def\th{\theta}\def\vth{\vartheta}

\renewcommand{\th}{\theta}

\def\p{\partial}
\def\tbar{\bar \tau}
\def\zbar{\bar z}
\def\ubar{\bar u}
\def\ch{\mathrm{ch}}

\def\myth1{\vartheta_{1}}


\def\bh{B}


\title{ADE Double Scaled Little String Theories, Mock Modular Forms and Umbral Moonshine}

\thispagestyle{empty}

\preprint{EFI-14-22}

\author[1]{Jeffrey A.~Harvey}
\author[2]{Sameer Murthy}
\author[3]{and Caner Nazaroglu}
\affiliation[1,3]{Enrico Fermi Institute, University of Chicago \\
5640 S Ellis Ave., Chicago, Illinois 60637, USA}
\affiliation[2]{Department of Mathematics, King's College London \\
The Strand, London WC2R 2LS, U.K.}
\emailAdd{j-harvey@uchicago.edu, sameer.murthy@kcl.ac.uk, cnazaroglu@uchicago.edu}

\abstract{
We consider double scaled little string theory on $K3$. These theories are labelled by a positive integer $k \ge 2$ and an $ADE$ root lattice with Coxeter number $k$.  We count BPS fundamental string states 
in the holographic dual of this theory using the 
superconformal field theory
$K3 \times \left( \frac{SL(2,\RR)_k}{U(1)} \times \frac{SU(2)_k}{U(1)} \right) \big/ \IZ_k$. We show that the BPS fundamental string states  that are counted by the second
helicity supertrace of this theory give rise to
weight two mixed mock modular forms. We compute the helicity supertraces using two separate techniques:  a path integral analysis that leads to a modular invariant but non-holomorphic answer,  and a Hamiltonian analysis of the contribution from discrete states which leads to a holomorphic but not modular invariant answer. From a mathematical point of view the Hamiltonian analysis leads to a mixed mock modular form while the path integral gives the completion of this mixed mock modular form. We also  compare these weight two mixed mock modular forms to those that appear in instances of Umbral Moonshine labelled by Niemeier root lattices $X$ that are powers of $ADE$ root lattices and find that they are equal up to a constant factor that we determine. In the course of the analysis we encounter an interesting generalization of Appell-Lerch sums and generalizations of  the Riemann relations of Jacobi theta functions that they obey. 

}
\keywords{modular forms, moonshine, little string theory, NS5-branes}

\setcounter{page}{1}

\begin{document}

\maketitle
\section{Introduction and Motivation}

Little string theory (LST) originated in the study of the dynamics of modes localized on solitonic fivebranes that act as a source for the massless
two form field $B$ that originates in the Neveu-Schwarz sector of superstring theory. These fivebranes are often called NS5-branes and exist in type IIA and IIB string theory as well as in heterotic string theory \cite{Strominger:1990et, Callan:1991dj, Callan:1991ky, Callan:1991at}. As is the case for
D-branes, it is possible to take a limit in which the dynamics on the fivebranes decouples from the bulk, but unlike the corresponding limit for D-branes,
the limit in which the dynamics on fivebranes decouples keeps the energy scale $E$ fixed relative to the string scale $m_s$. Taking the string
coupling $g_s \rightarrow 0$ with $E \sim m_s$ leads to a six-dimensional theory on the fivebranes, dubbed little string theory (LST) in
\cite{Losev:1997hx}, which is a non-trivial interacting
theory with fascinating properties. In particular it is a six-dimensional supersymmetric theory with either $(2,0)$ or $(1,1)$ supersymmetry in
IIA or IIB string theory respectively and has stringy excitations and behavior including T-duality and a Hagedorn density of states. 
Unlike critical string theory, it does not include gravity as there is no massless spin two particle in the spectrum. Evidence that such theories exist and discussions of the properties described above can be found in  \cite{Berkooz:1997cq, Seiberg:1997zk, Losev:1997hx, Giveon:1999px, Giveon:1999tq} with \cite{Aharony:1999ks} and \cite{Kutasov} containing useful reviews.  Some important earlier ideas with important applications to black hole physics appeared in \cite{DVVI, DVVII}. The papers \cite{Israel:2004ir, Eguchi:2004yi,Chang:2014jta} are particularly useful references for many of the aspects of LST that we will utilize in our analysis. 

Techniques for analyzing little string theory are unfortunately limited. Some early analysis was based on discrete light-cone gauge theory
and Matrix theory \cite{Berkooz:1997cq, Aharony:1997an, Aharony:1997th} and there is also an approach using deconstruction \cite{ArkaniHamed:2001ie}, but most recent analyses have utilized a holographic description \cite{Aharony:1998ub, Itzhaki:1998dd, Boonstra:1998mp}  of the theory 
of $k$ fivebranes based on the superconformal field theory (SCFT)
\be \label{chs}
M_6 \times \RR_\phi \times SU(2)_k
\ee
that describes the space-time background sourced by fivebranes in the decoupling limit above~\cite{Callan:1991dj}. Here $\RR_\phi$ is a supersymmetric linear dilaton background,
$SU(2)_k$ is a level $k$ supersymmetric Wess-Zumino-Witten (WZW) model and in the simplest case of fivebranes in flat space $M_6=\RR^{5,1}$ is the free  superconformal field theory
describing the space-time coordinates tangent to the fivebrane world-volume. In this paper we will take $M_6=M_4 \times \RR^{1,1}$ with
$M_4$ a hyperK\"ahler manifold. The utility of this holographic description is limited
by the presence of a strong coupling region: since the dilaton field is linear in the coordinate $\phi$, the theory  has one asymptotic region
in which the string coupling goes to zero and another asymptotic region where the string coupling goes to infinity and thus one needs to be
able to do strong coupling computations to fully study the theory in both regions.

The strong coupling problem can be circumvented by Higgsing the theory, that is by separating the fivebranes so that they are located in  a $\IZ_k$ symmetric fashion on a circle
of radius $r_0$ in a two-dimensional plane in the four-dimensional space transverse to the fivebrane before taking the scaling limit \cite{Sfetsos:1998xd, Giveon:1999px, Giveon:1999tq}. 
The Higgsing breaks the~$SU(2) \times SU(2)$ symmetry of the string theory in the background \eqref{chs} down to~$U(1) \times \IZ_{k}$. If we consider
fivebranes in type IIB string string theory then by $S$-duality we know there is a $U(k)$ gauge theory on the fivebrane broken down to
$U(1)^k$  and that there are $D1$ branes that can stretch between the fivebranes with mass  of order $r_0/g_s$ whose lowest energy states
include the massive gauge bosons from breaking $U(k)$ down to $U(1)^k$. 
One then considers the double scaling limit $g_s \rightarrow 0, r_0 \rightarrow 0$ with $r_0/g_s$ fixed. In this limit the massive gauge bosons
survive as states with finite mass and one can study physics at the string scale rather than taking a low-energy limit. 

Although our description so far involves a $U(k)=A_{k-1} \times U(1)$ gauge theory arising on $k$ coincident fivebranes in type IIB string theory,
there are dual descriptions of this theory which arise when we identify an $S^1$ in one of the directions transverse to the fivebranes. In particular,
T-duality relates the $A_{k-1}$ fivebrane theory to string theory near a $\IC^2/\IZ_k$ singularity \cite{Ooguri:1995wj}. 
In the dual picture, separating the NS5-branes on a circle corresponds to deforming the singularity with a deformation parameter $\mu$. The double scaling limit, $g_s, \mu \to 0$ with $g_s/\mu^{1/k}$ fixed, is then described by the SCFT
\be \label{scft}
\RR^{1,1} \times M^4 \times \left( \frac{SL(2,\RR)_k}{U(1)} \times \frac{SU(2)_k}{U(1)} \right) \big/ \IZ_k
\ee
where $SL(2,\RR)_k/U(1)$ is the supersymmetric, non-compact, ``cigar" conformal field theory and $SU(2)_k/U(1)$ is a coset theory
that describes the $N=2$ minimal model SCFTs.  The infinite throat region that led to a strong coupling region has now been capped
off and as a result it is possible to do reliable perturbative computations on aspects of double scaled little string theory (DSLST) using this SCFT.
The $\IZ_k$ orbifold is present in this construction in order to project onto a set of $U(1)_R$ charges that are consistent with imposition of
the GSO projection needed to obtain a theory with space-time supersymmetry.  Note that the elliptic genus of the SCFT (\ref{scft}) minus the $\RR^{1,1} \times M^4$ factor has been studied in \cite{Ashok:2012qy} in connection with mirror symmetry.

The McKay correspondence
between simply laced ($ADE$) root diagrams and finite subgroups $\Gamma$ of $SU(2)$ suggests that we can define an $ADE$ extension of this construction by
considering string theory on $\IC^2/\Gamma$. This extension to include $D_k$ and $E_6, E_7, E_8$ theories is
supported by the structure of the SCFT (\ref{scft}) which also has an $ADE$ classification of modular invariant partition functions that
are compatible with space-time supersymmetry \cite{Cappelli:1987xt, Cecotti:1992rm, Martinec:1988zu, Vafa:1988uu}.  

Thus the SCFT appearing in (\ref{scft}) is labelled by a hyperK\"ahler manifold $M^4$, a positive integer $k \ge 2$ and a choice of $ADE$
root system or Dynkin diagram  which can be $A_{k-1}$ for any choice of $k$, $D_{1+k/2}$ for $k$ even, and $E_6$, $E_7$ or $E_8$ for $k=12, 18, 30$
respectively\footnote{In the $ADE$ classification of modular invariant partition functions at level $k$ the Coxeter number of the $ADE$ root system must
equal $k$.} and this SCFT provides a holographic description of the corresponding $ADE$ DSLST compactified on $M^4$.

In this paper we will focus on the case $M^4=K3$. Although this choice could be motivated by the desire to study LST in situations with reduced
supersymmetry, our motivation arises from the fact that with this choice there is a natural object, the second helicity supertrace, that counts BPS
states, and that from the analysis in \cite{Harvey:2013mda} for $k=2$ seems to have some connection to the mock modular form appearing
in Mathieu moonshine \cite{Eguchi:2010ej, Cheng:2010pq, Gaberdiel:2010ch, Gaberdiel:2010ca, Eguchi:2010fg}. Given that Mathieu moonshine
has an extension to Umbral moonshine \cite{Cheng:2012tq,Cheng:2013wca} and that Umbral moonshine is classified by root lattices with ADE components, total rank $24$ and
common Coxeter number, it is natural to ask whether the mock modular forms of Umbral moonshine also appear in helicity supertraces of LST on $K3$
at higher values of $k$ and whether there is a connection between the $ADE$ classification of LST and the $ADE$ classification of Umbral Moonshine.
A related analysis for $M^4=T^4$ can be found in \cite{Cheng:2014zpa}.

The second helicity supertrace that we will study is given by
\be \label{helsuptr}
\widehat \chi_2(\tau) =  \Tr \, J_s^2 \, (-1)^{F_s} \, q^{L_0 - c/24} \, \qb^{\tilde{L}_0 - \tilde{c}/24}
\ee
where the trace is over all space-time states, that is including both Ramond (R) and Neveu-Schwarz (NS) sectors of the SCFT
with GSO projection. $J_s$ is the generator of a $U(1)$ rotation in space-time which we take to be rotation about the asymptotic $S^1$ in
the cigar CFT and $F_s$ is the space-time fermion number which is $0$ for bosons and $1$ for fermions.  Standard arguments show that this
quantity receives contributions only from BPS states, and in a theory with a discrete spectrum, is a holomorphic function of $\tau$ . See e.g. \cite{Kiritsis:1997gu} for a review. 

One important fact about the SCFT (\ref{scft}) is that  the $SL(2,\RR)_k/U(1)$ factor describes string propagation on a {\it non-compact }space. As a result the spectrum contains both discrete, normalizable states, as well as a continuum of scattering states. Recent analysis of the elliptic genus of the
$SL(2,\RR)_k/U(1)$ SCFT in (\ref{scft}) \cite{Troost:2010ud, Eguchi:2010cb, Ashok:2011cy} has revealed a tension between modularity and holomorphicity in computations of the elliptic genus.  Contributions from discrete states are holomorphic functions of $\tau$ because  of cancellations between fermions and bosons, but since modular transformations mix characters of discrete states with those of continuum states, the contribution of discrete states alone is not modular invariant. Including the continuum states leads to a modular invariant expression for the elliptic genus, but it is
no longer holomorphic because violations of holomorphy arise from continuum states due to  differences between the density of states of fermions and
bosons. 

This kind of tension between holomorphy and modularity is precisely the defining feature of mock modular forms. Mock modular forms were first
introduced by Ramanujan in the last letter he wrote to Hardy in 1920, but were only understood  in more depth in recent times due to work of Zwegers \cite{zwegers2008mock} and others. For reviews see \cite{MR2605321, MR2555930}.  Recall that a modular form of weight
$k$ is a holomorphic function $f(\tau)$ on the upper half plane $\HH$ obeying
\be
f \left( \frac{a \tau+b}{c \tau +d} \right) = (c \tau+d)^k f(\tau), \qquad \begin{pmatrix} a & b \\ c & d \end{pmatrix} \in SL_2(\ZZ) \,. 
\ee
In the simplest case a weakly holomorphic mock modular form of weight $k$ for $SL_2(\ZZ)$ is a holomorphic function on $\HH$ with at most
exponential growth as $\tau \rightarrow i \infty$ which is part of a pair of holomorphic functions $(h(\tau), g(\tau))$ where $g(\tau)$, called the
shadow of $h(\tau)$, is a modular form of weight $2-k$ and the completion of $h(\tau)$, $\widehat h(\tau)$, given by
\be \label{mockdef}
\widehat h(\tau)= h(\tau) + g^*(\tau) 
\ee
with 
\be
g^*(\tau) = (4i)^{k-1} \int_{- \bar \tau}^{\infty} (z+\tau)^{-k} \overline{g(- \bar z)} dz \, ,
\ee
transforms like a modular form of weight $k$ on $SL_2(\ZZ)$. Note that $h(\tau)$ is holomorphic, but not modular if the shadow $g(\tau)$ is
non-zero, while in this case the completion $\widehat h(\tau)$ is modular but not holomorphic, and obeys the 
equation:
\be    
(4\pi\t_2)^k\,\,\frac{\pa \wh h(\t)}{\pa \bar{\tau}} \= -2\pi i\;\bar{g(\tau)} \, .  
\ee

Mathieu moonshine is the observation that a particular mock modular form of weight $1/2$, $H^{A_1^{24}}$ with $q$ expansion
\be
H^{A_1^{24}}= 2 q^{-1/8} \left( -1 + 45 \, q + 231 \, q^2 + 770 \, q^3 + 2277 \, q^4+ 5796 \, q^5 + \cdots \right)
\ee
appears in the decomposition of the elliptic genus of $K3$ into characters of the $N=4$ superconformal algebra (SCA) and that the
coefficients $45, 231, 770, 2277, 5796$ are dimensions of irreducible representations (irreps) of the sporadic Mathieu group $M_{24}$ while
the higher coefficients have decompositions into small numbers of irreps.  The shadow of $H^{A_1^{24}}$ is $ 24 \eta(\tau)^3$ where
$\eta(\tau)$ is the Dedekind eta function.  Strictly speaking, the pair $(H^{A_1^{24}}, 24 \eta(\tau)^3)$ does not satisfy the definition above
because $\eta(\tau)^3$ acquires a phase (an eighth root of unity) under modular transformations. Thus as is often done for modular forms,
we need to generalize the definition (\ref{mockdef}) to allow for a multiplier system which can be a phase, or more generally for vector
valued modular or mock modular forms with $n$ components, a multiplier $\rho: SL_2(\ZZ) \rightarrow GL_N(\CC)$ that determines the matrix
transformation on the components of the (mock) modular form that accompanies a modular transformation on $\tau$.

We will also need to extend the definition (\ref{mockdef}) to include mixed mock modular forms. If $M_k$ is the space of 
weight $k$ modular forms then we define a mixed mock modular form of weight $k| \ell$ to be a holomorphic function with
polynomial growth at the cusps that has a completion 
\be
\widehat h = h(\tau) + \sum_j f_j \, g_j^* \,,  \qquad f_j \in M_\ell, ~~ g_j \in M_{2-k-\ell} \, ,
\ee
such that $\widehat h$ transforms like a modular form of weight $k$. 
As an example, the product $\eta(\tau)^3 H^{A_1^{24}}$ is a mixed mock modular form of weight $2$ with shadow $24 \, \eta(\tau)^3 \, \overline{ \eta(\tau)^3} $.

In \cite{Harvey:2013mda} it was shown that the second helicity supertrace of the SCFT (\ref{scft}) at $k=2$ is the completion of this mixed mock
modular form, $\chi_2^{k=2}= - (1/2) \, \eta(\tau)^3 \widehat H^{A_1^{24}}(\tau)$. This result suggests a possible connection between the the $k=2$
DSLST and Mathieu moonshine. Mathieu moonshine has been extended to Umbral Moonshine in which vector-valued mock modular forms
$H^X$ are labelled by the root lattices $X$ of the $23$ even, self-dual rank 24 lattices with non-empty root systems, that is by the $23$ Niemeier
root lattices $L^X$, and exhibit moonshine properties for groups $G^X$ which are defined in terms of the Niemeier lattice $L^X$ as
\be
G^X= {\rm Aut}(L^X)/W_X
\ee
where $W_X$ is the Weyl group of $X$.

The Niemeier root lattices $X$ are composed of $ADE$ components with total rank $24$ and equal Coxeter numbers $m(X)$.  Umbral
Moonshine generalizes Mathieu moonshine in that $X=A_1^{24}$ with $m(X)=2$ leads to the moonshine group $G^{A_1^{24}}=M_{24}$
and to the mock modular form $H^{A_1^{24}}$ given above. The weight $3/2$ shadow $\eta(\tau)^3$ appearing for $X=A_1^{24}$ has a
generalization involving the vector-valued weight $3/2$ theta functions
\be
S_{m,r}(\tau) = \sum_{n \in \ZZ} \, (2mn+r) \, q^{(2mn+r)^2/4m}
\ee
with $r=1,2,\cdots m-1$, and for each $X$ labeling an instance of Umbral moonshine we can associate a mixed mock modular form of weight
$2$ given by
\be
\chi_2^X= \sum_r S_{m(X),r}(\tau) \, H^X_r(\tau) \, .
\ee
For $m=2$ we have $S_{2,1}(\tau)= \eta(\tau)^3$.

We thus have on the one hand the SCFT (\ref{scft}), specified by an integer $k \ge 2$ and a choice $Y$ of $ADE$ root diagram with Coxeter number
$k$ with its second helicity supertrace $\widehat \chi_2^Y(\tau)$  (which we now label by $Y$) and we will show later that $\widehat \chi_2^Y(\tau)$ is
the completion of a mixed mock modular form of weight two. On the other hand we have the quantities $\chi_2^X$ labelled by Niemeier lattices
$X$ with an $ADE$ classification and these are also mixed mock modular forms of weight two. 
By comparing shadows we conclude that
the $\widehat \chi_2^Y$ are, up to a numerical factor that we will determine, the completions of the $\chi_2^X$ when the $X$ are powers of a single
$ADE$ component. 

To  summarize, we have two motivations for the computations done in this paper. The first is to explore further the structure of DSLST compactified on
$K3$ by computing the second heliticy supertrace that counts the BPS states of this theory. The second is to further explore possible connections
between DSLST on K3 and Mathieu and Umbral moonshine.

The outline of this paper is as follows. In Section 2 we describe the structure of the superconformal field theory (\ref{scft}) in more detail, discuss the structure of the helicity supertrace and then compute the full, non-holomorphic supertrace by generalizing the techniques used in \cite{Harvey:2013mda} which involve performing
an integral over the holonomy of a $U(1)$ gauge field on the torus. In Section 3 we analyze the holomorphic contribution to the helicity supertrace by
performing an explicit sum over the discrete characters of the superconformal field theory. We compare our results to the holomorphic part of the previous computation and find perfect agreement. The result of these computations involves a sum of the form $\sum_r S_{k,r} h_r$ where the $h_r$ are
vector-valued mock modular forms. We turn in Section 4 to a discussion of the relation between the mock modular forms $h_r$ arising in these computations
and the vector-valued mock modular forms $H^X_r$ of Umbral Moonshine. In Section 5 we conclude and offer some thoughts about possible future directions of research. The appendices contain many technical and interesting mathematical details that appear in this work. These include a generalization of the Riemann identities for Jacobi theta functions to Riemann-like identities involving Appell-Lerch sums and a generalization of the Appell-Lerch sum
that appears prominently in \cite{zwegers2008mock} to a Appell-Lerch like sum depending on the modular parameter $\tau$ as well as on three elliptic variables.

\section{The holographic dual of DSLST}\label{wsheetsec}
We consider the holographic dual of DSLST as described by the SCFT~\eqref{scft} with~$M^{4}=K3$, 
with $\CN=4$ supersymmetry \cite{Kounnas:1993ix, Antoniadis:1994sr, Israel:2004ir} on the worldsheet. 
This SCFT describes string propagation in the background of $k$ NS5-branes
that are wrapped on $K3$, and separated along the transverse~$\IR^{4}$ in a ring structure, in a near-horizon
double scaling limit~\cite{Giveon:1999zm}. 
The geometric picture of the $SL(2)/U(1)$ coset is a semi-infinite cigar \cite{Witten:1991yr, Elitzur:1991cb, Mandal:1991tz}, 
which is asymptotically a linear-dilaton times a circle. 
Translation around the circle is an exact symmetry of the theory, and the conserved $U(1)$ momentum  corresponds to the 
spacetime R-charge. 

The string theory based on the above SCFT has 16 conserved supercharges. 
We would like to compute the second helicity supertrace~\eqref{helsuptr} of fundamental strings in this background, thus
generalizing the calculation of~\cite{Harvey:2013mda} for the~$k=2$ A-type theory. In path-integral language
this is a torus partition function with two insertions of the spacetime R-charge.
We will essentially follow the ideas and calculations of~\cite{Harvey:2013mda} to compute the second
string helicity supertrace. We will briefly sketch the procedure below, mainly focussing on the new
aspects compared to~\cite{Harvey:2013mda} and relegating many details to the appendices.
We begin by considering a string wrapped on the~$S^{1}$ and moving in time.
The full string theory includes the reparametrization ghosts of the $\CN=1$ string worldsheet,
which have the effect of cancelling the oscillator modes of the $\RR \times S^1$ factor in the above SCFT.
This leaves us\footnote{As mentioned in~\cite{Harvey:2013mda}, one can also obtain this SCFT
using a gauge-fixing condition on the string worldsheet.}
with the momentum and winding modes around the~$S^{1}$ and all the fluctuations in the
``internal'' theory with central charge~$c = \wt c = 12$:
\be \label{lcCFT}
K3 \times \Big( \frac{SL(2)_{k}}{U(1)} \times \frac{SU(2)_{k}}{U(1)}\Big)\big/\IZ_{k} \, .
\ee

We shall consider the~$K3$ to be at the~$T^{4}/\IZ_{2}$ orbifold point, but we find that
the final result of our calculation only depends on the elliptic genus of~$K3$ which does not
depend on the moduli of the $K3$ surface. 
A new ingredient compared to the~$k=2$ situation is the supersymmetric~$SU(2)_{k}/U(1)$ theory
with central charge $c = 3-\frac6k$ (which becomes trivial at~$k=2)$.
The characters of this theory are defined in terms of a branching relation using the characters of
the~$SU(2)$ WZW model after splitting off a~$U(1)$ factor~\cite{Kawai:1993jk}. 
We gather some useful information about the characters in Appendix \ref{sec:su2char}.
We treat the non-compact gauged WZW $SL(2)_{k}/U(1)$ model based on the more recent work
of~\cite{Troost:2010ud,Eguchi:2010cb,Ashok:2011cy}.
The coset is expressed as~$SL(2)_{k} \times U(1)^{\IC}/U(1)$ where~$U(1)^{\IC}$ is a complexification of the
gauged~$U(1)$ subgroup, to which one adds a~$(b,c)^\text{cig}$ ghost system of central charge~$c=-2$.
The supersymmetric~$SL(2)_{k}$ consists of a bosonic~$H_{3}^{+}$ WZW model at level $k+2$,
and two free fermions $\psi^{\pm}$ (and their right-moving counterparts).
The coset~$U(1)^{\IC}/U(1)$ is represented by the real boson~$Y$.
Related string worldsheet calculations that combine the two cosets have been discussed 
in~\cite{Israel:2004ir, Eguchi:2004yi} in a different context.

The holonomies of the~$U(1)$ gauge field around the two cycles of the torus are represented by a
complex parameter\footnote{Throughout this paper, we will use the subscripts 1 and 2 on a complex variable
to denote its real and imaginary parts, i.e.~$\t=\t_{1}+i \t_{2}$, $u=u_{1}+i u_{2}$ etc.}~$u = a \t +b$.
It is useful to consider a combination of the boson~$Y$ and the gauge field holonomy which is called~$Y^{u}$,
as in~\cite{Troost:2010ud,Ashok:2011cy}. This field~$Y^{u}$ is a compact real boson at radius~$R=\sqrt{2/k}$, and can be
thought of, in the asymptotic variables, as the angular direction of the cigar. In the exact theory it is described
by the compact level~$k$ CFT~$U(1)_{k}$.
The bosonic~$H_{3}^{+}$, the two fermions, the~$Y^{u}$ boson, and the~$(b,c)^\text{cig}$ ghosts are all
solvable theories and are coupled by the holonomy~$u$ that has to be integrated over the elliptic
curve~$E(\t)=\IC/(\IZ \tau + \IZ)$.

We now write down the partition functions of the various fields entering~\eqref{lcCFT}.
We will keep track of the gauge field holonomy~$u$ as well as a parameter~$z$ which we introduce
as the chemical potential of the spacetime R-charge. This R-charge is the diagonal~$J^{3}$ component
of the~$SU(2) \times SU(2)$ R-symmetry rotations of the CHS solution, and it is identified with the~$U(1)$
momentum around the cigar (see~\cite{Giveon:1999zm, Murthy:2003es} for a discussion). The appearance
of the two potentials~$u$ and~$z$ is governed by the corresponding charges of the above fields.
Our notations and conventions are summarized in Appendix~\ref{Conventions}.

The bosonic $H_{3}^{+} = SL(2,\IC)/SU(2)$ model at level~$k$ contributes:
\be \label{H3plus}
Z_{H_{3}^{+}}(\t,u) \= \frac{(k+2) \sqrt{k}}{\t_{2}^{1/2}} \, e^{{2 \pi u_{2}^{2} / \tau_{2}}} \,  \frac{1}{|\theta_{11}(\tau;u)|^{2}} \, .
\ee
The $(b,c)^{\rm cig}$ ghosts have the contribution:
\be\label{bcpartfn}
Z_{\rm gh}(\t) \= \t_{2} \, |\eta(\t)^{2} |^{2} \ .
\ee
The two left-moving fermions $\psi^{\pm}$ have a contribution in the NS sector\footnote{The prefactor in front of the
usual expression for free fermions arises because of a factor of $k+2$ in the action of these fermions. This prefactor
cancels an equivalent one in the numerator of the bosons in~\eqref{H3plus}.}  \cite{AlvarezGaume:1986es}:
\be\label{psipm}
ch^{\rm NS}_{\rm \psi^{\pm}}(\t) \=  \frac1{\sqrt{k+2}} \,e^{{-\pi u_{2}^{2} / \tau_{2}}} \, e^{{2 \pi i u_{1} u_{2} / \tau_{2}}} \,
\frac{\theta_{00}(\tau;u)}{\eta(\t)} \, ,
\ee
the other~$NS (-1)^{F}, R, R(-1)^{F}$ characters can be derived from this by~$\CN=2$ spectral flow on the worldsheet.
We shall denote these four characters by~$ch^{(ab)}_{\rm \psi^{\pm}}$ where~$a=0,1$ stands 
for $NS, R$, and $b = 0$ and $1$ stands for the trace with an insertion of $1$ and $(-1)^{F}$  respectively.

The characters of the $\CN=2$ minimal model $SU(2)_{k}/U(1)$ are labelled 
by\footnote{In this section we work at a fixed level $k$ and do not display it in the various formulas 
in order to avoid clutter. In Section~\ref{charsec} and the Appendices, there is an explicit superscript 
on the functions~$C^{j}_r$, $ \chi^{j,a}_{r}$, and $ \chi_{j}$ that shows the $k$-dependence
(see~\eqref{Clkrdef}, \eqref{eq:su2u1_defn}, \eqref{defchikj}).}~$(j,r)$, $(a,b)$~\cite{Kawai:1993jk} and have the form:
\beq\label{minchar}
C^{j}_r \bmat{a \\ b}(\tau; v) =  e^{i \pi a b/2} \left( \chi^{j,a}_{r}(\t;v) + (-1)^{b}  \chi^{j,a+2}_{r} (\t;v) \right)\, .
\eeq
Here $v$ is the chemical potential for the $U(1)_R$ symmetry of the minimal model, and in what 
follows we will only need these characters evaluated at $v=0$. 
The explicit form of the functions~$ \chi^{j,a}_{r}$ are presented in Equations~\eqref{eq:su2u1_defn}, \eqref{eq:string_fnc}, 
\eqref{defchikj}. In this section we will only use that they obey the branching relation (see Equation~\eqref{chibranching}): 
\begin{equation} \label{chibranching1}
\sum_{r \in \ZZ_{2k}} \chi_r^{j,a}(\tau; 0) \, \vartheta_{k,r} \lp \tau; \frac{w}{2} \rp 
\= \chi_{j} (\tau ; w) \, \vartheta_{2,a} \lp \tau ; \frac{w}{2} \rp
\end{equation}
where the functions $\chi_{j}$ are defined by:
\be\label{defstringfn}
 \chi_{j} (\tau; z) = - \, \frac{\vartheta_{k,j+1} (\tau ; z/2) - \vartheta_{k,-j-1} (\tau ; z/2)}{\theta_{1}(\tau,z)} \, .
\ee

The~$U(1)_{k}$ fibered over the coset  has the characters:
\beq\label{uonechar}
ch_{k,r} = e^{{-\pi k (u_{2}+z_{2})^{2} / 2 \tau_{2}}} \,  \frac{ \vartheta_{k,r} \left( \tau ;  -(u+z)/2 \right)}{ \eta(\tau)} \, .
\eeq

On tensoring the chiral part of the free fermions, the $U(1)_{k}$ and the~$SU(2)/U(1)$ characters above, 
we obtain: 
\bea\label{zint}
&& C(\t; u,z) \sum_{r \in \IZ_{2k}} ch^{(ab)}_{\rm \psi^{\pm}} ch_{k,-r} \, C^{j}_r \bmat{a \\ b}  \\
& & \qquad =  \frac{\theta_{ab}(\tau ; u)}{\eta^{2}(\tau)}  \sum_{r \in \IZ_{2k}}
\vartheta_{k,r} (\tau ; (u+z)/2) \, \left( \chi^{j,a}_{r}(\t ; 0) + (-1)^{b}  \chi^{j,a+2}_{r} (\t; 0) \right) \, , \cr
& & \qquad =  \frac{\theta_{ab}(\tau; u)}{\eta^{2}(\tau)}  \left(  \chi_{j} ( \tau ; u +z) \, 
 \vartheta_{2,a} ( \tau ;  (u +z)/2) + (-)^{b}   \chi_{j}(\tau ; u+z) \,   \vartheta_{2,a+2} (\tau ; (u +z)/2) \right) \, , \nonumber
\eea
where we have performed the sum over $r$ using the branching relation \eqref{chibranching1}. 
The last line in \eqref{zint} can further be rewritten as
\beq\label{zintfin}
\frac{\theta_{ab} (\tau ; u)
\, \theta_{ab} \left( \tau ; u +z \right)}{\eta^{2}(\tau)} \,  \chi_{j} (\tau ; u+z) .
\eeq
The prefactor~$C(\t;u,z)$ in~\eqref{zint} can be read off from Equations \eqref{psipm}, \eqref{minchar}, \eqref{uonechar},
and we shall keep track of the overall normalisations separately in what follows. The above manipulations show that 
the characters from the~$SU(2)/U(1)$ factor and the~$U(1)$ of the cigar 
recombine to give back the~$SU(2)$ characters~$\chi_{j}$ in the path integral computation of the second helicity supertrace.  

We now add in the rest of the fermionic fields and perform the sum over spin structures with the GSO projection.
We use the~$T^{4}/\IZ_{2}$ oribifold description of the~$K3$. The partition sum naturally splits into two 
pieces arising from the untwisted and twisted sectors of the this orbifold, that we label by a 
subscript~$(r,s)$ with~$r,s=0,1$. This description is fairly standard in the literature, 
we present some details in \S\ref{charsec}. We note that all the K3 characters are all uncharged 
under both~$u$ and~$z$. Including the fermionic K3 contribution to the chiral partition function~\eqref{zintfin}, 
and summing over the worldsheet spin structures, 
we obtain, in the untwisted sector:
\bea\label{zintUntwst}
Z_{(0,0)}^{j} (\tau; u, z) & = & \half \frac{1}{ \eta^{4}(\tau)} \big(\theta_{00}^{2}(\tau;0) \,\theta_{00}(\tau; u) \,
\theta_{00} (\tau; z + u)   - \theta_{01}^{2}(\tau;0) \, \theta_{01}(\tau; u) \, \theta_{01}( \tau; z+u)
\cr
&  & \qquad \qquad
- \theta_{10}^{2}(\tau;0)  \, \theta_{10} ( \tau;u) \, \theta_{10}( \tau; z +u) \big) \,  \chi_{j} (\tau;u+z) \cr
& = & \frac{1}{\eta^{4}(\tau)} \, \theta_{11}^{2} (\tau;  z/2) \, \theta_{11}^{2} (\tau; z/2+u)  \,  \chi_{j} (\tau; u+z) \, ,
\eea
where we have used, in the second line, the identity $R2$ of \cite{mumford1983tata}.
Note that the only~$j$-dependence is in terms of $\chi_{j}(\t;u)$ which is an overall factor.

In a similar fashion, we obtain the twisted sector partition functions:
\bea
Z_{(0,1)}^{j} (\tau; u, z) & = & \frac{1}{\eta^{4}(\tau)} \, \theta_{11}^{2} (\tau;  z/2) \,
\theta_{10}^{2} ( \tau; z/2+u)  \,  \chi_{j} (\tau; u+z) \, , \cr
Z_{(1,0)}^{j} (\tau; u, z) & = & \frac{1}{\eta^{4}(\tau)} \, \theta_{11}^{2} (\tau;  z/2) \,
\theta_{01}^{2} ( \tau; z/2+u)  \,  \chi_{j} (\tau; u+z) \, , \\
Z_{(1,1)}^{j} (\tau; u, z) & = & \frac{1}{\eta^{4}(\tau)} \, \theta_{11}^{2} (\tau;  z/2) \,
\theta_{00}^{2} ( \tau; z/2+u)  \,  \chi_{j} (\tau; u+z) \, .\nonumber
\eea
These equations are the analog of Equations (3.16) of~\cite{Harvey:2013mda}.
We note that the character~$\chi_{j}(\t,u)$ factors out in Equation~\eqref{zintfin}, and
consequently in all the following expressions.

We still need to be tensor in the remaining bosonic fields from the $K3$ and the $SL(2)$, add the right-movers,
and sum over~$j\in \IZ_{2k}$ to obtain a partition function~$Z(\t;u,\ubar,z,\zbar)$ 
which is integrated over~$(u,\ubar)$. The helicity supertrace~\eqref{helsuptr} is then defined as:
\be \label{eq:sec2_del_operator}
\wh{\chi^{k}_{2}}(\t) \=  \frac{1}{4} \int_{E(\t)} \frac{du_{1} du_{2}}{\t_{2}} \,  
\bigl(\frac{1}{2 \pi i}( \p_{z} -\p_{\zbar}) \bigr)^{2} Z(\t;u,\ubar,z,\zbar) \Big |_{z=\zbar=0} \, . 
\ee
The various intermediate steps proceed exactly as in~\cite{Harvey:2013mda}. In the following section we 
shall display more details of these steps as part of a Hamiltonian analysis of the BPS spectrum of the DSLST. 
The final result for the helicity supertrace is: 
\be\label{chi2hat}
\wh{\chi^{k}_{2}}(\t) \=  \frac{\sqrt{k \t_2}}{4} \int_{E(\t)} \frac{du_{1} du_{2}}{\t_{2}} \, e^{{-\pi k u_{2}^{2} / \tau_{2}}} \;
\frac{\eta(\t)^{6}}{\theta_{1}(\tau; u)}   \, \overline{\th_{1}(\t; u)} \, 
\sum_{\ell\in\IZ_{2k}} |\chi_{\ell}(\t; u)|^{2} \; {\cal Z}^{\rm ell}(K3;\t,u) \, .
\ee

It is convenient to introduce the standard notation 
\be
\v_{-2,1}(\t;u) \= \frac{\theta_{1}(\tau;u)^{2}}{\eta(\t)^{6}} \, ,  \qquad 
\v_{0,1}(\t;u) \= \sum_{i=2,3,4} \frac{\th_{i}(\t;u)^{2}}{\th_{i}(\t;0)^{2}} \, , \qquad P(\t;u) \= \frac{\v_{0,1}(\t;u)}{\v_{-2,1}(\t;u)} \, .
\ee 
The functions $\v_{-2,1}$ and $\v_{0,1}$ are elements of the standard basis of weak Jacobi forms (see (\ref{modelljac}) and \cite{eichler1985theory}). 
The elliptic genus of $K3$ is given by ${\cal Z}^{\rm ell}(K3;\t,u) = 2 \v_{0,1}(\t; u)$. 
The function $P(\t;u)$ is a multiple of the Weierstrass $\wp$-function~$P(\t;u) = -\frac{3}{\pi^{2}} \wp(\t,u)$. 

Using this notation and Equation~\eqref{defstringfn}, we can rewrite~\eqref{chi2hat} as 
(c.f. Eqn.~(3.30) of~\cite{Harvey:2013mda}):
\be \label{chi2int}
\wh{\chi}^{k}_{2}(\t) \=  \int_{E(\t)} \frac{du_{1} du_{2}}{\t_{2}} \,
P(\t;u)  \, \bh_{k}(\t;u)  \, ,
\ee
where
\be \label{defHm}
\bh_{k}(\t;u)  \= \frac{ \sqrt{k\t_{2}} }{2} \; e^{-\pi k u_{2}^{2}/\t_{2}} \, \sum_{\ell \in \IZ_{2k}}
|\vth_{k,\ell} (\t; u/2) - \vth_{k,-\ell} (\t; u/2) |^{2} \, .
\ee
In the above equations \eqref{chi2hat}, \eqref{defHm}, the summed variable $\ell$ takes values in $\IZ_{2k}$. It is clear however, from the definition of the theta functions, that $\wh \vth_{k,0}=\wh \vth_{k,k}=0$ and $\wh \vth_{k, \ell}= - \wh \vth_{k,-\ell}$ are non-zero for $\ell=1,2, \ldots k-1$ 
with
\be
\wh \vth_{k,\ell}(\tau; z)= \vth_{k,\ell}(\tau; z)-\vth_{k,-\ell}(\tau; z)
\ee
so
that only~$k-1$ of the $2k$ summed expressions are distinct and non-zero. 
We can therefore restrict the summation so that one has
\be \label{defHmagain}
\bh_{k}(\t;u)  \= \sqrt{k\t_{2}} \; e^{-\pi k u_{2}^{2}/\t_{2}} \, \sum_{r=1}^{k-1}
|\wh \vth_{k,r}(\tau;u/2)|^{2} \, , 
\ee
We shall use this expression in the following sections. 
Note that the function $\bh_{2}(\t;u)$ is equal to the function denoted
by $H(\t,u)$ in \cite{Harvey:2013mda} after using the identity $\theta_1(\t;u)= - \widehat \vartheta_{2,1}(\tau;u/2)$.

\subsection{ADE DSLST}

There is a natural extension of the above analysis to the more general SCFTs labelled by ADE root systems as follows.
For any divisor $d$ of $k$, define
\be
\bh_{(k,d)}(\t;u)  \= \half \sqrt{k\t_{2}} \; e^{-\pi k u_{2}^{2}/\t_{2}} \, \sum_{r, r' \in \IZ_{2k}}
\wh \vth_{k,r} (\t;u/2) \, \overline{ \wh \vth_{k,r'} (\t;u/2)} \, \O^{(k,d)}_{r, r'}\, ,
\ee
where
\be
\O^{(k,d)}_{r, r'} \= 
\begin{cases}  
1 \qquad \text{if} \quad r+r' =0 \bmod 2d \quad \text{and} \quad r-r' =0 \bmod 2k/d \, , \\
0 \qquad \text{otherwise} \, . 
\end{cases}
\ee
Further, for $Y$ an ADE root system, define $\bh_{Y}(\t;u)$ by 
\be
\bh_{Y}(\t;u)  \=\half \sqrt{k\t_{2}} \; e^{-\pi k u_{2}^{2}/\t_{2}} \, \sum_{r, r' \in \IZ_{2k}}
\wh \vth_{k,r} (\t;u/2) \, \overline{ \wh \vth_{k,r'} (\t;u/2)} \, \O^{Y}_{r, r'}\, ,
\ee
where $\O^{Y}_{r, r'}$ are the $(r, r')$ components of the matrices 
defined\footnote{The matrices~$\O^{Y}$ are the same ones as those appearing in Table~5  of~\cite{Cheng:2013wca}, 
that paper uses the notation $\O_{m}(d)$ for~$\O^{(m,d)}$ .}  
in Table 1.

\begin{table}
\centering
\begin{tabular}{ccc}
\hline
$ Y $& $m(Y)$  &$ \O^{Y}$ \\
\hline
$A_{m-1}$ & $m$ & $ \O^{(m,1)} $		\vspace{0.2em}\\
$D_{{m}/{2}+1}$ & $m$ &  $\O^{(m,1)} + \O^{(m,m/2)} $	\vspace{0.2em}\\
$E_6 $& $12$& $ \O^{(12,1)} + \O^{(12,4)} + \O^{(12,6)}$	\vspace{0.2em}\\
$E_7$ & $18$ &  $\O^{(18,1)} + \O^{(18,6)} + \O^{(18,9)} $	\vspace{0.2em}\\
$E_8$ & $30$ &$\O^{(30,1)} + \O^{(30,6)} + \O^{(30,10)}+ \O^{(30,15)}$
\vspace{0.1em}\\
\hline
\end{tabular}
\caption{\label{ADE1}{Coxeter numbers and  matrices $\Omega^Y$ for ADE root systems Y.
	}}
	 \label{tab:ADE}
\end{table}

In this notation, the function \eqref{defHmagain} is  
\be
\bh_{k} \= \bh_{(k,1)}  \= \bh_{A_{k-1}} \, . 
\ee
Next, define:
\be \label{chi2md}
\wh\chi^{(k,d)}_{2}(\t) \=  \int_{E(\t)} \frac{du_{1} du_{2}}{\t_{2}} \, P(\t;u)   \, \bh_{(k,d)}(\t;u)  \, ,
\ee
and
\be \label{chi2X}
\wh \chi^{Y}_{2}(\t) \=  \int_{E(\t)} \frac{du_{1} du_{2}}{\t_{2}} \, P(\t;u)   \, \bh_{Y}(\t;u)  \, .
\ee

From calculations parallel to the one in the last subsection, the functions~$\wh \chi^{Y}_{2}(\t)$ are 
the second helicity supertraces evaluated in the holographic dual of DSLST of type Y=$A$, $D$ or $E$.

\subsection{The helicity supertraces, their completions and their shadows}

The modular properties of the functions~$\wh\chi^{Y}_{2}(\t)$ can be deduced by following arguments similar to those 
presented in~\cite{Harvey:2013mda}, as we now briefly discuss.

The function~$P(\t;u)$ is a Jacobi form of weight 2 and index 0, i.e. it is invariant under elliptic transformations. We can check
that the function~$\bh_{k}(\t;u)$ is invariant under the full Jacobi group. The measure~$du_{1}du_{2}/\t_{2}$ is also invariant
under the elliptic transformations. This means that the integral~\eqref{chi2int} over the coset~$E(\t)$ is indeed well-defined,
and further, it transforms like a modular form of weight 2. 

The pole of the $P$-function at $u=0$ necessitates some care in the definition of~\eqref{chi2int}. As in~\cite{Harvey:2013mda},
we define the integral as a limit:
\be \label{chimsing}
\wh \chi_{k} (\t) \=   \lim_{\varepsilon \rightarrow 0}  \int_{E^{\varepsilon}} \bh_{k}(\tau;u) \, P(\tau;u) \, \frac{du_{1} du_{2}
}{\tau_2} \, ,
\ee 
where $E^{\ve}(\t)$ is the torus with a small disk of size~$\ve$ centered at the origin removed from it. 
The $\tbar$-derivative of~$\wh \chi_{k} (\t) $ can be computed easily using the trick 
in~\cite{Harvey:2013mda}.  We first notice 
that $\p_{\tbar} \, B(\t; a\t+b) = \dfrac{i}{2\pi k} \, \p_{\ubar}^{2} \, B(\t;u)|_{u=a\t+b}$. Using this heat equation,
the $\tbar$-derivative of~$\wh \chi_{k} (\t) $ reduces to a contour integral around the origin:
\bea \label{Ftbar3}
\p_{\tbar} \, \wh \chi_k(\t)  &\=& \frac{1}{4 \pi k} \oint_{\p D^{\ve}}  \p_{\bar u} \, \bh_{k}(\t;u) \, P(\t;u)   \, \frac{du}{\t_{2}} \, \cr
& = & - \frac{1}{4\pi k \t_{2}} \frac{3}{\pi^{2}} (2 \pi i) \, {\rm Res}_{u \rightarrow 0} \biggl( \partial_{\bar u} \bh_{k}(\tau;u) \frac{1}{u^2} \biggr) \cr
& = &  \frac{3i}{\sqrt{k\tau_2}} \sum_{r \in \IZ_{2k}} \, S_{k,r} (\t) \,  \overline{S_{k,r} (\tau)}\, ,
\eea
where the weight $3/2$ modular forms~$S_{k,r} (\t)$ are defined in \eqref{defSkr}. 
In other words, $\wh \chi_{k}$ is a mixed mock modular form of weight~$2$ and 
shadow~$-\, \dfrac{3}{\sqrt{\pi k}} \sum_{r \in \IZ_{2k}}S_{k,r} (\t) \,  \overline{S_{k,r} (\tau)}$ \footnote{This corrects an error in
the normalization of equations (A.12) and (A.13) of \cite{Harvey:2013mda}.}

It is now also clear from their definitions that $\wh \chi^{(k,d)}$ and  $\wh \chi^{Y}$ are mixed mock modular form of 
weight~$2$ and shadows proportional to 
\be
\sum_{r, r' \in \IZ_{2k}} S_{k,r} (\t) \, \overline{S_{k,r'} (\tau)} \, \O^{(k,d)}_{r,r'}\, , \quad \text{and, respectively} \;
\sum_{r, r' \in \IZ_{2k}} S_{k,r} (\t) \, \overline{S_{k,r'} (\tau)}\, \O^{Y}_{r,r'}\, .
\ee

The integral~\eqref{chimsing} is analyzed in~\cite{MurZag} and explicit expressions for the 
mixed mock modular forms~$\chi^{(k,d)}(\t)$ are found in terms of elementary number-theoretic 
sums as follows. Define, for~$d|k$,
\be
{{\cF}}_2^{(k,d)}(\tau) = \left( d \sum_{\substack{r,s \\ kr>d^2 s>0}} - \frac{k}{d} \sum_{\substack{r,s \\ d^2 r>ks>0}} \right) \, s \, q^{rs} \, ,
\ee
and for $Y$ an ADE root system define $\CF_2^Y$ to be the linear combination of ${\cF}_2^{k,d}$ using the matrices of Table 1 as above. Thus
\bea
{\cF}_2^{A_{k-1}} &=& {\cF}_2^{(k,1)} \, , \\
{\cF}_2^{D_{k/2+1}} &=& {\cF}_2^{(k,1)}+{\cF}_2^{(k,k/2)} \, , \\
{\cF}_2^{E_6} &=& {\cF}_2^{(12,1)}+{\cF}_2^{(12,4)}+{\cF}_2^{(12,6)} \, , \\
{\cF}_2^{E_7} &=& {\cF}_2^{(18,1)}+{\cF}_2^{(18,6)}+{\cF}_2^{(18,9)} \, , \\
{\cF}_2^{E_8} &=& {\cF}_2^{(30,1)}+{\cF}_2^{(30,6)}+{\cF}_2^{(30,10)}+{\cF}_2^{(30,15)} \, .
\eea
The result of~\cite{MurZag} is
\be
\chi^{(k,d)}_{2}(\tau) \= \left( \frac{k}{d} - d  \right) \, E_{2}(\tau) - 24 \, \CF_{2}^{(k,d)}(\tau) \, ,
\ee
which further implies
\be \label{mzres}
\chi^{Y}_{2} (\tau) \= \mathrm{rk}(Y) \, E_{2}(\tau) - 24 \, \CF_{2}^{Y} (\tau) \, .
\ee
Here the second Eisenstein series is given by
\be
E_2(\tau) = 1 -24 \, \sum_{n=1}^\infty \, \sigma_1(n) \, q^n \, , 
\ee
with $\sigma_1(n)$ is the sum of the divisors of $n$.

In the following section we will show, using a Hamiltonian analysis, that the result~\eqref{mzres} 
also arises as the discrete part of the helicity supertrace of the string background described by~(\ref{scft}).
Our Hamiltonian analysis complements the functional integral analysis of the present section, and provides 
us with an independent point of view on the associated physics, as well as a independent derivation 
of the mixed mock modular forms~\eqref{chi2X} that we now briefly summarize. 
We begin with the the identity~\eqref{eq:phi01_decompose}, which can be rewritten in the form 
\be
P(\t;u) = \frac{12} {\mathrm{rk}(Y)} \,\eta(\tau)^3 \,
\Big[ - \mu^{Y}(\tau;z,z,0)  + \frac{1}{3} \sum_{w \in \Pi_2} \mu^{Y}(\tau;w,w,0) \Big] \, , 
\ee
where $\Pi_2  = \{ \frac{1}{2}, \frac{\tau}{2}, \frac{\tau+1}{2} \}$ and $\mu^{Y}(\tau;v,u,w)$ is a multi-variable Appell-Lerch sum defined
in Appendix \ref{sec:Appell_Lerch}. 
Substituting this into \eqref{chi2X}, we obtain
\be \label{chisplit}
\wh \chi_2^Y(\tau)= \int_{E(\tau)} \frac{du_1 du_2}{\tau} \left( \frac{12}{\mathrm{rk}(Y)} \eta(\tau)^3 \left[ \frac{1}{3} \sum_{w \in \Pi_2} \mu^{Y}(\tau;w,w,0)
- \mu^{Y}(\tau;u,u,0) \right] B_Y(\tau;u) \right) \, .
\ee
The first term in square brackets in \eqref{chisplit} is independent of $u$, and the integral of~$B_Y(\tau;u)$ 
can be performed using a slight modification of the technique used in Appendix A of~\cite{Harvey:2013mda} 
to yield
\be \label{intBY}
\int_{E(\tau)} \frac{du_1 du_2}{\tau_2} \, B_Y(\tau;u) = \mathrm{rk}(Y) \, .
\ee
This term, as in the analogous computation for $k=2$ in~\cite{Harvey:2013mda}, 
gives the holomorphic contribution to~$\wh \chi_2^Y(\tau)$, while the second term in square brackets 
in~\eqref{chisplit} provides the non-holomorphic completion. 
From~\eqref{chisplit} and~\eqref{intBY}, we find that the holomorphic contribution
to the second helicity supertrace is given by
\be \label{intres}
\chi_2^Y(\tau) = 4 \, \eta(\tau)^3 \sum_{w \in \Pi_2} \mu^Y(\tau;w,w,0) \, . 
\ee
It is indeed a nontrivial fact that \eqref{intres}  is equal to \eqref{mzres} and gives the weight two mock modular form whose completion is $\wh \chi_{2}^{Y} (\t) $. This
follows from the following argument, that we have also used earlier. 
By the corollary \eqref{cor:mu_mod}, the expression \eqref{intres} for $\chi_2^Y(\tau)$ has a weight two completion
\begin{equation}
\wh \chi_{2}^{Y} (\t) \= \chi_2^Y(\tau) - \frac{3}{\sqrt{\pi k}} \sum_{j,j' \in \ZZ_{2k}} \O^Y_{j,j'} S_{k,j}(\t)
S^*_{k,j'}(\t).
\end{equation}
Comparing this with Equation~\eqref{Ftbar3}, we deduce that both the expressions \eqref{mzres} 
and \eqref{intres} are mixed mock modular forms with the same shadow. 
Therefore their difference has vanishing shadow and must be a holomorphic modular form of weight two. 
But there are no such modular forms, so their difference must vanish.

\section{Evaluation of the BPS index via characters \label{charsec}}
 In this section, we will isolate the contributions to the helicity index, $\chi^{Y}_{2}(\t)$, that arise  from the discrete characters of the $\sl$ affine algebra. In this way we will explicitly verify that modes localized at the tip of the cigar account for the holomorphic part of the helicity index. In other words, we will compute
 \begin{equation}
 \chi^{Y}_{2,dis} (\t) 
 				= \Tr \, (-1)^{F_s} \, q^{L_0 - c/24} \, \qb^{\tilde{L}_0 - \tilde{c}/24} \, 
 							(J_3 - \wt{J}_3)^2,
 \end{equation}
where the trace omits the sum over momentum and winding modes associated with $S^1$ as before but now also omits contributions associated with continuous characters of $\sl$. 

The charge $J^3$ and its right moving counterpart $\wt{J}_3$ are global charges coming from the level $k$ super $\sl$ algebra.
In the path integral formulation of Section~\ref{wsheetsec}, the associated currents are gauged via an integral over the variable $u$. This equates these charges with the left and right momenta of the $Y^u$ boson ($J^3 \to p_L$ and $- \wt{J}^3 \to p_R$).
The level $k$ super $\slc$ algebra is built out of a bosonic $\slc$ algebra at level $k+2$ and two free fermions giving a further level $-2$ contribution. Therefore, $J^3 =\, :\psi^+ \psi^- : \, + \, j^3$, where $\psi^\pm$ are the free fermions and $j^3$ is the global charge associated with the bosonic $\slc$ coset descending from a bosonic parent $\sl$ algebra at level $k+2$. Further details on both the bosonic and the supersymmetric $\slc$ cosets and their characters can be found in \cite{Dixon:1989cg, Sfetsos:1991wn, Israel:2004xj}. See also \cite{Fotopoulos:2004ut} for a nice summary.

The momentum around the cigar corresponds to $J^3 - \tilde{J}^3$ which is the unbroken $U(1)$ of the full $\CN = (4,4)$ $\SCFT$, $\lb SL(2,\IR)_{k}/U(1) \times SU(2)_{k}/U(1) \rb / \Z_k$ \cite{Kounnas:1993ix,  Antoniadis:1994sr}. $J^3 + \tilde{J}^3$, on the other hand, is the winding number which is only conserved modulo $\Z_k$. 

If we put the $\lb SL(2,\IR)_{k}/U(1) \times SU(2)_{k}/U(1) \rb / \Z_k$ and $T_4/\Z_2$ factors together, we get\footnote{See 
Equation 4.29 in \cite{Israel:2004ir}.}:
\begin{align}
\chi^{Y}_{2,dis}(\t) = \half &\sum_{r,s \in \Z_2} \ \half \sum_{a,b \in \Z_2} (-1)^{a+b+a b} \ \half \sum_{\aa,\bb \in \Z_2} (-1)^{\aa+\bb}  
\ Z_{T_4/\Z_2}\bmat{ r & a & \aa \\ s & b & \bb }(\t,\tbar)
 \notag \\
&\times \sum_{2l, 2l' = 0}^{k-2} \,  \sum_{2j'=1}^{k+1} a(2j') \sum_{2p, 2\bar{p} \in \ZZ}
 				\wh{\O}^Y_{2l+1,2l'+1}   \ 
 				C^{2l,k}_{-2p-a} \bmat{ a \\ b}(\t;0)   \ 
 				\bar{C}^{2l',k}_{-2\bar{p}-\aa} \bmat{ \aa \\ \bb}(\tbar;0)
 				\notag \\
\label{eq:helicity_discrete}
&\times \lb  \lp p+ \frac{a}{2} \rp - \lp \bar{p} + \frac{\aa}{2} \rp \rb^2 \,
					 \ch_d^k \bmat{ a \\ b}(j',p;\t,0) \ 
						\bar{\ch}_d^k \bmat{ \aa \\ \bb}(j',\bar{p};\tbar,0),
\end{align}
where the sum over $r,s$ goes over $\Z_2$ twists of the $T_4/\Z_2$ orbifold and the sums over $a, b, \aa, \bb$ perform the GSO projections (for Type IIA string).
The bosonic and fermionic oscillator contributions from $\R_t \times S_1$ factor are canceled by the $(b,c,\beta,\gamma)$ ghosts as before. $\wh{\Omega}^Y_{2l+1,2l'+1}$ is the ADE invariant matrix\footnote{See appendix \ref{sec:su2char} for details.} corresponding to the simply laced root system $Y$ having  Coxeter number $k$. In terms of $\O^Y_{r,r'}$ defined in the table \ref{tab:ADE} we have 
\begin{equation}
\wh{\Omega}^Y_{r,r'} = \Omega^Y_{r,r'} - \Omega^Y_{r,-r'}.
\end{equation}

\begin{itemize}
\item \textbf{$T_4 / \Z_2$ contribution:}
\begin{equation}
Z_{T_4/\Z_2}\bmat{ r & a & \aa \\ s & b & \bb }(\t,\tbar) =
Z_{T_4/\Z_2}^{bos}\bmat{ r \\ s }(\t,\tbar)\
Z_{T_4/\Z_2}^{frm}\bmat{ r & a \\ s & b }(\t) \
\bar{Z}_{T_4/\Z_2}^{frm}\bmat{ r & \aa \\ s & \bb}(\tbar),
\end{equation}
where
\begin{equation}
Z_{T_4/\Z_2}^{frm} \bmat{ r & a \\ s & b }(\t) = \frac{\th\bmat{ a+r \\ -b-s }(\t;0)\  \th\bmat{ a-r \\ -b+s }(\t;0) }{\eta(\t)^2},
\end{equation}
\begin{equation}
Z_{T_4/\Z_2}^{bos}\bmat{ 0 \\ 0 }(\t,\tbar) = \frac{\Theta^{4,4}(\t,\tbar)}{\eta(\t)^4 \  \bar{\eta}(\tbar)^4},
\end{equation}
and for $(r,s) \neq (0,0)$
\begin{equation}
Z_{T_4/\Z_2}^{bos}\bmat{ r \\ s }(\t,\tbar) = \frac{16 \ \eta(\t)^2 \  \bar{\eta}(\tbar)^2}
						{\th\bmat{ 1+r \\ 1-s}(\t;0)\   \th\bmat{1-r \\ 1+s}(\t;0) \     
				\bar{\th}\bmat{1+r \\ 1-s}(\tbar;0) \  \bar{\th}\bmat{1-r \\ 1+s}(\tbar;0)   }.
\end{equation}

\item \textbf{$SU(2)_{k}/U(1)$ contribution:}

The expressions $C^{l,k}_r \bmat{ a \\ b }$ are the supersymmetric $SU(2)/U(1)$ characters. Further details are given in Appendix \ref{sec:su2char}.

\item  \textbf{$SL(2,\IR)_{k}/U(1) $ contribution:}

Discrete $SL(2,\IR)_{k}/U(1) $ characters that appear in $\chi^{Y}_{2,dis}(\t)$ are
\begin{equation}\label{eq:sec3_chi_d_sl2r}
 \ch_d^k \bmat{ a \\ b}(j',p;\t,x) = 
 \frac{q^{\frac{-(j'-1/2)^2 + (p+a/2)^2}{k}} \, y_x^{2(p+a/2)/k}}
 {1+(-1)^b \, y_x \, q^{p+a/2-j'+1/2}} \, 
 \frac{\th_{ab}(\t;x)}{\eta(\t)^3}.
\end{equation}
The $x$ variable counts the $R$ charge of the $\CN=2$ superconformal algebra formed out of the $SL(2,\IR)_{k}/U(1) $ coset. This expression assumes $p - j' \in \ZZ$ and otherwise we take it to be zero.

These are states that descend from the discrete series of representations for $\sl$ algebra, 
$\wh{\mathcal{D}}_j^{\pm}$, by gauging  $J^3$. For these two representations $j' \in \RR^+$ and states at the ground level have $j^3 \in \pm ( j' + \ZZ_0^+ )$.  Note that for $p \geq j'$ we can find a state with
\begin{equation}
\Delta  = \frac{-j' (j'-1) + (p+a/2)^2}{k} +\frac{a^2}{8}  \quad \mbox{and} \quad
R = \frac{2(p+a/2)}{k} + \frac{a}{2},
\end{equation}
at the lowest $q$-power level of $\ch_d^k \bmat{ a \\ b}(j',p;\t,z)$. These are precisely the conformal weight and R-charge for the $\slc$ primary $V^{sl,\eta = a/2}_{j=j'-1,m =p}$. This notation for the vertex operator follows \cite{Chang:2014jta}.

One last factor that appears in the equation (\ref{eq:helicity_discrete}) is\footnote{
This factor reflects an ambiguity in the character decomposition as 
\begin{equation}
\ch_d^k \bmat{ a \\ b}\lp 1/2,n+1/2;\t,x \rp  
+ \ch_d^k \bmat{ a \\ b}\lp (k+1)/2,n+(k+1)/2;\t,x \rp 
\end{equation}
can be rewritten in terms of a continuous character (see \cite{Eguchi:2004yi}). In any case, we will see that 
the $j' = 1/2$ and $(k+1)/2$ terms do not contribute to $\chi^{Y}_{2,dis}(\t)$.
} 
\begin{equation}
a(2j') = \begin{cases}
         1/2 &\mbox{if } 2j'=1 \mbox{ or } j'=k+1 \, , \\
         1 &\mbox{if } 2j'=2,\ldots,k.
        \end{cases}
\end{equation}

\end{itemize}

To account for the $\lb  \lp p+ \frac{a}{2} \rp - \lp \bar{p} + \frac{\aa}{2} \rp \rb^2 $ of the equation (\ref{eq:helicity_discrete}) we will compute a sum in which we replace 
$\lb  \lp p+ \frac{a}{2} \rp - \lp \bar{p} + \frac{\aa}{2} \rp \rb^2 $ factor with 
\begin{equation}\label{eq:sec3_exp_replacement}
e^{2 \pi \i z (p-j' + (1+a)/2)} \  e^{- 2 \pi \i \zbar (\bar{p}-j' + (1+\aa)/2)}.
\end{equation}
We will finally get back to $\chi^Y_{2,dis}$ by acting
\begin{equation}\label{eq:sec3_z_diff_operator}
\lb \frac{1}{2 \pi \I} \lp \partial_z + \partial_{\bar{z}} \rp \rb^2 \Bigg|_{z=\bar{z}=0} \,
\end{equation}
on the expression we get.\footnote{
The sign difference compared to \eqref{eq:sec2_del_operator} is because we are keeping track of $J_3$ and $\wt{J}_3$ here instead of $p_L$ and $p_R$ as we did there.
} Note that $j' - \half$ factors cancel once we get to $\chi^Y_{2,dis}$; however, we have included them in the equation (\ref{eq:sec3_exp_replacement}) to get nice modular properties in the $z$ dependent function we will compute.

Let us start our computation by focusing on
\begin{equation}\label{eq:sec3_Bfnc_1}
\sum_{2p \in \ZZ} C^{2l,k}_{-2p-a} \bmat{ a \\ b}(\t;t)  \ 
				 \ch_d^k \bmat{ a \\ b}(j',p;\t,x) \ 
				 y_z^{p-j' + (1+a)/2}  \, .
\end{equation}
We first note that 
$C^{2l,k}_{-2p-a} \bmat{ a \\ b}(\t;t) = (-1)^{a b}\,  C^{2l,k}_{2p+a} \bmat{ a \\ b}(\t;-t)$. Since $ \ch_d^k \bmat{ a \\ b}(j',p;\t,x)$ is zero unless $p - j' \in \ZZ$ we define $n = p - j' + a$ and the expression in (\ref{eq:sec3_Bfnc_1}) becomes
\begin{equation}
(-1)^{a b} \sum_{n \in \ZZ} C^{2l,k}_{2n+2j'-a} \bmat{ a \\ b}(\t;-t)  \ 
				 \ch_d^k \bmat{ a \\ b}(j',j'+n-a;\t,x) \ 
				 y_z^{n + (1-a)/2}  \, ,
\end{equation}
or, using the equation (\ref{eq:sec3_chi_d_sl2r} )for $ \ch_d^k \bmat{ a \\ b}(j',p;\t,x)$, 
\begin{equation}
 (-1)^{a b} \sum_{n \in \ZZ} C^{2l,k}_{2n+2j'-a} \bmat{ a \\ b}(\t;-t)  \ 
				  \frac{q^{\frac{-\lp j-\frac{1}{2} \rp^2 +\lp j'+n-\frac{a}{2} \rp^2}{k}} 
				  						\, y_x^{2 \lp j'+n-\frac{a}{2} \rp/k}}
 {1+(-1)^b \, y_x \, q^{n+\frac{1-a}{2}}} \, 
 \frac{\th_{ab}(\t;x)}{\eta(\t)^3} \ 
				 y_z^{n + \frac{1-a}{2}}  \,.
\end{equation}
If we introduce variables $v,u,w$ by
\begin{equation}
x=v, \quad t = v-w, \quad z = u+\lp 1 -\frac{2}{k} \rp w - v \, , 
\end{equation}
we obtain
\begin{align}
 (-1)^{a b} \, \frac{\th_{ab}(\t;v)}{\eta(\t)^3} 
 \sum_{n \in \ZZ}  y_v^{(2j'-1)/k} \ 
				  &\frac{q^{\lp n+\frac{1-a}{2} \rp^2/k} \, q^{(2j'-1) \lp n+\frac{1-a}{2} \rp/k}  }
 {1+(-1)^b \, y_v \, q^{n+\frac{1-a}{2}}} \, \notag \\ 
				 &\times \, \lp \frac{y_u y_w^{1-2/k}}{y_v^{1-2/k}} \rp^{n+\frac{1-a}{2}}  \,
				 C^{2l,k}_{2n+2j'-a} \bmat{ a \\ b}(\t;w-v) .
\end{align}
This is exactly equal to
\begin{align}
(-1)^{a b} &\ B^{k,2l+1,2j'-1}_{a b}(\tau; v,u,w) & \notag \\
&=  (-1)^{a b} \ B^{k,2l+1,2j'-1}_{a b} \lp \tau; x,z+\frac{2}{k}\, x + \lp 1 - \frac{2}{k} \rp t,
	x-t   \rp.
\end{align}
The expression $B^{k,j,j'}_{a b}$ is defined in Appendix \ref{sec:Riemann_reln} as 
\begin{equation}
B^{k,j,j'}_{a b}(\tau; v,u,w) =
\frac{\th_{ab}(\tau; v)\, \th_{ab}(\tau;u)}{\eta(\t)^3} \   
 \mu^{k,j,j'} \lp \tau; v + \t_{ab}, u +  \t_{ab}, w \rp,
\end{equation}
where $\tau_{ab} \equiv (a-1) \t/2 + (b-1)/2$ and
\begin{equation}
\mu^{k,j,j'}(\tau;v,u,w) = \frac{y_v^{j'/k}}{\theta_{11}(\tau;u)}
        \sum_{n \in \ZZ}
        \frac{q^{n^2/k} \, q^{j'n/k}}{1-y_v \, q^{n}} \lp \frac{y_u\, y_w^{1-2/k}}{y_v^{1-2/k}} \rp^n
        C^{j-1,k}_{2n + j'}\bmat{1 \\ 1}(\tau;w-v).
\end{equation}
Some algebraic and modular properties of $\mu^{k,j,j'}(\tau;v,u,w)$ are worked out in detail in Appendix \ref{sec:Appell_Lerch}. Moreover, as shown in Appendix \ref{sec:Riemann_reln}, 
$B^{k,j,j'}_{a b}(\tau; v,u,w)$ satisfies analogues of the Riemann relations for theta functions.
This will be an important ingredient in simplifying $\chi^{Y}_{2,dis}(\t)$. Moreover, Appendix \ref{sec:Riemann_reln} 
tells us that $B^{k,j,j'}_{a b}(\tau; v,u,w)$ is an entire function of $v$, $u$ and $w$ and $B^{k,j,j'}_{11}(\tau; 0,0,0) =  \delta_{j,j'}$.

To summarize this discussion, our goal now is to compute
\begin{align}
\half &\sum_{r,s \in \Z_2} \    
\frac{Z_{T_4/\Z_2}^{bos}\bmat{ r \\ s }(\t,\tbar)}{\eta(\tau)^2 \bar{\eta}(\tbar)^2}
\sum_{l, l' = 1}^{k-1} \,  \sum_{j'=0}^{k} a(j'+1)  \, \wh{\O}^Y_{l,l'}   \ 
 \notag \\
&\times\half \sum_{a,b \in \Z_2} (-1)^{a+b}  \ 
 				\eta(\tau)^2 \, Z_{T_4/\Z_2}^{frm}\bmat{ r & a \\ s & b }(\t) \ 
 				B^{k,l,j'}_{a b}(\tau; 0,z,0)
 				\notag \\
\label{eq:sec3_helicity_exp}
&\times \half \sum_{\aa,\bb \in \Z_2} (-1)^{\aa+\bb+\aa \bb} \ 
 \bar{\eta}(\tbar)^2 \, 
						\bar{Z}_{T_4/\Z_2}^{frm}\bmat{ r & \aa \\ s & \bb}(\tbar) \ 
						\bar{B}^{k,l',j'}_{\aa \bb}(\tbar; 0,\zbar,0)
\end{align}
using the machinery developed in the appendices. Note that we have redefined 
$2l+1, 2l'+1, 2j' -1 \to l,l',j'$ compared to the equation (\ref{eq:helicity_discrete}).

\noindent \textbf{GSO Projections:}
We will start simplifying the equation (\ref{eq:sec3_helicity_exp}) by performing the GSO projections first, i.e. by computing its second and third lines:
\begin{equation}
\half \sum_{a,b \in \Z_2} (-1)^{a+b}  \ 
 				\th\bmat{ a+r \\ -b-s }(\t;0)\  \th\bmat{ a-r \\ -b+s }(\t;0) \ 
 				B^{k,l,j'}_{a b}(\tau; 0,z,0) \, , 
\end{equation}
and
\begin{equation}
\half \sum_{\aa,\bb \in \Z_2} (-1)^{\aa+\bb+\aa \bb} \ 
 \bar{\th}\bmat{ \aa+r \\ -\bb-s }(\tbar;0)\  \bar{\th}\bmat{ \aa-r \\ -\bb+s }(\tbar;0) \ 
						\bar{B}^{k,l',j'}_{\aa \bb}(\tbar; 0,\zbar,0).
\end{equation}

\noindent \textbf{Untwisted Sector (r,s)=(0,0):}

First we study the holomorphic side, keeping in mind that $\th_{11}(\t;0)= 0$:
\begin{align}
 \half & \Big( \th_{00}(\t;0) \, \th_{00}(\t;0) \, B^{k,l,j'}_{0 0}(\tau; 0,z,0)
 - \th_{01}(\t;0) \, \th_{01}(\t;0) \, B^{k,l,j'}_{0 1}(\tau; 0,z,0)  \notag\\
 &- \th_{10}(\t;0) \, \th_{10}(\t;0) \, B^{k,l,j'}_{1 0}(\tau; 0,z,0)
 \Big).
\end{align}
According to the identity $(\wt{R5})$ of Section \ref{sec:Riemann_reln} this is equal to
\begin{equation}
 \th_{11}(\t;z/2) \, \th_{11}(\t;-z/2) \, B^{k,l,j'}_{1 1}(\tau; -z/2,z/2,0).
\end{equation}
Similarly, for the anti-holomorphic side we get
\begin{equation}
\bar{\th}_{11}(\tbar;\zbar/2) \, \bar{\th}_{11}(\tbar;-\zbar/2) \, 
\bar{B}^{k,l',j'}_{1 1}(\tbar; -\zbar/2,\zbar/2,0) .
\end{equation}

Due to the $\th_{11}$ terms, $(r,s) = (0,0)$ contribution to the equation (\ref{eq:sec3_helicity_exp}) behaves as $z^2 \, \zbar^2$ as $z,\zbar \to 0$. Therefore the untwisted sector does not give any contribution to $\chi^Y_{2,dis}$ which we get after the action of the operator (\ref{eq:sec3_z_diff_operator}).

\noindent\textbf{Twisted Sectors:}

We start with the $(r,s)=(1,0)$ sector. From the holomorphic side we get 
\begin{align}
 \half & \Big( \th_{10}(\t;0) \, \th_{10}(\t;0) \, B^{k,l,j'}_{0 0}(\tau; 0,z,0)
 - \th_{00}(\t;0) \, \th_{00}(\t;0) \, B^{k,l,j'}_{1 0}(\tau; 0,z,0)  \notag\\
 &- \th_{01}(\t;0) \, \th_{01}(\t;0) \, B^{k,l,j'}_{1 1}(\tau; 0,z,0)
 \Big)\notag \\
 &\quad  = 
 - \, \th_{11}(\t;z/2) \, \th_{11}(\t;-z/2) \, B^{k,l,j'}_{0 1}(\tau; -z/2,z/2,0)
\end{align}
using the identity $(\wt{R13})$.
We can similarly employ the identity $(\wt{R11})$ and find the contribution of the anti-holomorphic side as:
\begin{equation}
- \, \bar{\th}_{01}(\tbar;\zbar/2) \, \bar{\th}_{01}(\tbar;-\zbar/2) \, 
\bar{B}^{k,l',j'}_{1 1}(\tbar; -\zbar/2,\zbar/2,0) .
\end{equation}
The crucial point is that the only nontrivial contributions after the action of 
\begin{equation}
\lb \frac{1}{2 \pi \I} \lp \partial_z + \partial_{\bar{z}} \rp \rb^2 \Bigg|_{z=\bar{z}=0} 
\end{equation}
come from 
$\partial_z^2$ acting on the $\th_{11}(\t; z/2) \, \th_{11}(\t; -z/2)$ term in the holomorphic contribution. Taking derivatives and setting $z=0$, these two theta functions give $\half \eta(\t)^6$. Setting
$z$ and $\zbar$ to zero for the rest, we get the $(r,s) = (1,0)$ contribution to 
$\chi^Y_{2,dis}$ as
\begin{align}
\half \, 
&\frac{Z_{T_4/\Z_2}^{bos}\bmat{ 1 \\ 0 }(\t,\tbar)}{\eta(\tau)^2 \bar{\eta}(\tbar)^2}
\sum_{l, l' = 1}^{k-1} \,  \sum_{j'=0}^{k} a(j'+1)  \, \wh{\O}^Y_{l,l'}  \,
\lp - \, \frac{\eta(\t)^6}{2} B^{k,l,j'}_{0 1}(\tau; 0,0,0) \rp 
\notag \\
&\quad \times \lp - \, \bar{\th}_{01}(\tbar; 0) \, \bar{\th}_{01}(\tbar; 0) \, 
\bar{B}^{k,l',j'}_{1 1}(\tbar; 0,0,0)   \rp.
\end{align}
Substituting $Z_{T_4/\Z_2}^{bos}\bmat{ 1 \\ 0 }$ and noting that 
$\bar{B}^{k,l',j'}_{1 1}(\tbar; 0,0,0)$ is $\delta_{l', j'}$ in the relevant range we can rewrite this as
\begin{equation}
4 \,  \eta(\t)^3 \sum_{l,l'=1}^{k-1} \wh{\O}^Y_{l,l'} \ \mu^{k,l,l'}(\t;\t/2,\t/2,0).
\end{equation}

We can repeat the same arguments for the other two twisted sectors as well. Using Riemann-like relations $(\wt{R8})$ and $(\wt{R9})$ we find the contribution of $(r,s) = (0,1)$ sector to be
\begin{equation}
4 \,  \eta(\t)^3 \sum_{l,l'=1}^{k-1} \wh{\O}^Y_{l,l'} \ \mu^{k,l,l'}(\t;1/2,1/2,0).
\end{equation}
$(\wt{R15})$ and $(\wt{R16})$, on the other hand, gives the $(r,s) = (1,1)$ sector contribution to be
\begin{equation}
4 \,  \eta(\t)^3 \sum_{l,l'=1}^{k-1} \wh{\O}^Y_{l,l'} \ 
\mu^{k,l,l'}\lp \t;\frac{\t+1}{2},\frac{\t+1}{2},0 \rp.
\end{equation}

In summary, the result of this section is that 
\begin{equation}
\chi_{2,dis}^Y(\tau) = 4 \eta(\tau)^3 \sum_{w \in \Pi_2} 
\sum_{l,l'=1}^{k-1} \wh{\O}^Y_{l,l'}  \mu^{k,l,l'}(\tau;w,w,0),
\end{equation}
where $\Pi_2  = \{ \frac{1}{2}, \frac{\tau}{2}, \frac{\tau+1}{2} \}$. We notice
 that this is exactly
$\chi_{2}^Y(\tau)$,  the holomorphic part of the helicity supertrace.

\section{Connections to Umbral Moonshine}

The upshot of the preceding two sections is that for each choice of $k$ and an $ADE$ root system $Y$ with Coxeter number $k$, the
second helicity trace $\widehat \chi_2^Y$ is the completion of a mixed mock modular form of weight $2$ with shadow proportional to
\be
\sum_{r, r' \in \IZ_{2k}}S_{k,r} (\t) \,  \overline{S_{k,r'} (\tau)}\, \O_{Y}^{rr'}\, .
\ee
The holomorphic part of $\widehat \chi_2^Y$ is a mixed mock modular form of weight two. In Section \ref{wsheetsec} we found that the holomorphic
part of $\wh \chi_2^Y $ is given by
\be \label{chiy2}
\chi^{Y}_{2} (\tau) \= -\text{rk}(Y) \, E_{2}(\tau) + 24 \, \CF_{2}^{Y}(\tau) \, 
\ee
or by
\be
\chi_{2}^Y(\tau) = 4 \eta(\tau)^3 \sum_{w \in \Pi_2} \mu^Y(\tau;w,w,0) \, ,
\ee
where the latter expression is indeed equal to the contribution to $\wh{\chi}_2^Y$ from the discrete characters of the SCFT as we worked out in Section \ref{charsec}.
Although it is not obvious, these two expressions are in fact identical. 
We proved their equality in Section \ref{wsheetsec} by asserting that they are 
weight two mixed mock modular forms with equal shadows. That means
their difference is a true modular form of weight two for $SL(2,\ZZ)$, but there are no such forms so their difference must vanish.

We now want to compare our result for $\chi^Y_2$ to the weight two mixed mock modular forms that appear in Umbral Moonshine.
Recall that in \cite{Cheng:2013wca} an instance of Umbral Moonshine was associated to the root system $X$ of each Niemeier lattice with
a non-vanishing root system. The root systems $X$ are uniquely characterized by having components consisting of ADE root systems with equal
Coxeter numbers $m(X)$ and with total rank $24$. For each $X$ there exists a vector-valued weight $1/2$ mock modular form $H^X_r$, 
$ r=1, \cdots m(X)$
which exhibits moonshine for a finite group $G^X= {\rm Aut} (L^X)/W_X$ where $L^X$ is the Niemeier lattice associated to $X$, ${\rm Aut} (L^X)$ its 
automorphism group and $W_X$ the Weyl group of $X$.  For each $X$ there exists a weight one, index $m(X)$ mock Jacobi form given by
\be
\psi_{1,m}^X(\tau,z) = \sum_r H^X_r (\t) \, \widehat \vartheta_{m,r} (\t,z) \, .
\ee
The first coefficient in the Taylor series expansion of $\psi_{1,m}^X$ about $z=0$ is given by
\be
\chi_2^X = \sum_r H^X_r S_{m,r} \, .
\ee

While the Niemeier root lattices $X$ have an ADE classification, it is distinct from the ADE classification of the the SCFT in equation (\ref{scft})
since a mixture of $ADE$ components is allowed for $X$ while the classification of the SCFT only allows a single, distinct, A,D or E component
\footnote{Attempts to combine the  partition functions of SCFT's with different $ADE$ components leads either to partition functions that  do not have integer multiplicities for states or partition functions that correspond to SCFT's that do not have a unique vacuum state.}. In comparing our results to those of Umbral Moonshine
we will therefore consider only $X$ which are powers of single $A$, $D$, or $E$ components, namely 
\be \label{purexes}
X=A_1^{24}, A_2^{12}, A_3^8, A_4^6, A_6^4, A_8^3, A_{12}^2, A_{24}, D_4^6, D_6^4, D_8^3, D_{12}^2, D_{24}, E_6^4, E_8^3 \, .
\ee
To each of these $X's$ we associate a single $ADE$ root system $Y$ via $X=Y_{{\rm rk}(Y)}^{24/{\rm rk}(Y)}$ so that
\be
Y=A_1, A_2, A_3, A_4, A_6, A_8, A_{12}, A_{24}, D_4, D_6, D_8, D_{12}, D_{24}, E_6, E_8
\ee
respectively. For these cases the results of \cite{Cheng:2013wca} show that $\chi_2^X$ is a weight two mixed mock modular form with
shadow proportional to
\be
 \sum_{r, r' \in \IZ_{2k}} S_{k,r} (\t) \, \overline{S_{k,r'} (\tau)}\, \O^{Y}_{rr'}\, .
\ee

Since the shadows of $\chi_2^X$ and $\chi_2^Y$ are equal up to a constant, $\chi_2^X$ and $\chi_2^Y$ must also be equal up to
a constant. We determine the constant by the following argument and then verify this by direct comparison of $q$ expansions. In
DSLST the massless fields consist of $\mathrm{rk}(Y)$ fields in a multiplet which is equivalent to a vector multiplet of $N=2$ supersymmetry
in four dimensions. Each vector multiplet contributes $+1$ to the second helicity supertrace, so the leading term in $\chi_2^Y(\tau)$ should
be $\mathrm{rk}(Y) q^0$. On the other hand, the mock modular forms of Umbral Moonshine are characterized by the fact that the only
polar term occurs in $H^X_r$ for $r=1$ with $H^X_1= -2 q^{-1/4m(X)} + O(q^{1-1/4m(X)})$. As a result the leading term in $\sum_r S_{m,r} H^X_r$
is $-2$ and we therefore deduce that 
\be \label{xyrel}
\chi_2^Y(\tau) =  - \, \frac{\mathrm{rk}(Y)}{2} \, \chi_2^X
\ee
which can also be verified by comparing the $q$ expansions of both sides using the explicit
$q$-series for $H^X_r$ given in \cite{Cheng:2013wca}. For $X=A_1^{24}$ with $Y=A_1$ and $\mathrm{rk}(Y)=1$ the relation
(\ref{xyrel}) reduces to that found in \cite{Harvey:2013mda}.

The holomorphic part of the second helicity supertrace $\chi_2^Y$ can of course be computed for any 
single $ADE$ root lattice $Y$ and will be a mixed mock modular form of weight two, so it is natural to ask 
whether there is anything special about those $Y$ which appear in Umbral Moonshine via Niemeier 
lattices $X$. For the Niemeier lattices with pure $ADE$ root systems built out of a component~$Y$, 
the rank 24 condition for the rank of the Niemeier lattices implies that $\mathrm{rk}(Y)$ divides $24$, and 
therefore we can write $X=Y^{24/\mathrm{rk}(Y)}$. We will call such cases the Umbral $Y$. 
For these cases we know that $\chi_2^X$ has a $q$ expansion with coefficients that are even integers. 
From the formula~\eqref{chiy2}, we deduce that all the coefficients in the $q$ expansion of $\chi_2^Y$ 
are divisible by $\mathrm{rk}(Y)$. 

As remarked above, this divisibility can be understood for the coefficient of $q^0$ since 
there are $\mathrm{rk}(Y)$ massless states contributing to~$\chi_2^Y$, but it is not clear
why this should be true for the coefficients of higher powers of $q$ that count massive BPS states. 
In fact this divisibility by $\mathrm{rk}(Y)$ does not occur for non-Umbral choices of~$Y$. 
For example, while for~$Y=A_{3}$, the $q$ expansion of $\chi_2$ is given by:
 \be
\chi_2^{Y=A_3}= 3 - 96 \, q - 288 \, q^2 - 384 \, q^3 - 576 \, q^4 - 360 \, q^5 + \cdots \, 
\ee
where all coefficients are divisible by~3, the $q$ expansion of $\chi_2^{Y=A_7}$ which is a 
non-Umbral example is given by
\be
\chi_2^{Y=A_7}= 7 - 192 \, q - 576 \, q^2 - 768 \, q^3 - 1344 \, q^4 - 1152 \, q^5 + \cdots \, 
\ee
where the coefficients $7, 192$ are relatively prime so there is no common factor that can be factored out.
For the convenience of the reader we give low order terms in the $q$ expansions of the $\chi_2^X$ for the choices of $X$ appearing in the list (\ref{purexes})
\begin{table}[h]
\centering
\begin{tabular}{ c c c c c c c c c c c }
\toprule
$X \backslash~ n $& $0 $& $1$ &$ 2 $& $3$ & $4$ & $5$ & $6 $& $7$ & $8$ & $9$ \\
\hline
$A_1^{24}$ & $-2$ & $96$  & $192$  & $144$  & $384$  & $240$  & $768$  & $336$  & $768$  & $720$  \\
$A_2^{12}$ &$ -2$ &$ 72$ &$ 216$ & $216$ & $336$ & $240$ & $648$ &$ 336$ & $816$ & $648$ \\
$A_3^8 $ & $-2$ & $64$ &   $192$ &  $256$ & $ 384$ & $240$ & $608$ & $336$ & $768$ & $624$ \\
$A_4^6$ &   $-2 $ & $60$ &  $180$ & $240$ & $420$ & $300$ & $588$ & $336$ & $744$ & $612$   \\
$A_6^4$ &  $ -2$ & $56$  & $168$  & $224$  & $392$  & $336$  & $672$  & $392$  & $720$  & $600$    \\
$A_8^3$ &   $ -2$ & $54$  & $162$  & $216$  & $378$  & $324$  & $648$  & $432$  & $810$  & $648$      \\
$A_{12}^2$ &  $ -2$ & $52$  & $156$  & $208$  & $364$  & $312$  & $624$  & $416$  & $780$  & $676$   \\
$A_{24} $& $ -2$ & $50$  & $150$  & $200$  & $350$  & $300$  & $600$  & $400$  & $750$  & $650$     \\
$D_4^6$ &   $-2$ & $36$  & $192$  & $252$  & $384$  & $372$  & $612$  & $336$  & $768$  & $540$    \\
$D_6^4$ &    $-2$ & $40$  & $120$  & $248$  & $384$  & $360$  & $616$  & $472$  & $768$  & $704$    \\
$D_8^3$ &    $ -2$ & 42  & 126  & 168  & 384  & 354  & 618  & 462  & 768  & 696  \\
$D_{12}^2$ & $ -2$ & 44  & 132  & 176  & 308  & 264  & 620  & 452  & 768  & 688   \\
$D_{24}$ & $-2$  &$46$  &$138$  &$184$  &$322$  &$276$  & $552$ &$368$  &$690$  & $598$             \\
$E_6^4$ & $-2$ & 16  & 128  & 216  & 352  & 376  & 648  & 488  & 800  & 648     \\
$E_8^3$ &  $-2$ & $ - 6$  & 78  & 102  & 246  & 300  & 474  & 384  & 768  & 648   \\
\bottomrule
\end{tabular}
\caption{Coefficient of $q^n$ in the $q$ expansion of $\chi_2^X$ for pure $ADE$ Umbral $X$}
\label{tab:xqexp}
\end{table}

\section{Conclusions and Open Problems}

In this paper we performed a path integral evaluation of the second heliticty supertrace of the SCFT describing the holographic dual of
ADE little string theory on $K3$ and showed that the answer has the form of the completion of a mixed mock modular form of weight $2$.
We identified the holomorphic part of the supertrace as the contribution from localized states on the non-compact, ``cigar" component of the SCFT
by using a Hamiltonian analysis to sum only over these normalizable contributions and showed that this matched the holomorphic part of the
previous computation.  

Our results have a very suggestive connection to the mock modular forms appearing in Umbral Moonshine in that the $ADE$ label
$Y$ which are Umbral in the sense that $X=Y^{24/\mathrm{rk}(Y)}$ is the root lattice of a Niemeier lattice lead to second helicity supertraces
$\chi_2^Y$ with special divisibility properties.  In trying to refine this connection the most pressing problem is to identify the symmetries of the SCFT (\ref{scft}) that preserve space-time supersymmetry and in particular to understand to what degree these extend the known such symmetries that act on the K3 component and have a relation to the Umbral symmetry groups $G^X$. 

It is clear that we have not identified a physical criterion that would single out precisely the Umbral choices of $Y$. The computation we have
done goes through for any choice of $Y$ and we see no sign of any inconsistency or instability of the theory for non-Umbral choices.
This problem also afflicts other related attempts to find a relation between Umbral Moonshine and explicit non-compact Conformal Field
Theories. For example, the decomposition of the elliptic genus given in the Appendix of \cite{Cheng:2014zpa} in terms of Umbral Jacobi forms
$\psi^X(\tau,z)$ can be extended to an infinite class of non-Umbral Jacobi forms.

To some degree our results are inevitable mathematical consequences of modular invariance in the context
of non-compact SCFT. It is nonetheless interesting that the precise mock modular forms appearing have such a close relation
to mock modular forms that have appeared in to context of Umbral Moonshine. Our results makes these mock modular forms manifest
in a physical system.  The string theory computation also leads to other findings 
such as the divisibility condition $rk(Y)|24$ -- this was not guaranteed by modular invariance.  The significance of the divisibility criterion
is not entirely clear to us. If one divides $\chi_2^Y$ by $rk(Y)/24$ then one still obtains a possible second helicity supertrace in that it still
has integer coefficients in the $q$ expansion of the holomorphic part. The question then is whether it is possible to associate this supertrace
to some physical background in string theory. One possibility is that it should be interpreted as a possible helicity supertrace of the
long sought for theory of a single fivebrane. The problem with this idea is that it is not clear why a background corresponding to $k$
fivebranes in the double scaling limit should factorize in this way.

We noted earlier that one can view the SCFT ( \ref{scft}) as describing the behavior of string theory near a singularity $\CC^2/\Gamma$
with $\Gamma$ as finite subgroup of $SU(2)$. We can further view this singularity as a local singularity in a global $K3$ manifold
which can develop singularities of type $ADE$ with rank up to $20$. If we do this then we are considering type II string theory
on $K3 \times K3$ with the second $K3$ factor developing a local $ADE$ singularity and then focusing attention at the behavior
near the singularity. In the global context this background has a tadpole in the antisymmetric tensor field $B$ \cite{Sethi:1996es} due to a space-time
coupling of the form 
\be
- \int B \wedge X_8(R) \, .
\ee
Furthermore, for geometric compactification on an eight-dimensional manifold $M$ the integral of $X_8$ is proportional to the
Euler number of $M$, 
\be
\int_{M} X_8(R)=  \frac{\chi(M)}{24} \, ,
\ee
As a result, for $K3 \times K3$ which has Euler number $24^2$, there is a tadpole which can be cancelled by the addition of $24$
fundamental strings tangent to the remaining $\RR^{1,1}$ (or $R \times S^1$ is we compactly the spatial direction as discussed earlier).
As we approach a singularity in $K3$ to obtain the SCFT used here we lose the global structure and there is no tadpole condition since the
flux can escape off to infinity. Nonetheless, the SCFT we have used does {\it not} have a flux of the $B$ field at infinity and thus we expect
that it describes a situation in which the local contribution to $X_8(R)$ coming from the singularity is cancelled by a contribution from
fundamental strings. It would be interesting to understand the role of these background fundamental strings in more detail.

\section*{Acknowledgments}

We thank Don Zagier for valuable discussions. JH and CN acknowledge the support of NSF grant 1214409.


\newpage
\appendix

\section{Definitions and Conventions \label{Conventions}}

\noindent We use variables: $q\equiv e^{2 \pi \i \tau}$ for $\tau \in \HH$, $y \equiv e^{2 \pi \i z}$ for $z \in \IC$, and similarly we use $y_z$, $y_x$, $y_u$, ...etc. for $e^{2 \pi \i z}$,
$e^{2 \pi \i x}$, $e^{2 \pi \i u}$, ..., respectively, where $z,x,u,\ldots \in \IC$.

\subsection{Basic modular, elliptic and Jacobi functions} \label{modelljac}

\noindent a) The Dedekind $\eta$ function is defined as:
\be
\eta(\tau)  \equiv q^{1/24} \prod_{n=1}^\infty (1-q^n) = q^{1/24} \sum_{n \in \ZZ} q^{n(3n-1)/2}.
\ee
Under the generators of modular transformations $\eta$ transforms as
\begin{align}
\eta(\tau+1) &= e^{\i \pi/12} \, \eta(\tau), \\
\eta(-1/\tau) &= e^{-\i \pi/4} \, \tau^{1/2} \, \eta(\tau).
\end{align}

\noindent b) The Jacobi theta function $\theta_{ab}$ with $a,b \in \{0,1\}$ is:
\be
\theta_{ab}(\tau; z) \equiv 
\theta \bmat {a\\b}(\tau; z) \equiv 
\sum_{n \in \ZZ} q^{(n+a/2)^2/2} \, e^{2 \pi i (z+b/2)(n+a/2)}.
\ee
We use the notation $ \theta \bmat {a\\b}(\tau,z)$  for $a,b \notin \{0,1\}$ as well using the infinite series above.

\noindent i) Product formulae and other conventions:
\begin{align} 
 \theta_{00}(\tau;z) &= \theta_3(\tau;z) = \prod_{n=1}^\infty (1-q^n) (1+y q^{n-1/2})(1+y^{-1} q^{n-1/2})  \notag \\
&= \sum_{m \in \ZZ} q^{m^2/2} y^m,  
 \end{align}
 \begin{align}
 \theta_{01}(\tau;z)  &=  \theta_4(\tau;z) = \prod_{n=1}^\infty (1-q^n) (1-y q^{n-1/2})(1-y^{-1} q^{n-1/2})  \notag \\
&= \sum_{m \in \ZZ} (-1)^m q^{m^2/2} y^m,  
 \end{align}
 \begin{align}
 \theta_{10}(\tau;z)  &=   \theta_2(\tau;z) =q^{1/8} y^{1/2} \prod_{n=1}^\infty (1-q^n)(1+y q^n)(1+y^{-1} q^{n-1})   \notag \\
   &= \sum_{m \in \ZZ} q^{(m+1/2)^2/2}
y^{m+1/2},
 \end{align}
 \begin{align} 
  \theta_{11}(\tau;z)  &= -\i\, \theta_1(\tau;z) = \i\, q^{1/8} y^{1/2} \prod_{n=1}^\infty (1-q^n) (1-y q^n) (1-y^{-1} q^{n-1})    \notag \\
   &= \i \sum_{m \in \ZZ} (-1)^m q^{(m+1/2)^2/2} y^{m+1/2}.
\end{align}
The conventions here for $\theta_{00}, \theta_{01}, \theta_{10}, \theta_{11}$ are consistent with \cite{mumford1983tata} and the conventions for
$\theta_i$, $i=1,2,3,4$ are consistent with \cite{Cheng:2012tq}.

\noindent ii) Transformation under shifts :
\begin{align}
\theta_{00} \lp \tau;  z+\frac{a}{2}\, \tau + \frac{b}{2} \rp &= q^{-a^2/8}\, y^{-a/2}\, e^{-i \pi ab/2}\, \theta_{ab}(\tau;z), \\
\theta_{11} \lp \tau; z+\frac{a-1}{2}\,\tau + \frac{b-1}{2} \rp &= e^{i \pi b(1-a)/2}\, q^{-(1-a)^2/8}\, y^{(1-a)/2}\, \theta_{ab}(\tau;z).
\end{align}
We also note the following relations for reference:
\begin{align}
 & \theta_{00}(\tau; z+1/2) =  \theta_{01}(\tau; z), \notag\\
& \theta_{01}(\tau; z+1/2) =  \theta_{00}(\tau; z), \notag \\
 & \theta_{10}(\tau; z+1/2) =  \theta_{11}(\tau; z),  \notag\\
& \theta_{11}(\tau; z+1/2) =  -\theta_{10}(\tau; z).
\end{align}
\begin{align}
 & \theta_{00}(\tau; z+\tau/2) = q^{-1/8}\, y^{-1/2} \, \theta_{10}(\tau; z),  \notag \\
 &\theta_{01}(\tau; z+\tau/2) = -\i\, q^{-1/8}\, y^{-1/2} \, \theta_{11}(\tau; z), \notag \\
 & \theta_{10}(\tau; z+\tau/2) = q^{-1/8}\, y^{-1/2} \, \theta_{00}(\tau; z),  \notag \\
 &\theta_{11}(\tau; z+\tau/2) =  -\i\, q^{-1/8}\, y^{-1/2} \, \theta_{01}(\tau; z).
\end{align}
\begin{align}
 & \theta_{00}(\tau; z+(\tau+1)/2) = -\i\, q^{-1/8}\, y^{-1/2} \, \theta_{11}(\tau; z) , \notag \\ 
& \theta_{01}(\tau; z+(\tau+1)/2) = q^{-1/8}\, y^{-1/2}\,  \theta_{10}(\tau; z), \notag \\
 & \theta_{10}(\tau; z+(\tau+1)/2) = -\i\, q^{-1/8}\, y^{-1/2} \, \theta_{01}(\tau; z),\notag \\
& \theta_{11}(\tau; z+(\tau+1)/2) =  -q^{-1/8} \,y^{-1/2}\,  \theta_{00}(\tau; z).
\end{align}

\noindent iii) In \cite{zwegers2008mock}, $\vartheta(z)$ is defined as:
\be
\vartheta(z) \equiv \vartheta(z;\tau)= \theta_{11}(\tau;z).
\ee
It obeys the following properties: 
\begin{equation}\label{eq:theta_ell}
\theta_{11}(\tau;z+1) = - \, \theta_{11}(\tau;z) \mbox{ and }
\theta_{11}(\tau; z+\tau) = -\,  q^{-1/2} \, y^{-1} \, \theta_{11}(\tau;z),
\end{equation}
and moreover, $z \to \theta_{11}(\tau;z)$ is the unique entire function satisfying these two properties up to an overall multiplicative constant.
\begin{equation}
\theta_{11}(\tau;-z) = - \, \theta_{11}(\tau;z), 
\end{equation}
and more generally $\theta_{ab}(\tau; -z) = (-1)^{ab} \, \theta_{ab}(\tau; z)$.

\noindent iv) Under modular transformations:
\begin{align}
\label{eq:theta11_modT}
\theta_{11}(\tau+1;z)&= e^{\pi \i/4} \, \theta_{11}(\tau;z), \\
\label{eq:theta11_modS}
\theta_{11}(-1/\tau;z/\tau) &= e^{-3 \pi \i/4} \, \tau^{1/2} \, e^{\pi \i z^2/\tau} \, \theta_{11}(\tau;z). 
\end{align}
\begin{equation}
\frac{1}{2 \pi \i} \,  \theta_{11}'(\tau;0)  = \i \, \eta(\tau)^3.
\end{equation}

\noindent c) Let
\begin{align}
 & x_0 = \frac{1}{2}(x+z+v+u)  \, ,  \qquad \qquad z_0 = \frac{1}{2}(x+z-v-u) \, , \cr
 & v_0 = \frac{1}{2}(x-z+v-u)  \, , \qquad \qquad u_0 = \frac{1}{2}(x-z-v+u) .
\end{align}
Then, we have the following Riemann theta relations \cite{mumford1983tata}.
\begin{align} \nonumber
(R5): + \theta_{00} \theta_{00} \theta_{00} \theta_{00}
- \theta_{01} \theta_{01} \theta_{01} \theta_{01}
- \theta_{10}\theta_{10} \theta_{10}\theta_{10}  
+ \theta_{11} \theta_{11} \theta_{11} \theta_{11} & = 
2 \theta_{11} \theta_{11} \theta_{11} \theta_{11}, \\  \nonumber
(R8):- \theta_{01} \theta_{01} \theta_{00} \theta_{00} 
+\theta_{00} \theta_{00} \theta_{01} \theta_{01}
 -\theta_{11} \theta_{11} \theta_{10} \theta_{10}
 + \theta_{10} \theta_{10} \theta_{11} \theta_{11}  &=  
 -2 \theta_{11} \theta_{11} \theta_{10} \theta_{10}, \\ \nonumber 
(R9): - \theta_{01} \theta_{01} \theta_{00} \theta_{00}
+ \theta_{00} \theta_{00} \theta_{01} \theta_{01} 
 +\theta_{11} \theta_{11} \theta_{10} \theta_{10}
- \theta_{10} \theta_{10} \theta_{11} \theta_{11}  &= 
 -2 \theta_{10} \theta_{10} \theta_{11} \theta_{11}, \\ \nonumber 
(R11): - \theta_{10} \theta_{10} \theta_{00} \theta_{00}
- \theta_{11} \theta_{11} \theta_{01} \theta_{01}
+\theta_{00} \theta_{00} \theta_{10} \theta_{10} 
+ \theta_{01} \theta_{01} \theta_{11} \theta_{11}   &= 
2 \theta_{01} \theta_{01} \theta_{11} \theta_{11}, \\ \nonumber
(R13): - \theta_{10} \theta_{10} \theta_{00} \theta_{00}
 +\theta_{11} \theta_{11} \theta_{01} \theta_{01}
+\theta_{00} \theta_{00} \theta_{10} \theta_{10} 
- \theta_{01} \theta_{01} \theta_{11} \theta_{11} &= 
   2 \theta_{11} \theta_{11} \theta_{01} \theta_{01}, \\ \nonumber   
(R15): -\theta_{11} \theta_{11} \theta_{00} \theta_{00}
- \theta_{10} \theta_{10} \theta_{01} \theta_{01} 
+ \theta_{01} \theta_{01} \theta_{10} \theta_{10}
+\theta_{00} \theta_{00} \theta_{11} \theta_{11}   &= 
2 \theta_{00} \theta_{00} \theta_{11} \theta_{11}, \\ \nonumber
(R16):  -\theta_{11} \theta_{11} \theta_{00} \theta_{00}
+ \theta_{10} \theta_{10} \theta_{01} \theta_{01}
- \theta_{01} \theta_{01} \theta_{10} \theta_{10} 
+\theta_{00} \theta_{00} \theta_{11} \theta_{11}  &= 
 -2 \theta_{11} \theta_{11} \theta_{00} \theta_{00}.
\end{align}
In these relations, the arguments of the theta functions on the left hand side are 
$(\tau;x)$, $(\tau;z)$, $(\tau;v)$, $(\tau;u)$ in that order, and arguments for the theta functions on the right hand
side are $(\tau;x_0)$, $(\tau;z_0)$, $(\tau;v_0)$, $(\tau;u_0)$ again in that order.

\noindent d) Level $k$ Jacobi theta functions $\vartheta_{k,r}(\tau;z)$ are defined as
\be
\vartheta_{k,r}(\tau;z)
 \equiv \sum_{\substack{n \in \ZZ \\ n \, \equiv \, r \bmod 2k  }} q^{n^2/4k} \, y^{n}
= \sum_{n \in \ZZ+r/2k} q^{k n^2} \, y^{2kn}.
\ee

Note that there is another theta function definition, which is used commonly in affine algebra characters
\be
\Theta_{r,k}(\tau;z)
 \equiv  \sum_{n \in \ZZ+r/2k} q^{k n^2} \, y^{kn}.
\ee
It is related to the Jacobi theta function as
\be
\vartheta_{k,r}(\tau;z) = \Theta_{r,k}(\tau;2z).
\ee
In this paper, we exclusively use Jacobi theta functions, $\vartheta_{k,r}$, even in cases involving affine $SU(2)$ characters.

We also define a combination that appears often
\be \label{defthhat}
\wh \vartheta_{k,r}(\tau;z) = \vartheta_{k,r}(\tau;z) - \vartheta_{k,-r}(\tau;z).
\ee

From the Taylor expansion of the Jacobi theta function we also have the weight $3/2$ unary theta function
\be \label{defSkr}
S_{k,r}(\tau) =  \left. \frac{1}{2 \pi \i} \frac{\partial}{\partial z} \vartheta_{k,r}(\tau;z) \right\vert_{z=0} = \sum_{n \in \ZZ} (2kn+r) q^{(2kn+r)^2/4k}.
\ee

\noindent i) Identities:
\begin{align}
\vartheta_{k,r+2k}(\tau;z) &= \vartheta_{k,r}(\tau;z), \\
\vartheta_{k,-r}(\tau;z) &= \vartheta_{k,r}(\tau;-z), \\
\vartheta_{k,r}(\tau;z) + \vartheta_{k,r+k}(\tau;z) &= \vartheta_{k/2,r}(\tau/2;z), \\
S_{k,r+2k}(\tau) &= S_{k,r}(\tau), \\
S_{k,-r}(\tau) &= - \, S_{k,r}(\tau). 
\end{align}

\noindent ii) Elliptic transformations:
\begin{align}
\vartheta_{k,r}(\tau;z+1) &= \vartheta_{k,r}(\tau;z), \\
\vartheta_{k,r}(\tau;z+\tau) &= q^{-k} \, y^{-2k} \, \vartheta_{k,r}(\tau;z), \\
\vartheta_{k,r}(\tau;z+1/2) &= (-1)^r \, \vartheta_{k,r}(\tau;z), \\
\vartheta_{k,r}(\tau;z+\tau/2) &= q^{-k/4} \,  y^{-k} \, \vartheta_{k,k+r}(\tau;z).
\end{align}

\noindent iii) Modular transformations:
\begin{align}
\vartheta_{k,r}(\tau+1;z) &= e^{\pi \i r^2/2k} \, \vartheta_{k,r}(\tau;z), \\
\label{eq:theta_kr_modS}
\vartheta_{k,r}(-1/\tau; z/\tau) &= \sqrt{-\i \tau} \, e^{2\pi \i z^2 k/\tau} 
			\sum_{r'=-k+1}^k {\CS}^{(k)}_{r,r'} \, \vartheta_{k,r'}(\tau;z), \\
S_{k,r}(\tau+1) &= e^{\pi \i r^2/2k} \, S_{k,r}(\tau), \\
S_{k,r}(-1/\tau) &= e^{- \pi \i /4} \, \tau^{3/2}  
			\sum_{r'=-k+1}^k {\CS}^{(k)}_{r,r'} \, S_{k,r'}(\tau) \notag  \\
			&= ( - \i \tau)^{3/2} \, \sum_{r'=1}^{k-1} \wh{{\CS}}^{(k)}_{r,r'} \, S_{k,r'}(\tau), 
\end{align}
where
\begin{equation}
{\CS}^{(k)}_{r,r'} = \frac{1}{\sqrt{2k}} \, e^{-\pi \i r r'/k} 
\quad \mbox{and} \quad
\wh{{\CS}}^{(k)}_{l,l'}  \equiv \sqrt{\frac{2}{k}} \, \sin \frac{\pi l l'}{k} 
			= \i \lp {\CS}^{(k)}_{l,l'} - {\CS}^{(k)}_{l,-l'} \rp.
\end{equation}

\noindent e) Jacobi Forms 

The modular and elliptic transformation laws of Jacobi theta functions provide a template for the definition of the broader class of functions
known as Jacobi forms. Following \cite{eichler1985theory} we say that a holomorphic function $\phi: \HH \times \CC \rightarrow \CC$ is a Jacobi
form of weight $k$ and index $m$ if it transforms under the Jacobi group $SL_2(\ZZ) \ltimes \ZZ^2$ as
\bea
\phi(\tau,z) &=& (c \tau+d)^{-k} \exp \left( - 2 \pi i m \frac{c z^2}{c \tau+d} \right) \phi \left( \frac{a \tau+b}{c \tau+d}, \frac{z}{c \tau+d} \right), ~~\begin{pmatrix} a & b \\ c & d \end{pmatrix} \in SL_2(\ZZ)  \\
\phi(\tau,z) &=& \exp \left( 2 \pi i m ( \lambda^2 \tau+ 2 \lambda z) \right) \phi(\tau, z+ \lambda \tau + \mu ), ~~ \lambda, \mu \in \ZZ \, .
\eea
Invariance under the transformations $\tau \rightarrow \tau+1$ and $z \rightarrow z+1$ implies that we can write a Fourier expansion in the
form
\be
\phi(\tau,z) = \sum_{n,r \in \ZZ} c(n,r) q^n y^r
\ee
and the elliptic properties imply that the coefficients $c(n,r)$ depend only on the discriminant $r^2-4mn$ and on $r ~{\rm mod}~2m$.  A weak Jacobi
form is a Jacobi form whose coefficients obey $c(n,r)=0$ whenever $n<0$.  In our analysis an important role is played by the weak Jacobi forms
$\varphi_{0,1}(\tau,z)$ and $\varphi_{-2,1}(\tau,z)$ with (weight,index) of $(0,1)$ and $(-2,1)$ respectively given by
\begin{align} \label{phzeroone}
\varphi_{0,1}(\tau;z) &= 4 \lp \frac{\th_{00}(\tau; z)^2}{\th_{00}(\tau; 0)^2}
											+ \frac{\th_{01}(\tau; z)^2}{\th_{01}(\tau; 0)^2}
											+\frac{\th_{10}(\tau; z)^2}{\th_{10}(\tau; 0)^2}  \rp
\notag \\
&= \frac{y^2 + 10 y + 1}{y} + 2 \, \frac{(y-1)^2 (5 y^2 - 22 y +5)}{y^2} \, q + \ldots
\end{align}
\begin{align} \label{phtwoone}
\varphi_{-2,1}(\tau;z) &= - \, \frac{\th_{11}(\tau; z)^2}{\eta(\tau)^6}  \notag \\
			&= \frac{(y-1)^2}{y} - 2 \, \frac{(y-1)^4}{y^2} \, q +  \ldots
\end{align}

\subsection{\texorpdfstring{$SU(2)_k/U(1)$}{SU(2)/U(1)} characters \label{sec:su2char}}

\noindent a) Affine $SU(2)$ Characters:

The character of the spin $l-1$ representation of affine $SU(2)$ algebra at level $k-2$ is
\begin{equation} \label{defchikj}
\chi_{l-1}^{(k-2)} (\tau;z) = \frac{\vartheta_{k,l} (\tau;z/2) -  \vartheta_{k,-l} (\tau;z/2) }{\vartheta_{2,1} (\tau;z/2) -  \vartheta_{2,-1} (\tau;z/2) } 
= \i \, \frac{\wh \vartheta_{k,l}(\tau;z/2)}{\theta_{11}(\tau;z)},
\end{equation}
where $l$ is an integer in the range $1 \leq l \leq k-1$. It is an entire and even function of the variable~$z$.

\noindent i) Elliptic transformations:
\begin{align}
\label{eq:chi_ell1}
\chi_{l-1}^{(k-2)} (\tau;z+1) &= (-1)^{l+1} \, \chi_{l-1}^{(k-2)} (\tau;z) , \\ 
\label{eq:chi_elltau}
\chi_{l-1}^{(k-2)} (\tau;z+\tau) &=
				 q^{-(k-2)/4} \, y^{-(k-2)/2} \, \chi_{k-l-1}^{(k-2)} (\tau;z).
\end{align}

\noindent ii) Modular transformations:
\begin{align}
\chi_{l-1}^{(k-2)} (\tau+1;z) &= 
e^{\pi \i l^2/2k} \, e^{-\pi \i/4} \, \chi_{l-1}^{(k-2)} (\tau;z), \label{eq:chi_modT} \\ 
\chi_{l-1}^{(k-2)} (-1/\tau;z/\tau) &= e^{\pi \i z^2 (k-2)/2\tau} \, \sum_{l' = 1}^{k-1} 
\wh{{\CS}}^{(k)}_{l,l'}    \chi_{l'-1}^{(k-2)} (\tau;z). \label{eq:chi_modS} 
\end{align}

\noindent b) Modular invariant combinations:

The modular transformation for $\vartheta_{k,r}$ involves matrices
\begin{equation}
{\CS}^{(k)}_{r,r'} = \frac{1}{\sqrt{2k}} \, e^{-\pi \i r r'/k} \quad \mbox{and} \quad
{\CT}^{(k)}_{r,r'} = e^{\pi \i r^2/2k} \, \delta_{r,r'}.
\end{equation}
for $r,r' \in \ZZ_{2k}$. Both of these matrices are symmetric and unitary.

For each divisor $d$ of $k$ we define
$2k \times 2k$ matrices
\begin{align}\label{eq:OmegaDefn}
\Omega^{(k,d)}_{r,r'} = \begin{cases} 1 &\text{if $r+r' = 0$ mod $2 d$ and $r-r' = 0$ mod ${2k}/{d}$,} \\
0 &{\rm otherwise}.
\end{cases}
\end{align}
These matrices then satisfy
\begin{equation}\label{Omega_S}
{\CS}^{(k) \dagger} \, \Omega^{(k,d)} \,  {\CS}^{(k)} = \Omega^{(k,d)} 
\quad \mbox{and} \quad
\CT^{(k) \dagger} \, \Omega^{(k,d)} \, \CT^{(k)} = \Omega^{(k,d)} .
\end{equation}
That means, in particular, that the form of 
$\sum_{r,r' \in \ZZ_{2k}} \vartheta_{k,r}^* \, \Omega^{(k,d)}_{r,r'} \, \vartheta_{k,r'}$ is preserved under the modular transformations. Also, note that $ \Omega^{(k,d)}_{r,r'} =  \Omega^{(k,d)}_{r',r} $.

The modular transformations of  $\chi_{l-1}^{(k-2)}$, $\wh{\vartheta}_{k,r}$ and 
$S_{k,r}$ on the other hand involves the matrix
\begin{equation}\label{eq:Shat_S_reln}
\wh{{\CS}}^{(k)}_{l,l'}	= \i \lp {\CS}^{(k)}_{l,l'} - {\CS}^{(k)}_{l,-l'} \rp,
\qquad  l,l' = 1, \ldots, k-1.
\end{equation}
It satisfies
\begin{equation}
\wh{{\CS}}^{(k)}_{l,l'} = \wh{{\CS}}^{(k)}_{l',l},
\quad
\wh{{\CS}}^{(k)*}_{l,l'} = \wh{{\CS}}^{(k)}_{l,l'},
\quad
\wh{{\CS}}^{(k)}_{l,k-l'} = (-1)^{l+1} \,\wh{{\CS}}^{(k)}_{l',l}
\quad \mbox{and} \quad
\sum_{r = 1}^{k-1}  \wh{{\CS}}^{(k)}_{l,r} \,   \wh{{\CS}}^{(k)}_{r,l'}  = \delta_{l,l'}.
\end{equation}

It is natural to define then the $(k-1) \times (k-1)$ matrices
\begin{equation}
\wh{\Omega}^{(k,d)}_{r,r'} = {\Omega}^{(k,d)}_{r,r'} - {\Omega}^{(k,d)}_{r,-r'}
 \qquad \mbox{for } r,r'=1, \ldots k-1.
\end{equation}
From (\ref{eq:OmegaDefn}) we immediately deduce that 
\begin{equation}
\wh{\Omega}^{(k,d)}_{r,r'} = \wh{\Omega}^{(k,d)}_{k-r,k-r'}.
\end{equation}
Moreover, $\wh{\Omega}^{(k,d)}_{r,r'} = 0$ unless $r^2 - {r'}^2 \in 4 k \ZZ$ and this gives
\begin{equation}\label{eq:Omega_commutingT}
\wh{\Omega}^{(k,d)}_{r,r'} e^{\pi \i (r^2 - {r'}^2)/2k} = \wh{\Omega}^{(k,d)}_{r,r'}.
\end{equation}
Finally, using (\ref{Omega_S}) we see that
\begin{equation}\label{eq:Omega_commutingS}
\wh{{\CS}}^{(k) \dagger} \, \wh{\Omega}^{(k,d)} \, \wh{{\CS}}^{(k)} = \wh{\Omega}^{(k,d)}.
\end{equation}

\noindent c) String Functions:

The function $z \to \chi_{l}^{(k)} (\tau;2z)$ satisfies $\chi_{l}^{(k)} (\tau;2(z+1)) = \chi_{l}^{(k)} (\tau;2z)$ and $\chi_{l}^{(k)} (\tau;2(z+\tau)) = q^{-k} y^{-2k}\chi_{l}^{(k)} (\tau;2z)$. In other words, under elliptical transformations it transforms like an index $k$ Jacobi form and hence has a theta decomposition. This follows from the fact  that there is a $U(1)$ symmetry we can gauge in the level $k$ affine $SU(2)$ algebra so that the characters have a branching relation
\begin{equation}\label{eq:string_fnc}
\chi_{l}^{(k)} (\tau;z) = \sum_{r \in \ZZ_{2k} } c^{(k)}_{l,r}(\tau) \, \vartheta_{k,r}(\tau;z/2),
\end{equation}
where we define string functions $c^{(k)}_{l,r}(\tau)$ using this decomposition.
 
From( \ref{eq:string_fnc}) and from the properties of $\chi_{l}^{(k)} (\tau;z)$ it is easy to show that
\begin{equation}
c^{(k)}_{l,r}(\tau) = c^{(k)}_{l,r+2k}(\tau) = c^{(k)}_{l,-r}(\tau) = c^{(k)}_{k-l,k-r}(\tau).
\end{equation}

Since $\chi_{l}^{(k)} (\tau;z+1) = (-1)^l \, \chi_{l}^{(k)} (\tau;z)$ and 
$\vartheta_{k,r}(\tau, (z+1)/2) = (-1)^r \, \vartheta_{k,r}(\tau; z/2)$ 
\begin{equation}\label{eq:strfnc_zero}
c^{(k)}_{l,r}(\tau)=0 \quad \mbox{if } l-r \neq 0 \, (\mbox{mod }2). 
\end{equation}

\noindent d) Supersymmetric Coset:

The supersymmetric coset theory $SU(2)_k/U(1)$ gives rise to $N=2$ minimal models with central charge
\be
\wh{c} = \frac{c}{3} = 1-2/k, \qquad k=2,3,4,\ldots
\ee
We will need characters in the R and NS sector. There is also a T sector in which the two supercharges have different periodicity.
We use the conventions of \cite{Israel:2004ir} where (adding superscript $k$ to their notation to specify the level)
\begin{equation}\label{eq:su2u1_defn}
\chi_r^{l,s,k}(\tau; z) = \sum_{n \in \ZZ_{k-2}} c^{(k-2)}_{l,r-s+4n}(\tau) \,
 \vartheta_{2k(k-2),2r+k(4n-s)}(\tau; z/2k),
\end{equation}
where $s=0,2,1,3$ denotes NS sector states with $(-1)^F= 1,-1$ and R sector states with $(-1)^F=1,-1$ respectively.
$ c^{(k-2)}_{l,r}(\tau) $ are the $SU(2)_{k-2}$ string functions defined through (\ref{eq:string_fnc}).
$\chi_r^{l,s,k}(\tau; z)$ satisfies the branching relation
\begin{equation} \label{chibranching}
\chi_{l}^{(k-2)} (\tau;w) \, \vartheta_{2,s} \lp \tau; \frac{w-z}{2} \rp =
\sum_{r \in \ZZ_{2k}} \chi_r^{l,s,k}(\tau; z) \, \vartheta_{k,r} \lp \tau; \frac{w}{2} - \frac{z}{k} \rp.
\end{equation}
Specifically for $k=2$, $\chi_{l}^{(k-2=0)} (\tau;z)=1$ and this branching relation implies
\begin{equation}\label{eq:k2_SU2}
\chi_r^{l=0,s,k=2}(\tau; z) = \delta_{r,s \,(\mbox{\small mod }4) }.
\end{equation}

Finally, we define supersymmetric $SU(2)_k/U(1)$ characters by combining $\chi_r^{l,s,k}(\tau; z)$ as
\begin{equation} \label{Clkrdef}
C^{l,k}_r \bmat{a \\ b}(\tau; z) =  e^{i \pi a b/2} 
\left( \chi^{l,a,k}_r(\tau; z)  + (-1)^b \chi^{l,a+2,k}_r(\tau; z)  \right), 
\end{equation}
where $a=0,1$ denotes NS and R sectors in that order and $b=0,1$ states whether the trace
defining the character is taken with or without a $(-1)^F$ insertion, respectively. Using (\ref{eq:su2u1_defn}) we can also write it as 
\begin{equation}\label{eq:CdefnSum}
C^{l,k}_r \bmat{a \\ b}(\tau; z) =  e^{i \pi a b/2} \sum_{n \in \ZZ_{2(k-2)}} (-1)^{b n} \, 
c^{(k-2)}_{l,r+2n-a}(\tau)  \,
 \vartheta_{2k(k-2),2r+k(2n-a)}(\tau; z/2k).
\end{equation}
Note that from (\ref{eq:strfnc_zero}) it follows that
\begin{equation}\label{eq:Cfnc_zero}
C^{l,k}_r \bmat{a \\ b}(\tau; z)=0 \quad \mbox{if } l+r+a \neq 0\, (\mbox{mod }2). 
\end{equation}

\noindent i) Identities:

Using (\ref{eq:CdefnSum}) we can show
\begin{align}
\label{eq:su2u1_prop1}
C^{l,k}_{-r} \bmat{a \\ b}(\tau; z) &= (-1)^{a b} \, C^{l,k}_r \bmat{a \\ b}(\tau; -z), \\
\label{eq:su2u1_prop2}
C^{l,k}_{r+2k} \bmat{a \\ b}(\tau; z) &= C^{l,k}_r \bmat{a \\ b}(\tau; z), \\
\label{eq:su2u1_prop3}
C^{k-2-l,k}_{r+k} \bmat{a \\ b}(\tau; z) &= (-1)^b \, C^{l,k}_r \bmat{a \\ b}(\tau; z).
\end{align}
\begin{equation}\label{eq:su2u1_shift00}
C^{l,k}_{r+a} \bmat{0 \\ 0} \lp \tau; z+\frac{a}{2}\, \tau + \frac{b}{2} \rp = 
q^{-a^2(k-2)/8k}\, y^{-a(k-2)/2k}\, e^{i \pi ab/2}\, e^{i \pi b(r+a)/k}\, 
C^{l,k}_{r} \bmat{a \\ b} (\tau; z),
\end{equation}
\begin{align}
C^{l,k}_{r+a-1} \bmat{1 \\ 1} \lp \tau; z+\frac{a-1}{2}\,\tau + \frac{b-1}{2} \rp  = 
-\,&q^{-(1-a)^2(k-2)/8k}\, y^{(1-a)(k-2)/2k}\, e^{i \pi (a+1)b/2}\, \notag \\
\label{eq:su2u1_shift11}
&\times e^{i \pi (b-1)(r+a-1)/k}\, 
C^{l,k}_{r} \bmat{a \\ b} (\tau; z).
\end{align}

$C^{l,k}_r \bmat{1 \\ 1}(\tau; z)$ is particularly important for our discussion. It satisfies the branching relation 
\begin{equation}\label{eq:su2_branch}
\chi_{l}^{(k-2)} \lp \tau;w + \frac{2z}{k} \rp \, \theta_{11} \lp \tau; w - \frac{k-2}{k}z \rp =
\sum_{r \in \ZZ_{2k}} C^{l,k}_r \bmat{1 \\ 1}(\tau; z) \, \vartheta_{k,r} \lp \tau; w/2 \rp.
\end{equation}

Formally $\chi_{l+2k}^{(k-2)} (\tau;z) = \chi_{l}^{(k-2)} (\tau;z) $ and 
$\chi_{-l-2}^{(k-2)} (\tau;z) = - \chi_{l}^{(k-2)} (\tau;z)$. This  in turn leads to
\begin{align}
C^{l+2k,k}_r \bmat{1 \\ 1}(\tau; z) &= C^{l,k}_r \bmat{1 \\ 1}(\tau; z), \\
C^{-l-2,k}_r \bmat{1 \\ 1}(\tau; z) &= - \,C^{l,k}_r \bmat{1 \\ 1}(\tau; z), \\
C^{-1,k}_r \bmat{1 \\ 1}(\tau; z) &= C^{k-1,k}_r \bmat{1 \\ 1}(\tau; z) = 0.
\end{align}

Setting $z\to 0$ in the branching relation (\ref{eq:su2_branch}) we get
\begin{equation}
C^{l-1,k}_r \bmat{1 \\ 1}(\tau; 0) = \i \lp \delta_{r,l \, (\mbox{\small mod }2k) } - 
																	\delta_{-r,l \, (\mbox{\small mod }2k) } \rp.
\end{equation}

\noindent ii) Elliptical Transformations:

Using (\ref{eq:CdefnSum}) one can easily obtain
\begin{align}
C^{l,k}_r \bmat{a \\ b}(\tau; z+1) &= e^{2 \pi \i r/k}\,  e^{-\pi \i a} \, C^{l,k}_r \bmat{a \\ b}(\tau; z), \\
\label{eq:Cfnc_elltau}
C^{l,k}_r \bmat{a \\ b}(\tau; z+\tau) &=  q^{-(k-2)/2k} \, 
											 y^{-(k-2)/k} \, (-1)^b \,  C^{l,k}_{r-2} \bmat{a \\ b}(\tau; z).
\end{align}

\noindent iii) Modular Transformations:

Using (\ref{eq:chi_modT}),(\ref{eq:chi_modS}) and (\ref{eq:su2_branch}) one can work out the behavior of 
$C^{l,k}_r \bmat{1 \\ 1}(\tau; z)$ under modular transformations as
\begin{align}
\label{eq:Cfnc_modT}
C^{l-1,k}_r \bmat{1 \\ 1}(\tau+1; z) &= e^{\pi \i (l^2 - r^2)/2k} \, C^{l-1,k}_r \bmat{1 \\ 1}(\tau; z) \\
\label{eq:Cfnc_modS}
C^{l-1,k}_r \bmat{1 \\ 1}(-1/\tau; z/\tau) &= -\i \, e^{\pi \i z^2 (k-2)/k\tau} \, 
		\sum_{l' =1}^{k-1} \sum_{r' \in \ZZ_{2k}} \wh{{\CS}}^{(k)}_{l,l'}  \,  {\CS}^{(k)*}_{r,r'} \,
		C^{l'-1,k}_{r'} \bmat{1 \\ 1}(\tau; z).
\end{align}

\noindent iv) Explicit Expressions:

According to \cite{Ravanini:1987yg} we can write 
$C^{l,k}_r \bmat{0 \\ 0}(\tau; z)$ as 
\begin{equation}
C^{l-1,k}_r \bmat{0 \\ 0}(\tau; z) = q^{\frac{l^2 - r^2}{4k}}  y^{\frac{r}{k}} 
\frac{\theta_{00}(\tau;z)}{\eta(\tau)^3}
\sum_{n \in \ZZ} q^{k n^2 + l n} \lp 
\frac{1}{1 + y^{-1} q^{k n + \frac{l+r}{2}}}
+ \frac{1}{1 + y q^{k n +\frac{l-r}{2}}}  - 1
 \rp
\end{equation} 
or as
\begin{equation}
C^{l-1,k}_r \bmat{0 \\ 0}(\tau; z) = 
\frac{ \i \, q^{(l^2 - r^2-k^2)/4k}\,   y^{r/k}\,  \theta_{00}(\tau;z) \, \theta_{11}(k \tau; l \tau)}
{\theta_{10}(k \tau;z - \frac{l+r}{2} \tau) \, 
\theta_{10}(k \tau;z+\frac{l - r}{2}\tau)}\,
\frac{\eta(k \tau)^3}{\eta(\tau)^3}
\end{equation}
whenever $l-1 = r\, (\mathrm{ mod }\,2)$. We find then explicit expressions for $C^{l-1,k}_r \bmat{a \\ b}(\tau; z)$ using equation (\ref{eq:su2u1_shift00}). From these expressions, it is easy to check that $z \to C^{l-1,k}_r \bmat{a \\ b}(\tau; z)$ is an entire function.

\section{Appell-Lerch sums}\label{sec:Appell_Lerch}

In \cite{zwegers2008mock} Zwegers defines a  function $\mu(u,v;\tau)$  as
\be
\mu(u,v;\tau) \equiv \frac{e^{i \pi u}}{\vartheta(v,\tau)} \sum_{n \in \ZZ} 
\frac{(-1)^n \, q^{n(n+1)/2} \, e^{ 2 \pi i n v}}{1-q^n \, e^{2 \pi i u}}.
\ee

We will define now a generalization \footnote{Multivariable generalizations of Appell-Lerch sums which are distinct from those presented here are discussed
in \cite{zwegers2010multivariable}.} :
\begin{definition}
 \be\label{eq:mu_defn1}
\mu^{k,j,j'}(\tau;v,u,w) \equiv \frac{y_v^{j'/k}}{\theta_{11}(\tau;u)}
        \sum_{r \in \ZZ_k} \sum_{n \in \ZZ + \frac{r}{k}}
        \frac{q^{k n^2 + j' n}}{1-y_v \, q^{k n}} \lp \frac{y_u^k\, y_w^{k-2}}{y_v^{k-2}} \rp^n
        C^{j-1,k}_{2r + j'}\bmat{1 \\ 1}(\tau;w-v)
\ee
or equivalently 
\be\label{eq:mu_defn2}
\mu^{k,j,j'}(\tau;v,u,w) = \frac{y_v^{j'/k}}{\theta_{11}(\tau;u)}
        \sum_{n \in \ZZ}
        \frac{q^{n^2/k} \, q^{j'n/k}}{1-y_v \, q^{n}} \lp \frac{y_u\, y_w^{1-2/k}}{y_v^{1-2/k}} \rp^n
        C^{j-1,k}_{2n + j'}\bmat{1 \\ 1}(\tau;w-v)
\ee
where $\tau \in \IH$, $w \in \IC$ and $v,u \in \IC - (\IZ \tau + \IZ)$. The superscripts, $k$, $j$ and $j'$, are integers and $k \geq 2$. 
\end{definition}

We will usually deal with $j$ and $j'$ in the range $\{ j,j'=1, \ldots, k-1 \}$, but we will not restrict them for now, aside from requiring them to be integers, and use the expression above to define it for general $j$ and $j'$. An important property is that
\begin{equation}\label{eq:mufnc_zero}
\mu^{k,j,j'}(\tau;v,u,w)=0 \quad \mbox{if } j \neq j' (\mbox{mod }2). 
\end{equation}
This follows from the property (\ref{eq:Cfnc_zero}) of  $C^{j-1,k}_{2n + j'}\bmat{1 \\ 1}$ function.
For the rest of this section we will implicitly assume $j = j' \, (\mbox{mod }2)$ if we do not state otherwise.

\begin{definition}
 We define
\begin{equation}
\mu^{(k,d)}(\tau;v,u,w) = 
		\sum_{j,j'=1}^{k-1} \wh{\Omega}^{(k,d)}_{j,j'}  \ \mu^{k,j,j'}(\tau;v,u,w)
\end{equation}
for $d | k$ and
\begin{equation}
\mu^{Y}(\tau;v,u,w) = 
		\sum_{j,j'=1}^{k-1} \wh{\Omega}^Y_{j,j'} \ \mu^{k,j,j'}(\tau;v,u,w)
\end{equation}
for a simply laced root system $Y$ with Coxeter number $k$.
\end{definition}

In the following, we will work out some properties for these functions generalizing the properties of $\mu(u,v;\tau)$ as worked out in \cite{zwegers2008mock}.

\begin{remark}
 $\mu^{2,1,1}(\tau;v,u,w)$ is equal to $\i \, \mu(v,u;\tau)$ of \cite{zwegers2008mock} and to $\mu(v,u;\tau)$ of \cite{Eguchi:2008gc}. This can be seen explicitly from \eqref{eq:mu_defn2} using $C^{0,2}_{2n+1}\bmat{1 \\ 1}(\tau;w-v) = \i \, (-1)^n$ (see \ref{eq:k2_SU2}).
\end{remark}

\subsection{Basic Properties and Behavior Under Elliptical Transformations}

\begin{proposition}\label{prop:mu_analyticity}
The function $ \mu^{k,j,j'}(\tau;v,u,w) $ is holomorphic on the set it is defined ($\tau \in \IH$, $w \in \IC$ and $v,u \in \IC - (\IZ \tau + \IZ)$). Moreover both $u \to \mu^{k,j,j'}(\tau;v,u,w)$ and $v \to \mu^{k,j,j'}(\tau;v,u,w)$ are meromorphic functions with at most simple poles at $u = a_1 \tau + b_1$ and $v = a_2 \tau + b_2$ respectively with $a_1,a_2,b_1,b_2 \in \IZ$. In particular, the residue of the pole at $v=0$ for the function $v \to \mu^{k,j,j'}(\tau;v,u,w)$ is $-\frac{1}{2 \pi \i} \, \frac{C^{j-1,k}_{j'}\bmat{1 \\ 1}(\tau;w)}{\theta_{11}(\tau;u)}$.
\end{proposition}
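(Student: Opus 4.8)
The plan is to read off all the analytic structure directly from the single-sum form~\eqref{eq:mu_defn2}. After merging the two $v$-powers $y_v^{j'/k}$ and $y_v^{-(1-2/k)m}$ into a single $y_v^{(j'-(k-2)m)/k}$ (all powers $y_v^{c}, y_u^{c}, y_w^{c}$ being understood as the entire functions $e^{2\pi\i c v}$, $e^{2\pi\i c u}$, $e^{2\pi\i c w}$, so that no branch choice is involved), denote the $m$-th summand of~\eqref{eq:mu_defn2} by
\[
t_m(\tau;v,u,w)=\frac{1}{\theta_{11}(\tau;u)}\;\frac{q^{(m^2+j'm)/k}\,y_u^{m}\,y_w^{(1-2/k)m}\,y_v^{(j'-(k-2)m)/k}}{1-y_v\,q^{m}}\;C^{j-1,k}_{2m+j'}\bmat{1 \\ 1}(\tau;w-v),
\]
so $\mu^{k,j,j'}=\sum_{m\in\IZ}t_m$. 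The first and main step is to show that this series converges absolutely and locally uniformly on $D:=\{\tau\in\IH,\ w\in\IC,\ u,v\in\IC\setminus(\IZ\tau+\IZ)\}$. The inputs are: (i) $|q^{(m^2+j'm)/k}|=e^{-2\pi\tau_2(m^2+j'm)/k}$ gives Gaussian decay in $m$; (ii) by~\eqref{eq:su2u1_prop2} the factor $C^{j-1,k}_{2m+j'}\bmat{1 \\ 1}(\tau;w-v)$ is periodic in $m$ with period $k$, hence takes finitely many values and is bounded on compacta in $(\tau,w,v)$, and it is entire in $w-v$ (Appendix~\ref{sec:su2char}); (iii) the remaining exponentials $y_u^{m}y_w^{(1-2/k)m}y_v^{(j'-(k-2)m)/k}$ have size $e^{O(|m|)}$ uniformly on compacta, while $1/(1-y_v q^{m})$ is uniformly bounded there. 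The one genuinely delicate point is the uniform lower bound on $|1-y_v q^{m}|=|1-e^{2\pi\i(v+m\tau)}|$ needed for (iii): since $-m\tau+\IZ\subseteq\IZ\tau+\IZ$ we have $\operatorname{dist}(v+m\tau,\IZ)\ge\operatorname{dist}(v,\IZ\tau+\IZ)=:\rho>0$ for \emph{all} $m$, with $\rho$ bounded below on any compact subset of $D$, and a short case analysis on $\operatorname{Im}(v+m\tau)$ then yields $|1-e^{2\pi\i(v+m\tau)}|\ge\delta>0$ with $\delta$ depending only on the compact set. Combining these gives $|t_m|\le C\,e^{-2\pi\tau_2(m^2+j'm)/k+c|m|}$ on compacta, so the series converges; since each $t_m$ is jointly holomorphic on $D$ (on $D$ both $\theta_{11}(\tau;u)$ and $1-y_v q^{m}$ are non-vanishing), the limit $\mu^{k,j,j'}$ is holomorphic on $D$.

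For the meromorphy in $u$, fix $(\tau,v,w)$ with $v\notin\IZ\tau+\IZ$: the only $u$-dependence of $t_m$ is through $y_u^{m}=e^{2\pi\i m u}$, which is entire, and the bound above is valid for all $u\in\IC$. Hence $\theta_{11}(\tau;u)\,\mu^{k,j,j'}(\tau;v,u,w)=\sum_m\theta_{11}(\tau;u)\,t_m$ is entire in $u$, and dividing by $\theta_{11}(\tau;u)$ — whose zeros are all simple and located exactly on $\IZ\tau+\IZ$ — produces a function of $u$ that is meromorphic with at most simple poles at $u=a_1\tau+b_1$, $a_1,b_1\in\IZ$. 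For the meromorphy in $v$, fix $(\tau,u,w)$ with $u\notin\IZ\tau+\IZ$: each $t_m$ is meromorphic in $v$ because $y_v^{(j'-(k-2)m)/k}$ and $C^{j-1,k}_{2m+j'}\bmat{1 \\ 1}(\tau;w-v)$ are entire in $v$ (entirety of $z\mapsto C^{l,k}_r\bmat{1 \\ 1}(\tau;z)$ is recorded in Appendix~\ref{sec:su2char}), while $1/(1-y_v q^{m})$ has simple poles precisely on $v\in-m\tau+\IZ$. As $m$ runs over $\IZ$ these loci are pairwise disjoint (since $\tau\notin\IR$) and their union is $\IZ\tau+\IZ$; together with the locally uniform convergence this shows that $v\mapsto\mu^{k,j,j'}(\tau;v,u,w)$ is meromorphic on $\IC$ with at most simple poles at $v=a_2\tau+b_2$, $a_2,b_2\in\IZ$.

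Finally, for the residue at $v=0$: the point $v=0$ lies on the pole locus $-m\tau+\IZ$ only for $m=0$, so only $t_0$ is singular there and every other summand contributes $0$ to the residue. Since $t_0=\dfrac{y_v^{j'/k}\,C^{j-1,k}_{j'}\bmat{1 \\ 1}(\tau;w-v)}{(1-y_v)\,\theta_{11}(\tau;u)}$, and near $v=0$ one has $1-y_v=1-e^{2\pi\i v}=-2\pi\i v+O(v^2)$ while $y_v^{j'/k}\,C^{j-1,k}_{j'}\bmat{1 \\ 1}(\tau;w-v)\big|_{v=0}=C^{j-1,k}_{j'}\bmat{1 \\ 1}(\tau;w)$, one gets $\Res_{v=0}\mu^{k,j,j'}(\tau;v,u,w)=-\dfrac{1}{2\pi\i}\dfrac{C^{j-1,k}_{j'}\bmat{1 \\ 1}(\tau;w)}{\theta_{11}(\tau;u)}$, as claimed. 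The only nontrivial ingredient in the whole argument is the uniform lower bound on $|1-y_v q^{m}|$ used in the convergence estimate; everything else is bookkeeping with the periodicity and entirety properties of the $C$-functions from Appendix~\ref{sec:su2char} and the elementary zero structure of $\theta_{11}$ from Appendix~\ref{Conventions}.
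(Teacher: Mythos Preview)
Your proof is correct and follows essentially the same approach as the paper's own proof: convergence from the Gaussian factor $q^{m^2/k}$, poles in $u$ from the zeros of $\theta_{11}(\tau;u)$, poles in $v$ from the denominators $1-y_v q^m$, and the residue at $v=0$ from the $m=0$ term alone. The paper's argument is considerably more terse (it simply asserts the convergence and reads off the poles), whereas you supply the details the paper omits---in particular the uniform lower bound on $|1-y_v q^m|$ via $\operatorname{dist}(v+m\tau,\IZ)\ge\operatorname{dist}(v,\IZ\tau+\IZ)$ and the periodicity of $C^{j-1,k}_{2m+j'}\bmat{1\\1}$ in $m$---but the underlying strategy is identical.
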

\begin{proof}
 That $\mu^{k,j,j'}(\tau;v,u,w) $ is holomorphic for $\tau \in \IH$, $w \in \IC$ and $v,u \in \IC - (\IZ \tau + \IZ)$ is easy to see from \eqref{eq:mu_defn1} using the convergence of the sum over $n$ and also the holomorphicity of $C^{j-1,k}_{j'}\bmat{1 \\ 1}(\tau;x)$ for $\tau \in \IH$ and $x \in \IC$. That $u \to \mu^{k,j,j'}(\tau;v,u,w)$ has at most simple poles on $\IZ \tau + \IZ$ is immediate from the fact that $\theta_{11}(\tau;u)$ has simple zeros on $\IZ \tau + \IZ$. $v \to \mu^{k,j,j'}(\tau;v,u,w)$
 may have poles at $v$'s satisfying $1 - y_v \, q^{-a_1} = 0$ for an integer $a_1$ (see \eqref{eq:mu_defn2}). This happens on $\IZ \tau + \IZ$ where each location gives a simple pole on these locations unless there is a zero contribution from the $C^{j-1,k}_{j'}\bmat{1 \\ 1}$ term . In particular, $v=0$ pole comes from the $n=0$ term of \eqref{eq:mu_defn2}. The associated residue is
 \begin{align}
 \lim_{v \to 0} v \, \mu^{k,j,j'}(\tau;v,u,w) &= \frac{1}{\theta_{11}(\tau;u)} \lim_{v \to 0} \lp \frac{v}{1-e^{2 \pi \i v}} \rp
 C^{j-1,k}_{j'}\bmat{1 \\ 1}(\tau;w) \notag \\
 \label{eq:mu_v_residue}
 &= -\frac{1}{2 \pi \i} \frac{C^{j-1,k}_{j'}\bmat{1 \\ 1}(\tau;w)}{\theta_{11}(\tau;u)}.
 \end{align}
\end{proof}

\begin{proposition}\label{prop:mu_algell1}
\hfill
 \begin{description}
  \item[(a)] $\mu^{k,j,j'}(\tau;v,u+1,w) = -\mu^{k,j,j'}(\tau;v,u,w)$,
  \item[(b)] $\mu^{k,j,j'}(\tau;v+1,u,w) = -\mu^{k,j,j'}(\tau;v,u,w)$,
  \item[(c)] $\mu^{k,j,j'}(\tau;v,u,w+1) = -\, e^{2 \pi \i j'/k} \mu^{k,j,j'}(\tau;v,u,w)$,
  \item[(d)] $\mu^{k,j,j'}(\tau;v+\tau,u+\tau,w) = \mu^{k,j,j'}(\tau;v,u,w)$,
  \item[(e)] $\mu^{k,j,j'}(\tau;-v,-u,-w) = \mu^{k,k-j,k-j'}(\tau;v,u,w)$.
 \end{description}
\end{proposition}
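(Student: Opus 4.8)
The plan is to work directly from the series representation~\eqref{eq:mu_defn2} and prove each of the five identities by a termwise manipulation of the sum over~$n$, using only two inputs: the elliptic transformation laws of~$\theta_{11}$ (equation~\eqref{eq:theta_ell} together with the oddness $\theta_{11}(\tau;-z)=-\theta_{11}(\tau;z)$), and the corresponding laws for the character $C^{j-1,k}_{2n+j'}\bmat{1 \\ 1}$ from Appendix~\ref{sec:su2char}, namely its behaviour under $z\to z+1$ (the law $C^{l,k}_r\bmat{a \\ b}(\tau;z+1)=e^{2\pi \i r/k}e^{-\pi \i a}C^{l,k}_r\bmat{a \\ b}(\tau;z)$), under $z\to z+\tau$, equation~\eqref{eq:Cfnc_elltau}, the reflection~\eqref{eq:su2u1_prop1}, and the symmetry~\eqref{eq:su2u1_prop3}. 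Parts (a), (b), (c) are ``integer shift'' statements and follow by checking that the accumulated phases in each term are $n$-independent; part (d) is the genuine shift by $\tau$ and needs a reindexing; part (e) is the symmetry $(j,j')\leftrightarrow(k-j,k-j')$. One may assume $j\equiv j'\pmod 2$ throughout, since otherwise both sides of each identity vanish by~\eqref{eq:Cfnc_zero} (and by $k-j\equiv k-j'\pmod 2$ iff $j\equiv j'\pmod 2$).

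For~(a), the shift $u\to u+1$ touches only the prefactor $1/\theta_{11}(\tau;u)$, which flips sign by~\eqref{eq:theta_ell}, while $y_u^{\,n}$ and $w-v$ are unchanged; hence the overall sign. For~(b), under $v\to v+1$ the denominator $1-y_vq^n$ is unchanged, the prefactor $y_v^{\,j'/k}$ contributes $e^{2\pi \i j'/k}$, the factor $(1/y_v^{\,1-2/k})^n$ contributes $e^{4\pi \i n/k}$ (using $e^{-2\pi \i n}=1$), and $C^{j-1,k}_{2n+j'}\bmat{1 \\ 1}(\tau;w-v-1)$ contributes $-e^{-2\pi \i(2n+j')/k}$ by the $z\to z+1$ law with $a=1$; the product of these three phases is $-1$ for every $n$. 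Part~(c) is the same in spirit: under $w\to w+1$ the factor $(y_w^{\,1-2/k})^n$ contributes $e^{-4\pi \i n/k}$ and $C^{j-1,k}_{2n+j'}\bmat{1 \\ 1}(\tau;w-v+1)$ contributes $-e^{2\pi \i(2n+j')/k}$, and the product collapses to $-e^{2\pi \i j'/k}$, independent of~$n$.

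Part~(d) is the main computation and the step I expect to be most delicate, because it forces one to track the fractional powers of $q$ carefully. Under $v\to v+\tau$, $u\to u+\tau$ the prefactor $y_v^{\,j'/k}/\theta_{11}(\tau;u)$ acquires $-\,q^{j'/k}q^{1/2}y_u$ by~\eqref{eq:theta_ell}; inside the sum the denominator becomes $1-y_vq^{n+1}$, the geometric factor $\bigl(y_uy_w^{1-2/k}/y_v^{1-2/k}\bigr)^n$ picks up $q^{2n/k}$, and $C^{j-1,k}_{2n+j'}\bmat{1 \\ 1}(\tau;w-v-\tau)$ is rewritten, by inverting~\eqref{eq:Cfnc_elltau}, as $-\,q^{-(k-2)/2k}\,(y_w/y_v)^{(k-2)/k}\,C^{j-1,k}_{2(n+1)+j'}\bmat{1 \\ 1}(\tau;w-v)$. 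The key observation is that $(y_w/y_v)^{(k-2)/k}$ equals $\bigl(y_uy_w^{1-2/k}/y_v^{1-2/k}\bigr)/y_u$, so it cancels the $y_u$ from the prefactor and converts the $n$-th power of the geometric factor into an $(n+1)$-st power. Reindexing by $m=n+1$, the exponent $n^2/k+j'n/k+2n/k$ becomes $m(m+j')/k-(j'+1)/k$, and the remaining scalar $q$-powers $q^{j'/k}q^{1/2}q^{-(k-2)/2k}q^{-(j'+1)/k}$ sum to $q^{0}=1$; the resulting series is exactly $\mu^{k,j,j'}(\tau;v,u,w)$. Finally, for~(e), the replacements $v\to-v,\,u\to-u,\,w\to-w$ send $\theta_{11}(\tau;u)\to-\theta_{11}(\tau;u)$, $y_v^{\,j'/k}\to y_v^{\,-j'/k}$, $1-y_vq^n\to1-y_v^{-1}q^n$, the geometric factor to its $(-n)$-th power, and $C^{j-1,k}_{2n+j'}\bmat{1 \\ 1}(\tau;-(w-v))$ to $C^{(k-j)-1,k}_{2(-n)+(k-j')}\bmat{1 \\ 1}(\tau;w-v)$ after one application of~\eqref{eq:su2u1_prop1} (to flip the sign of the argument) and one of~\eqref{eq:su2u1_prop3} (to pass from level data $(j-1,\,-(2n+j'))$ to $(k-1-j,\,k-2n-j')$), the two signs cancelling. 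Reindexing by $m=-n$ and using $1/(1-y_v^{-1}q^{-m})=-\,y_vq^m/(1-y_vq^m)$, the factor $-y_v$ cancels the sign from $\theta_{11}$ while $y_v^{\,-j'/k}\cdot y_v=y_v^{\,(k-j')/k}$; since $m^2/k-j'm/k+m=m^2/k+(k-j')m/k$, the series becomes precisely that of $\mu^{k,k-j,k-j'}(\tau;v,u,w)$.
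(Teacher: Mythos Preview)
Your proof is correct and follows essentially the same approach as the paper's own argument: parts~(a)--(c) by termwise phase cancellation using the $z\to z+1$ laws for $\theta_{11}$ and $C^{l,k}_r\bmat{1\\1}$, part~(d) by combining the $\tau$-shifts and reindexing $n\to n-1$, and part~(e) via the reflection~\eqref{eq:su2u1_prop1}, the symmetry~\eqref{eq:su2u1_prop3}, the change $n\to -n$, and the rewriting of $1/(1-y_v^{-1}q^{-m})$. Your bookkeeping in~(d) of the fractional $q$-powers and in~(e) of the sign cancellations is accurate and in fact more explicit than the paper's presentation.
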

\begin{proof}
 \hfill
 
 (a) follows trivially from the fact that $\theta_{11}(\tau;u+1) = - \theta_{11}(\tau;u)$.
 
 For (b), we first point out that
\be
 C^{j-1,k}_{2n + j'}\bmat{1 \\ 1}(\tau;w-v-1) = - e^{-2 \pi \i (2n+j')/k} \, C^{j-1,k}_{2n + j'}\bmat{1 \\ 1}(\tau;w-v).
\ee
The $e^{-2 \pi \i (2n+j')/k}$ factor is then canceled by $e^{2 \pi \i j'/k}$ produced by $y_v^{j'/k}$ and
$e^{4 \pi \i n/k}$ produced by $y_v^{-(1-2/k)n}$ in the sum.

The proof for (c) is quite similar to the proof for (b). $ C^{j-1,k}_{2n + j'}$ produces a $-\, e^{2 \pi \i (2n+j')/k}$ factor.
$e^{-4 \pi \i n/k}$ produced by $y_w^{(1-2/k)n}$ in the sum cancels a portion of this and we are left with the phase factor $-\, e^{2 \pi \i j'/k}$ as stated in the proposition.

For (d) we simply write
\begin{align}
\mu^{k,j,j'}(\tau;v+\tau,u+\tau,w) = &\frac{y_v^{j'/k} \, q^{j'/k}}{- q^{-1/2}\, y_u^{-1} \, \theta_{11}(\tau;u)}
        \sum_{n \in \ZZ}
        \frac{q^{n^2/k} \, q^{j' n/k}}{1-y_v \, q^{n+1}} \, q^{2n/k} \lp \frac{y_u\, y_w^{1-2/k}}{y_v^{1-2/k}} \rp^n \notag \\
        & \times \lp -  q^{-(k-2)/2k} \lp \frac{y_w}{y_v}\rp^{1-2/k} C^{j-1,k}_{2n + j'+2}\bmat{1 \\ 1}(\tau;w-v) \rp.
\end{align}
By rearranging various factors we get
\be
      \frac{y_v^{j'/k}}{ \theta_{11}(\tau;u)}  \sum_{n \in \ZZ}
        \frac{q^{(n+1)^2/k} \, q^{j'(n+1)/k}}{1-y_v \, q^{n+1}} \lp \frac{y_u\, y_w^{1-2/k}}{y_v^{1-2/k}} \rp^{n+1}
        C^{j-1,k}_{2(n+1) + j'}\bmat{1 \\ 1}(\tau;w-v)
\ee
which is just $\mu^{k,j,j'}(\tau;v,u,w)$ after shifting the dummy summation variable $n \to n-1$.

Finally, $\mu^{k,j,j'}(\tau;-v,-u,-w)$ is
\begin{align}
& \frac{y_v^{-j'/k}}{-\theta_{11}(\tau;u)}
        \sum_{n \in \ZZ}
        \frac{q^{n^2/k} \, q^{j'n/k}}{1-y_v^{-1} \,  q^{n}} \lp \frac{y_u\, y_w^{1-2/k}}{y_v^{1-2/k}} \rp^{-n}
        C^{j-1,k}_{2n + j'}\bmat{1 \\ 1}(\tau;v-w) \\
        &= - \frac{y_v^{-j'/k}}{\theta_{11}(\tau;u)}
        \sum_{n \in \ZZ}
        \frac{q^{n^2/k} \,  q^{-j' n/k}}{1-y_v^{-1} \, q^{-n}} \lp \frac{y_u\, y_w^{1-2/k}}{y_v^{1-2/k}} \rp^{n}
        C^{j-1,k}_{-2n + j'}\bmat{1 \\ 1}(\tau;v-w),
\end{align}
where we changed the summation variable $n \to -n$  on the second line. Using
(\ref{eq:su2u1_prop1}), (\ref{eq:su2u1_prop3}),
and $\frac{1}{1-y_v^{-1} \, q^{-1}} = -\frac{y_v \, q^n}{1-y_v \, q^n}$, (e) follows.

\end{proof}

\begin{proposition}\label{prop:mu_algell2}
 \hfill
 \begin{description}
  \item [(a)] 
  \begin{align*}
\mu^{k,j,j'}(\tau;v,u,w) \, +& y_v \, y_u^{-1} \, q^{-1/2} \, \mu^{k,j,j'}(\tau;v,u+\tau,w) =  \notag \\
&q^{-j'^2/4k} \lp \frac{y_v}{y_u y_w^{1-2/k}} \rp^{j'/2} \chi_{j-1}^{(k-2)}(\tau; w+u-v)
\end{align*}

\item [(b)]
\begin{align*}
\mu^{k,j,j'}(\tau;v,u,w) \, +& \, y_w^{(k-2)/k} \, q^{(k-2)/2k} \, \mu^{k,j,j'+2}(\tau;v,u,w+\tau) =  \notag \\
&q^{-j'^2/4k} \lp \frac{y_v}{y_u y_w^{1-2/k}} \rp^{j'/2} \chi_{j-1}^{(k-2)}(\tau; w+u-v)
\end{align*}

\item [(c)]
\begin{align*}
\mu^{k,j,j'}(\tau;v,u,w) \, +& \, \mu^{k,k-j,k+j'}(\tau;v,u,w) =  \notag \\
&q^{-j'^2/4k} \lp \frac{y_v}{y_u y_w^{1-2/k}} \rp^{j'/2} \chi_{j-1}^{(k-2)}(\tau; w+u-v)
\end{align*}

 \end{description}

\end{proposition}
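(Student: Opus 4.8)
The plan is to reduce all three identities of Proposition~\ref{prop:mu_algell2} to a single computation. In each part the left‑hand side is $\mu^{k,j,j'}(\tau;v,u,w)$ plus a second $\mu$‑term, and the strategy is to rewrite that second term, using the appropriate transformation law, so that its defining series (in the form~\eqref{eq:mu_defn2}) has a summand equal to $-\,y_v q^{n}$ times the summand of $\mu^{k,j,j'}(\tau;v,u,w)$. Adding the two series then telescopes through $\frac{1}{1-y_v q^{n}}-\frac{y_v q^{n}}{1-y_v q^{n}}=1$, leaving the denominator‑free sum
\[
\frac{y_v^{j'/k}}{\theta_{11}(\tau;u)}\sum_{n\in\ZZ} q^{n^2/k}\,q^{j'n/k}\Bigl(\frac{y_u\,y_w^{1-2/k}}{y_v^{1-2/k}}\Bigr)^{n} C^{j-1,k}_{2n+j'}\bmat{1 \\ 1}(\tau;w-v),
\]
which I will evaluate at the end via the $SU(2)$ branching relation~\eqref{eq:su2_branch}.

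For part~(a), I would shift $u\mapsto u+\tau$ in~\eqref{eq:mu_defn2}: the factor $\bigl(y_u y_w^{1-2/k}/y_v^{1-2/k}\bigr)^{n}$ gains a $q^{n}$ since $y_{u+\tau}=q\,y_u$, while $\theta_{11}(\tau;u+\tau)=-q^{-1/2}y_u^{-1}\theta_{11}(\tau;u)$ by~\eqref{eq:theta_ell}; the prefactor $y_v y_u^{-1}q^{-1/2}$ is designed exactly to cancel the $-q^{-1/2}y_u^{-1}$, so the summand is replaced by $-\,y_v q^{n}$ times that of $\mu^{k,j,j'}(\tau;v,u,w)$. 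For part~(b), I would instead shift $w\mapsto w+\tau$: then $y_{w+\tau}^{1-2/k}=q^{1-2/k}y_w^{1-2/k}$ and, crucially, $C^{j-1,k}_{2n+j'+2}\bmat{1 \\ 1}(\tau;(w-v)+\tau)=-\,q^{-(k-2)/2k}(y_w/y_v)^{-(k-2)/k}\,C^{j-1,k}_{2n+j'}\bmat{1 \\ 1}(\tau;w-v)$ by~\eqref{eq:Cfnc_elltau} with $a=b=1$ --- this index drop by $2$ is exactly why the second $\mu$ must carry superscript $j'+2$ --- while the prefactor $y_w^{(k-2)/k}q^{(k-2)/2k}$ absorbs the extra powers; after collecting $q^{(j'+2)n/k}q^{(1-2/k)n}=q^{j'n/k}q^{n}$ and $y_v^{(j'+2)/k}y_v^{(k-2)/k}=y_v\,y_v^{j'/k}$, the summand of the second $\mu$ again reduces to $-\,y_v q^{n}$ times that of $\mu^{k,j,j'}(\tau;v,u,w)$. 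For part~(c), I would use the reflection identity~\eqref{eq:su2u1_prop3} in the form $C^{(k-j)-1,k}_{2n+k+j'}\bmat{1 \\ 1}(\tau;w-v)=-\,C^{j-1,k}_{2n+j'}\bmat{1 \\ 1}(\tau;w-v)$ --- which applies precisely under the substitution $j\mapsto k-j$, $j'\mapsto k+j'$ --- together with $y_v^{(k+j')/k}=y_v\,y_v^{j'/k}$ and $q^{(k+j')n/k}=q^{n}\,q^{j'n/k}$, again producing the $-\,y_v q^{n}$ ratio.

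It then remains to evaluate the denominator‑free sum. I would specialise the branching relation~\eqref{eq:su2_branch} with its free variable set to $u+(1-\tfrac2k)(w-v)$ (any integer‑equivalent choice works, as the relevant product is invariant under unit shifts of that variable when $j\equiv j'\pmod{2}$) and its elliptic variable set to $w-v$; then the arguments of the affine $SU(2)$ character and the theta function on the left of~\eqref{eq:su2_branch} collapse to $w+u-v$ and $u$ respectively, while on the right the theta decomposition --- using the $2k$‑periodicity~\eqref{eq:su2u1_prop2} and the vanishing rule~\eqref{eq:Cfnc_zero} for $C^{j-1,k}_r\bmat{1 \\ 1}$ --- reproduces the sum above exactly, up to the factor $q^{j'^2/4k}$ times the half‑integer power $\bigl(y_u y_w^{1-2/k}/y_v^{1-2/k}\bigr)^{j'/2}$. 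Collecting these factors (the identity $\tfrac{j'}{k}+\tfrac{k-2}{k}\cdot\tfrac{j'}{2}=\tfrac{j'}{2}$ is what reassembles the $y_v$‑powers into $y_v^{j'/2}$) yields exactly $q^{-j'^2/4k}\bigl(y_v/(y_u y_w^{1-2/k})\bigr)^{j'/2}\chi_{j-1}^{(k-2)}(\tau;w+u-v)$, the common right‑hand side of all three parts.

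The one genuinely delicate point --- and the step I would watch most carefully --- is the bookkeeping of the fractional powers of $y_v$, $y_u$, $y_w$ and of $q$: one must ensure that the branch conventions implicit in $y_v^{j'/k}$, $y_w^{1-2/k}$, etc.\ are used consistently throughout, and that plugging a non‑integer value into the free variable of~\eqref{eq:su2_branch} is legitimate. The latter is valid because the resulting expression is unchanged under unit shifts of that variable precisely when $j\equiv j'\pmod{2}$, so no spurious root of unity is introduced; under that same hypothesis all three identities are stated, and when $j\not\equiv j'\pmod{2}$ both sides vanish by~\eqref{eq:mufnc_zero}. Everything else is routine manipulation of absolutely convergent series, and setting $k=2$ recovers the elliptic identities for Zwegers' $\mu$‑function in~\cite{zwegers2008mock}, which provides a useful consistency check.
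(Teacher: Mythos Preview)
Your proposal is correct and follows essentially the same route as the paper: for each part you rewrite the second $\mu$-term so that its summand differs from that of $\mu^{k,j,j'}$ by a factor $-y_v q^n$, telescope via $\tfrac{1}{1-y_v q^n}-\tfrac{y_v q^n}{1-y_v q^n}=1$, and then identify the resulting denominator-free sum with $\chi_{j-1}^{(k-2)}(\tau;w+u-v)$ through the branching relation~\eqref{eq:su2_branch} at $(W,Z)=(u+(1-\tfrac2k)(w-v),\,w-v)$, exactly as the paper does. The only cosmetic difference is that the paper deduces~(c) as a corollary of~(a) (recognising $y_v y_u^{-1}q^{-1/2}\mu^{k,j,j'}(\tau;v,u+\tau,w)$ as $-\mu^{k,k-j,k+j'}$ via~\eqref{eq:su2u1_prop3}), whereas you verify~(c) directly; the underlying computation is identical.
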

\begin{proof}
 \hfill
 
 We start with (a) by writing $\mu^{k,j,j'}(\tau;v,u+\tau,w)$ as
\begin{equation}\label{eq:muplustau}
\frac{y_v^{j'/k}}{- q^{-1/2} y_u^{-1} \theta_{11}(\tau;u)}
        \sum_{n \in \ZZ}
        \frac{q^{n^2/k} q^{j' n/k}}{1-y_v q^{n}} \lp \frac{q y_u\, y_w^{1-2/k}}{y_v^{1-2/k}} \rp^n
        C^{j-1,k}_{2n + j'}\bmat{1 \\ 1}(\tau;w-v).
\end{equation}
Using $\frac{q^n}{1- y_v q^n} = y_v^{-1} \lp \frac{1}{1-y_v q^{n}} - 1\rp$ we can rewrite this as
\begin{equation}
  - \frac{q^{1/2} y_u}{y_v} \Big[ \mu^{k,j,j'}(\tau;v,u,w)  - 
\frac{y_v^{j'/k}}{ \theta_{11}(\tau;u)}
        \sum_{n \in \ZZ}
        q^{n^2/k} q^{j' n/k} \lp \frac{ y_u\, y_w^{1-2/k}}{y_v^{1-2/k}} \rp^n
        C^{j-1,k}_{2n + j'}\bmat{1 \\ 1}(\tau;w-v) \Big].
\end{equation}
This tells us that we can express the combination 
\begin{equation}
q^{j'^2/4k} \lp \frac{y_u y_w^{1-2/k}}{y_v} \rp^{j'/2}  \lb   \mu^{k,j,j'}(\tau;v,u,w) +  y_v y_u^{-1} q^{-1/2} \mu^{k,j,j'}(\tau;v,u+\tau,w) \rb 
\end{equation}
as
\begin{equation}
\frac{1}{ \theta_{11}(\tau;u)}
        \sum_{n \in \ZZ}
        q^{\lp n + j'/2 \rp^2/k} \lp \frac{ y_u\, y_w^{1-2/k}}{y_v^{1-2/k}} \rp^{n+j'/2}
        C^{j-1,k}_{2n + j'}\bmat{1 \\ 1}(\tau;w-v).
\end{equation}
Using (\ref{eq:su2u1_prop2}) and separating the sum over $n \in \ZZ$ as a sum over $s \in \ZZ$ and $p \in \ZZ_k$ by writing $n = ks +r$ we obtain
\begin{align}
\frac{1}{ \theta_{11}(\tau;u)}
       &\sum_{r \in \ZZ_k}  \sum_{s \in \ZZ}
        q^{k \lp s + \frac{2r+j'}{2k} \rp^2} \lp 
        \frac{ y_u\, y_w^{1-2/k}}{y_v^{1-2/k}} \rp^{k \lp s+\frac{2r+j'}{2k} \rp}
        C^{j-1,k}_{2r + j'}\bmat{1 \\ 1}(\tau;w-v). \\
        &= \frac{1}{ \theta_{11}(\tau;u)}
       \sum_{r \in \ZZ_k}  
       \vth_{k, 2r+j'}\lp \tau; \frac{1}{2} \lp u + (w-v)(1-2/k)  \rp \rp
        C^{j-1,k}_{2r + j'}\bmat{1 \\ 1}(\tau;w-v).
\end{align}
We can extend the sum over $\ZZ_k$ to a sum over $\ZZ_{2k}$ using (\ref{eq:Cfnc_zero}) as
\begin{equation}
\frac{1}{ \theta_{11}(\tau;u)}
       \sum_{r \in \ZZ_{2k}}  
       \vth_{k, r}\lp \tau; \frac{1}{2} \lp u + (w-v)(1-2/k)  \rp \rp
        C^{j-1,k}_{r}\bmat{1 \\ 1}(\tau;w-v).
\end{equation}
Finally the branching relation (\ref{eq:su2_branch}) tells us this is just $\chi_{j-1}^{(k-2)}(\tau; w+u-v)$ finishing the proof for (a).

Note that from (\ref{eq:chi_elltau} )and part (d) of proposition (\ref{prop:mu_algell1}) we find a similar identity for $v \to v+\tau$ transformation.  
 \begin{align*}
\mu^{k,j,j'}(\tau;v,u,w) +& y_u y_v^{-1} q^{-1/2} \mu^{k,j,j'}(\tau;v+\tau,u,w) =  \notag \\
&q^{-(k-j')^2/4k} \lp \frac{y_u y_w^{1-2/k}}{y_v} \rp^{(k-j')/2} \chi_{k-j-1}^{(k-2)}(\tau; w+u-v)
\end{align*}

For part (b) we use  ( \ref{eq:Cfnc_elltau}) and follow a proof very similar to that we had in part (a).

Finally, by part (a) 
\begin{equation}
-\mu^{k,j,j'}(\tau;v,u,w) + q^{-j'^2/4k} \lp \frac{y_v}{y_u y_w^{1-2/k}} \rp^{j'/2} \chi_{j-1}^{(k-2)}(\tau; w+u-v)
\end{equation}
is $y_v y_u^{-1} q^{-1/2} \mu^{k,j,j'}(\tau;v,u+\tau,w) $. By (\ref{eq:muplustau}) this combination is
\begin{equation}
- \frac{y_v^{(k+j')/k}}{\theta_{11}(\tau;u)}
        \sum_{n \in \ZZ}
        \frac{q^{n^2/k} q^{j' n/k} q^n }{1-y_v q^{n}} \lp \frac{y_u\, y_w^{1-2/k}}{y_v^{1-2/k}} \rp^n
        C^{j-1,k}_{2n + j'}\bmat{1 \\ 1}(\tau;w-v).
\end{equation}
Using (\ref{eq:su2u1_prop3}), we see that this is just $\mu^{k,k-j,k+j'}(\tau;v,u,w)$ proving part (c).

\end{proof}

We now give a lemma generalizing Proposition 1.4.7 of \cite{zwegers2008mock}. It will be an important ingredient both in working out the modular properties of the $\mu$ functions defined here and in the proof of the Riemann relations of Section \ref{sec:Riemann_reln}.

\begin{lemma}\label{lem:mu_z_shift}
 \begin{align}
 \mu^{k,j,j'}(\tau;v+z,&u+z,w) - \mu^{k,j,j'}(\tau;v,u,w)  \notag\\ 
 &= \frac{\i \eta(\tau)^3 \  C^{j-1,k}_{j'}\bmat{1 \\ 1}(\tau;w) \  \theta_{11}(\tau;z) \ \theta_{11}(\tau;v+u+z)}
 {\theta_{11}(\tau;v) \ \theta_{11}(\tau;u) \ \theta_{11}(\tau;v+z) \  \theta_{11}(\tau;u+z)},
\end{align}
for $u,v,u+z,v+z \notin \IZ \tau + \IZ$. 
\end{lemma}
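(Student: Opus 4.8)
The plan is to prove the identity by fixing $\tau$, $v$, $u$, $w$ and regarding both sides as meromorphic functions of the single variable $z$, then matching their elliptic transformation behaviour, their pole locations, and one residue. By analytic continuation in $v$ (or $u$) it suffices to establish the identity in the generic case $u-v\notin\IZ\tau+\IZ$, so that the two points $z\equiv -v$ and $z\equiv -u$ are distinct modulo $\IZ\tau+\IZ$; the remaining case follows by a limit. Write $f(z)$ for the left-hand side and $g(z)$ for the right-hand side. First I would check that $f$ is doubly periodic in $z$ with periods $1$ and $\tau$: invariance under $z\mapsto z+1$ comes from applying parts (a) and (b) of Proposition~\ref{prop:mu_algell1} in turn (two cancelling signs), and invariance under $z\mapsto z+\tau$ is precisely part (d) of Proposition~\ref{prop:mu_algell1}. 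Using only the elliptic transformation laws \eqref{eq:theta_ell} of $\theta_{11}$, one checks directly that $g$ is also doubly periodic: under $z\mapsto z+1$ numerator and denominator each pick up $(-1)^2$, and under $z\mapsto z+\tau$ both pick up the same factor $q^{-1}e^{-2\pi\i(v+u+2z)}$. Hence $f-g$ is an elliptic function of $z$.

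Next I would pin down the pole structure. By Proposition~\ref{prop:mu_analyticity}, $f$ has at worst simple poles in $z$ at $z\equiv -v$ and $z\equiv -u$ modulo $\IZ\tau+\IZ$, and $g$ manifestly has at worst simple poles at the same two points, arising from the zeros of $\theta_{11}(\tau;v+z)$ and $\theta_{11}(\tau;u+z)$ in its denominator. So within a fundamental parallelogram $f-g$ has at most these two simple poles. The key computation is to match the residues at $z=-v$. Setting $v'=v+z$, Proposition~\ref{prop:mu_analyticity} gives
\[
\Res_{z=-v} f(z) \= \Res_{v'=0}\mu^{k,j,j'}(\tau;v',u-v,w) \= -\frac{1}{2\pi\i}\,\frac{C^{j-1,k}_{j'}\bmat{1 \\ 1}(\tau;w)}{\theta_{11}(\tau;u-v)}\,.
\]
On the other side, using $\theta_{11}(\tau;v+z)=\theta_{11}'(\tau;0)\,(v+z)+O((v+z)^2)$ with $\theta_{11}'(\tau;0)=-2\pi\,\eta(\tau)^3$ (from $\tfrac{1}{2\pi\i}\theta_{11}'(\tau;0)=\i\eta(\tau)^3$), together with $\theta_{11}(\tau;-v)=-\theta_{11}(\tau;v)$, one finds
\[
\Res_{z=-v} g(z) \= \frac{\i\,\eta(\tau)^3\,C^{j-1,k}_{j'}\bmat{1 \\ 1}(\tau;w)\,\bigl(-\theta_{11}(\tau;v)\bigr)\,\theta_{11}(\tau;u)}{\theta_{11}(\tau;v)\,\theta_{11}(\tau;u)\,\bigl(-2\pi\eta(\tau)^3\bigr)\,\theta_{11}(\tau;u-v)} \= \frac{\i}{2\pi}\,\frac{C^{j-1,k}_{j'}\bmat{1 \\ 1}(\tau;w)}{\theta_{11}(\tau;u-v)}\,,
\]
which agrees with the $f$-side residue since $-\tfrac{1}{2\pi\i}=\tfrac{\i}{2\pi}$. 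Thus $f-g$ has vanishing residue at $z\equiv -v$.

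Finally I would conclude: $f-g$ is an elliptic function whose only possible pole in a fundamental parallelogram is the simple pole at $z\equiv -u$, but the sum of the residues of an elliptic function over a fundamental parallelogram is zero, so that residue vanishes as well. Hence $f-g$ is holomorphic and doubly periodic, therefore constant. Evaluating at $z=0$: $f(0)=0$ trivially, and $g(0)=0$ because $\theta_{11}(\tau;0)=0$ appears in the numerator of $g$. Therefore $f-g\equiv 0$, which is the claimed identity. The main obstacle one might expect — directly computing the residue at $z=-u$, which is considerably more involved than the one at $z=-v$ — is avoided entirely by the residue-sum argument; the only genuinely computational point is the bookkeeping of signs and factors of $2\pi\i$ in the residue at $z=-v$, which is exactly what is displayed above.
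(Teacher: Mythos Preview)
Your argument is correct and uses the same underlying ingredients as the paper's proof --- double periodicity of the $\mu$--difference in $z$ (from parts (a), (b), (d) of Proposition~\ref{prop:mu_algell1}), the pole structure from Proposition~\ref{prop:mu_analyticity}, and the residue computation at $z=-v$ --- but you package them a little differently. The paper forms the quotient
\[
\frac{\theta_{11}(\tau;v+z)\,\theta_{11}(\tau;u+z)\,\bigl[\mu^{k,j,j'}(\tau;v+z,u+z,w)-\mu^{k,j,j'}(\tau;v,u,w)\bigr]}{\theta_{11}(\tau;z)\,\theta_{11}(\tau;v+u+z)}
\]
and argues this is doubly periodic with at most one simple pole per fundamental parallelogram, hence constant; the residue at $z=-v$ then fixes that constant. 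Your route --- subtract the two sides, kill one residue explicitly, invoke the residue--sum theorem for the other, and evaluate at $z=0$ --- is a touch more direct and avoids constructing the auxiliary quotient, while the paper's version has the advantage of \emph{deriving} the form of the right-hand side rather than assuming it. Either way, the only real computation is the residue at $z=-v$, and your bookkeeping there is fine.
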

\begin{proof}
 Let us define
 \begin{equation}
f(z) \equiv \frac{ \theta_{11}(\tau;v+z) \  \theta_{11}(\tau;u+z) 
 \lb \mu^{k,j,j'}(\tau;v+z,u+z,w) - \mu^{k,j,j'}(\tau;v,u,w) \rb}
 {\theta_{11}(\tau;z) \ \theta_{11}(\tau;v+u+z)}.
 \end{equation}
 Using (\ref{eq:theta_ell}) and parts (a), (b) and (d) of proposition \ref{prop:mu_algell1} it is easy to check that
 \begin{equation}
 f(z+1) = f(z) \qquad  \mathrm{and} \qquad f(z+\tau) = f(z).
 \end{equation}
 In other words, $f(z)$ is a meromorphic function of $z$ which is doubly periodic. 
 
 In fact, parts (a), (b) and (d) of proposition \ref{prop:mu_algell1} tell us that $\lb \mu - \mu \rb$ term alone is a meromorphic function in $z$ which is doubly periodic. By proposition \ref{prop:mu_analyticity}
 \begin{equation}
 z \to \lb \mu^{k,j,j'}(\tau;v+z,u+z,w) - \mu^{k,j,j'}(\tau;v,u,w) \rb 
 \end{equation}
 part has simple poles at $z = -v + a_1 \tau + b_1$ and at $z = -u + a_2 \tau + b_2$ with $a_1,a_2,b_1,b_2 \in \ZZ$ (or double poles if $u = v \, (\mathrm{mod } \, \ZZ \tau + \ZZ)$ but we will assume that this is not the case since generalization of our arguments to this case is straightforward). Moreover, this $\lb \mu - \mu \rb$ combination has zeros on $z \in \ZZ \tau + \ZZ$.
 
 The theta function prefactor in $f(z)$ preserves the double periodicity and cancels the poles at 
 $-u+\ZZ \tau + \ZZ$ and at $-v+\ZZ \tau + \ZZ$. In return, the theta functions in the denominator have simple zeros at $\ZZ \tau + \ZZ$ and at $-u-v+\ZZ \tau + \ZZ$ which would be poles for $f(z)$.
 However, the zeros of the denominator at   $\ZZ \tau + \ZZ$ are canceled by the zeros of the $\lb \mu - \mu \rb$ term on the same set. In summary, $f(z)$ is a doubly periodic function with at most one simple pole per fundamental parallelogram (at $-u-v+\ZZ \tau + \ZZ$).  
 
 A doubly periodic function with at most one simple pole per fundamental parallelogram is  a constant. Therefore,
 \begin{equation}\label{eq:mudiff}
 \mu^{k,j,j'}(\tau;v+z,u+z,w) - \mu^{k,j,j'}(\tau;v,u,w) =
 \frac{\theta_{11}(\tau;z) \ \theta_{11}(\tau;v+u+z)}
 {\theta_{11}(\tau;v+z) \  \theta_{11}(\tau;u+z)}\,
 g(\tau; v,u,w),
 \end{equation}
 for some meromorphic function $g(\tau; v,u,w)$.
 
 By proposition \ref{prop:mu_analyticity} the residue of the left hand side of equation( \ref{eq:mudiff}) at $z=-v$  is
 \begin{equation}
 -\frac{1}{2 \pi \i} \frac{C^{j-1,k}_{j'}\bmat{1 \\ 1}(\tau;w)}{\theta_{11}(\tau;u-v)}.
 \end{equation}
  For the right hand side the residue at the same position is 
 \begin{equation}
 \frac{\theta_{11}(\tau;-v) \ \theta_{11}(\tau;u)}
 {\theta_{11}(\tau;u-v)}  \lp \frac{1}{2 \pi \i} \frac{1}{\i \eta(\tau)^3} \rp g(\tau; v,u,w).
 \end{equation}
 Equating both expressions gives $g(\tau; v,u,w)$ and completes the proof.
 \end{proof}

\begin{corollary}
 Taking $z = -u-v$ in Lemma \ref{lem:mu_z_shift} gives 
 \begin{equation}
  \mu^{k,j,j'}(\tau;-u,-v,w) = \mu^{k,j,j'}(\tau;v,u,w).
 \end{equation}
 Then, using part (e) of Proposition \ref{prop:mu_algell1} we get 
 \begin{equation}\label{eq:mu_Z2symm}
  \mu^{k,k-j,k-j'}(\tau;u,v,-w) = \mu^{k,j,j'}(\tau;v,u,w).
 \end{equation}
\end{corollary}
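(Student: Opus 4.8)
The plan is to verify the two displayed equations in turn, essentially by performing the substitution indicated in the statement and then relabeling variables; all the analytic content has already been supplied by Lemma~\ref{lem:mu_z_shift} and Proposition~\ref{prop:mu_algell1}.

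For the first identity, $\mu^{k,j,j'}(\tau;-u,-v,w) = \mu^{k,j,j'}(\tau;v,u,w)$, I would substitute $z=-u-v$ into the identity of Lemma~\ref{lem:mu_z_shift}. The hypotheses $u,v,u+z,v+z\notin\IZ\tau+\IZ$ then reduce to $u,v\notin\IZ\tau+\IZ$, since $u+z=-v$ and $v+z=-u$, so the Lemma applies on the full domain of $\mu^{k,j,j'}$ and no analytic continuation is needed. On the left-hand side, $v+z=-u$ and $u+z=-v$, so the difference becomes $\mu^{k,j,j'}(\tau;-u,-v,w)-\mu^{k,j,j'}(\tau;v,u,w)$. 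On the right-hand side, the numerator contains the factor $\theta_{11}(\tau;v+u+z)=\theta_{11}(\tau;0)=0$, while the denominator $\theta_{11}(\tau;v)\,\theta_{11}(\tau;u)\,\theta_{11}(\tau;-u)\,\theta_{11}(\tau;-v)$ is nonzero for $u,v\notin\IZ\tau+\IZ$; hence the right-hand side vanishes identically. Equating the two sides gives the claim.

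For the second identity, I would combine the identity just proved with Proposition~\ref{prop:mu_algell1}(e), namely $\mu^{k,j,j'}(\tau;-v,-u,-w)=\mu^{k,k-j,k-j'}(\tau;v,u,w)$. Applying the first identity after the replacement $(u,v,w)\mapsto(v,u,-w)$ yields $\mu^{k,j,j'}(\tau;-v,-u,-w)=\mu^{k,j,j'}(\tau;u,v,-w)$, so that $\mu^{k,j,j'}(\tau;u,v,-w)=\mu^{k,k-j,k-j'}(\tau;v,u,w)$. Finally, since $\mu^{k,j,j'}$ depends on $j,j'$ only through those labels, the substitution $j\mapsto k-j$, $j'\mapsto k-j'$ is legitimate and produces the stated form $\mu^{k,k-j,k-j'}(\tau;u,v,-w)=\mu^{k,j,j'}(\tau;v,u,w)$.

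There is essentially no obstacle here: the entire content lies in Lemma~\ref{lem:mu_z_shift} and Proposition~\ref{prop:mu_algell1}(e), and the only point requiring a moment's care is checking that the choice $z=-u-v$ keeps us inside the hypotheses of the Lemma, which it does. Everything else is bookkeeping with the arguments of $\mu^{k,j,j'}$.
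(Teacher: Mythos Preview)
Your proposal is correct and follows exactly the route indicated in the paper: substitute $z=-u-v$ into Lemma~\ref{lem:mu_z_shift} so that the factor $\theta_{11}(\tau;v+u+z)=\theta_{11}(\tau;0)$ kills the right-hand side, then combine with Proposition~\ref{prop:mu_algell1}(e). Your explicit verification that the hypotheses of the Lemma are met under this substitution is a nice touch the paper leaves implicit.
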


\subsection{Modular behavior of \texorpdfstring{$\mu$}{mu} functions}
The behavior of $\mu^{k,j,j'}$ under $\tau \to \tau +1$ is easy to work out from its definition.
\begin{proposition}\label{prop:mu_modT}
 \begin{equation}
\mu^{k,j,j'}(\tau+1;v,u,w)  =  
e^{- \pi \i /4} \, e^{2 \pi \i \lb j^2 - j'^2 \rb /4k} \, \mu^{k,j,j'}(\tau;v,u,w) 
\end{equation}
\end{proposition}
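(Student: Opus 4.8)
The plan is to transform the series representation \eqref{eq:mu_defn2} of $\mu^{k,j,j'}$ directly under $\tau\mapsto\tau+1$ and to track the phase picked up by each term of the sum over $n$. First I would note that the elliptic variables $v,u,w$ — and therefore $y_v,y_u,y_w$, the integer power $q^n$ in the denominator $1-y_v q^n$, and the monomial $\bigl(y_u y_w^{1-2/k}/y_v^{1-2/k}\bigr)^n$ — are unchanged. The prefactor contributes an overall $e^{-\pi i/4}$, since $\theta_{11}(\tau+1;u)=e^{\pi i/4}\,\theta_{11}(\tau;u)$ by \eqref{eq:theta11_modT}, so $1/\theta_{11}$ picks up $e^{-\pi i/4}$. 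The fractional powers of the nome pick up roots of unity, $q^{n^2/k}\mapsto e^{2\pi i n^2/k}q^{n^2/k}$ and $q^{j'n/k}\mapsto e^{2\pi i j'n/k}q^{j'n/k}$, for the branch fixed in Appendix~\ref{Conventions}. Finally, the $SU(2)/U(1)$ character, whose argument $w-v$ does not involve $\tau$, transforms by \eqref{eq:Cfnc_modT} with $l=j$ and $r=2n+j'$, contributing the factor $e^{\pi i(j^2-(2n+j')^2)/2k}$.

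The key step is then to combine the three $n$-dependent phase factors and observe that the $n$-dependence cancels:
\[
\frac{2n^2}{k}+\frac{2j'n}{k}+\frac{j^2-(2n+j')^2}{2k}=\frac{j^2-j'^2}{2k}.
\]
Thus every term of the series acquires the common phase $e^{\pi i(j^2-j'^2)/2k}$, which factors out of the sum; together with the $e^{-\pi i/4}$ from the prefactor and the trivial identity $\pi i(j^2-j'^2)/2k=2\pi i(j^2-j'^2)/4k$, this is exactly the asserted transformation law. (When $j\not\equiv j'\pmod 2$ both sides vanish identically, so there is nothing to check.)

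I do not expect any real obstacle here — the proof is pure bookkeeping. The one point that needs care is maintaining a consistent branch for the fractional powers $q^{n^2/k}$, $q^{j'n/k}$ and for the phases in \eqref{eq:theta11_modT} and \eqref{eq:Cfnc_modT}, but since those conventions are already pinned down in Appendix~\ref{Conventions}, the cancellation above is exact rather than merely up to a sign. The same computation also applies verbatim to $\mu^{(k,d)}$ and $\mu^Y$: these are finite linear combinations of the $\mu^{k,j,j'}$ with constant coefficients $\wh\Omega^{(k,d)}_{j,j'}$ and $\wh\Omega^Y_{j,j'}$, which by \eqref{eq:Omega_commutingT} vanish unless $j^2\equiv j'^2\pmod{2k}$, so the phase $e^{\pi i(j^2-j'^2)/2k}$ equals $1$ on the support and the net transformation is simply multiplication by $e^{-\pi i/4}$.
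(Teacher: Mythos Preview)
Your proof is correct and follows exactly the same approach as the paper: the paper's proof is the one-line remark that the transformation law follows from \eqref{eq:theta11_modT} and \eqref{eq:Cfnc_modT}, and you have simply spelled out the bookkeeping that makes this work, including the crucial cancellation of the $n$-dependent phases. Your additional remark about $\mu^{(k,d)}$ and $\mu^Y$ via \eqref{eq:Omega_commutingT} is also correct and anticipates part~(a) of Corollary~\ref{cor:mu_mod}.
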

\begin{proof}
 \hfill
 
 This easily follows from (\ref{eq:theta11_modT}) and (\ref{eq:Cfnc_modT}).
\end{proof}

The transformation under $\tau \to -1/\tau$, $v \to v/\tau$, $u \to u/\tau$ and 
$w \to w/\tau$ is not as simple and
we need to develop some tools first for this discussion.

\begin{definition}
 For $\tau \in \IH$ and $u \in \IC$ we define
 \begin{equation}
  h_{k,r}(\tau; u) \equiv \i \int\limits_{\IR} \dd x\  \frac{e^{2 \pi \i\, k \tau x^2 - 4 \pi k u x}}
  {1 - e^{2 \pi x - \pi \i r/k} } \, .
 \end{equation}
 We also define
 \begin{align}
  \wh{h}_{k,r}(\tau; u) &\equiv 2 \lb h_{k,r}(\tau;u/2) - h_{k,-r}(\tau;u/2) \rb, \notag \\
     &= \i \int\limits_{\IR} \dd x\ e^{\pi \i \, k \tau x^2/2} \, e^{- \pi k u x}
   \lp  \frac{1}{1 - e^{\pi x - \pi \i r/k} } - \frac{1}{1 - e^{\pi x + \pi \i r/k} } \rp.
 \end{align}
 for $r=1,\ldots,k-1$.
\end{definition}
\begin{remark}
For $r=1,2,\ldots,2k-1$, the definition here coincides with the definition of $h_r(u;\tau)$ in
 Proposition 3.3.6 of \cite{zwegers2008mock}. According to our definition we find
$h_{k,r} = h_{k,2k+r}$. Also, we note that $\wh{h}_{2,1}(\tau; u)$ is exactly $h(u;\tau)$ 
defined in Definition 1.1 of \cite{zwegers2008mock}.
 \end{remark}

\begin{proposition}\label{prop:hath_ell}
 $u \to \wh{h}_{k,r}(\tau,u)$ are entire functions satisfying the following properties:
 \begin{description}
  \item[(a)] $ \wh{h}_{k,r}(\tau; u)  + (-1)^{r+1} \,  \wh{h}_{k,r}(\tau; u+1) = \displaystyle\frac{2}{\sqrt{-\i \tau}} 
      \sum\limits_{p=1}^{k-1} \wh{{\CS}}^{(k)}_{r,p} \,  e^{\i \pi k (u+p/k)^2/2\tau}$.
  \item[(b)] $ \wh{h}_{k,r}(\tau; u)  + e^{-\pi \i k u - \pi \i k \tau/2} \  \wh{h}_{k,k-r}(\tau; u+\tau) =
  2\, e^{-\pi \i r u - \pi \i r^2 \tau/ 2k}$ for $r = 1, \ldots, k-1$.
 \item[(c)] $\wh{h}_{k,r}(\tau; u) = \wh{h}_{k,r}(\tau; -u)$.
\end{description}
\end{proposition}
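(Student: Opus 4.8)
I would argue directly from the integral representation
\[
\wh h_{k,r}(\tau;u)=\i\int_{\IR}\dd x\; e^{\pi\i k\tau x^{2}/2}\,e^{-\pi k u x}\,g_{r}(x),\qquad
g_{r}(x):=\frac{1}{1-e^{\pi x-\pi\i r/k}}-\frac{1}{1-e^{\pi x+\pi\i r/k}},
\]
generalizing Zwegers' treatment of the scalar function $h(u;\tau)$ in~\cite{zwegers2008mock} (recall $\wh h_{2,1}=h$). Since $1\le r\le k-1$, the denominators never vanish for $x\in\IR$, so $g_{r}$ is real-analytic and bounded on $\IR$ (decaying like $e^{-\pi|x|}$), and for fixed $\tau\in\IH$ the Gaussian factor $e^{-\pi k\tau_{2}x^{2}/2}$ dominates the at most exponential growth $|e^{-\pi k u x}|\le e^{\pi k|u||x|}$ locally uniformly in $u$; absolute and locally uniform convergence then gives entirety of $u\mapsto\wh h_{k,r}(\tau;u)$ by Morera's theorem. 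For (c) I would substitute $x\mapsto-x$ and use the elementary identity $\tfrac{1}{1-w^{-1}}=1-\tfrac{1}{1-w}$ twice to see that $g_{r}$ is even; since the Gaussian is even as well, $\wh h_{k,r}(\tau;-u)=\wh h_{k,r}(\tau;u)$.

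For (a) I would add the two terms and observe that the factor $F(x):=1+(-1)^{r+1}e^{-\pi k x}$ produced by the shift $u\to u+1$ telescopes against $g_{r}$: writing $z=e^{\pi x\mp\pi\i r/k}$ one has $F=1-z^{-k}$, hence $F/(1-z)=-\sum_{j=1}^{k}z^{-j}$, and combining the two pieces of $g_{r}$ gives the finite exponential sum
\[
F(x)\,g_{r}(x)=-2\i\sum_{j=1}^{k-1}\sin\!\tfrac{\pi j r}{k}\;e^{-j\pi x},
\]
the $j=k$ term dropping because $\sin\pi r=0$. Each of the resulting Gaussian integrals evaluates, after completing the square and shifting the contour (legitimate since the integrand is entire with Gaussian decay and $\Re(-\i k\tau/2)=k\tau_{2}/2>0$), to $\tfrac{\sqrt2}{\sqrt{-\i k\tau}}\,e^{\i\pi k(u+j/k)^{2}/2\tau}$, and using $\wh\CS^{(k)}_{r,p}=\sqrt{2/k}\,\sin(\pi r p/k)$ this is exactly the claimed right-hand side.

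For (b) I would use a contour shift. The key algebraic inputs are $g_{r}(x+\i)=-g_{k-r}(x)$ — a one-line check from $e^{\pi(x+\i)}=-e^{\pi x}$ and $e^{\pm\pi\i(k-r)/k}=-e^{\mp\pi\i r/k}$ — together with $e^{\pi\i k\tau(x+\i)^{2}/2}=e^{\pi\i k\tau x^{2}/2}\,e^{-\pi k\tau x}\,e^{-\pi\i k\tau/2}$. Letting $\Phi(x)$ denote the integrand of $\wh h_{k,r}(\tau;u)$, these identities show $\int_{\IR+\i}\Phi=-e^{-\pi\i k u-\pi\i k\tau/2}\,\tfrac1\i\,\wh h_{k,k-r}(\tau;u+\tau)$, so the left-hand side of (b) equals $\i\bigl(\int_{\IR}-\int_{\IR+\i}\bigr)\Phi$. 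Closing the contour around the strip $0<\Im x<1$ (the vertical sides at $\pm\infty$ vanish by Gaussian decay), this is $-2\pi$ times the sum of residues of $\Phi$ in the strip; because $1\le r\le k-1$ the unique pole there is the simple pole of $g_{r}$ at $x=\i r/k$, where $\Res_{x=\i r/k}(1-e^{\pi x-\pi\i r/k})^{-1}=-1/\pi$, giving $\Res_{x=\i r/k}\Phi=-\tfrac1\pi e^{-\pi\i r^{2}\tau/2k-\pi\i r u}$ and hence the asserted value $2\,e^{-\pi\i r u-\pi\i r^{2}\tau/2k}$.

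The routine parts are the Gaussian evaluations and the various $e^{\pi\i}$-bookkeeping; the step that needs genuine care is (b) — getting the orientation of the strip contour right, pinning down the exact phase in the identification $\int_{\IR+\i}\Phi\leftrightarrow\wh h_{k,k-r}(\tau;u+\tau)$ (including the sign in $g_{r}(x+\i)=-g_{k-r}(x)$), and confirming that $x=\i r/k$ really is the only pole of $\Phi$ inside $0<\Im x<1$.
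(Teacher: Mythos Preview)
Your proof is correct and follows essentially the same route as the paper: the finite geometric-series telescoping plus a Gaussian integral for (a), the contour shift to $\IR+\i$ and a single residue at $x=\i r/k$ for (b), and the substitution $x\to-x$ together with $\tfrac{1}{1-w^{-1}}=1-\tfrac{1}{1-w}$ for (c). The only cosmetic wrinkle is that you call $\Phi$ ``the integrand of $\wh h_{k,r}$'' but then write the left side of (b) as $\i(\int_\IR-\int_{\IR+\i})\Phi$, so your $\Phi$ actually omits the overall factor $\i$; once that convention is read consistently, all your signs and phases match the paper's computation exactly.
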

\begin{proof} \hfill

\begin{description}
\item[(a)] We write $\wh{h}_{k,r}(\tau; u)  + (-1)^{r+1} \,  \wh{h}_{k,r}(\tau; u+1)$ as
\begin{equation}
\i \int\limits_{\IR} \dd x\ e^{\pi \i \, k \tau x^2/2} \, e^{- \pi k u x} \lp 1+ (-1)^{r+1}  e^{- \pi k x} \rp \lp  \frac{1}{1 - e^{\pi x - \pi \i r/k} } - \frac{1}{1 - e^{\pi x + \pi \i r/k} } \rp.
\end{equation}
Note that 
\begin{equation}
\frac{1+ (-1)^{r+1}  e^{- \pi k x}}{1 - e^{\pi x - \pi \i r/k}}
	 = - e^{- \pi x + \pi \i r/k} \, \frac{1- \lp  e^{-\pi x +\pi \i r/k} \rp^k}{1 - e^{-\pi x +\pi \i r/k}}
	 = - \sum_{p=1}^k  \lp  e^{-\pi x +\pi \i r/k} \rp^p
\end{equation}
and similarly
\begin{equation}
\frac{1+ (-1)^{r+1}  e^{- \pi k x}}{1 - e^{\pi x + \pi \i r/k}}
	= - \sum_{p=1}^k  \lp  e^{-\pi x -\pi \i r/k} \rp^p.
\end{equation}
$p=k$ terms cancel each other and for $\wh{h}_{k,r}(\tau; u)  + (-1)^{r+1} \,  \wh{h}_{k,r}(\tau; u+1) $ we eventually get
\begin{align}
		\i \int\limits_{\IR} \dd x\ e^{\pi \i \, k \tau x^2/2} \, 
		e^{- \pi k u x} \, &\sum_{p=1}^{k-1} 
		 e^{-\pi x p} \lp  e^{-\pi \i r p /k} - e^{\pi \i r p /k} \rp \\ &= 
		 2  \sum_{p=1}^{k-1} \sin \frac{\pi r p}{k} 
		  \int\limits_{\IR} \dd x\ e^{\pi \i \, k \tau x^2/2} e^{- \pi k (u+p/k) x}.
\end{align}
Now we can easily compute the integral to find our result.

\item[(b)] We start by noting that
\begin{equation}
- \i \int\limits_{\IR + \i} \dd x\ e^{\pi \i \, k \tau x^2/2} \, e^{- \pi k u x}
   \lp  \frac{1}{1 - e^{\pi x - \pi \i r/k} } - \frac{1}{1 - e^{\pi x + \pi \i r/k} } \rp
\end{equation}
is simply 
\begin{equation}
e^{-\pi \i k u - \pi \i k \tau/2} \  \wh{h}_{k,k-r}(\tau; u+\tau).
\end{equation}
Using this, we can express $\wh{h}_{k,r}(\tau; u)  + e^{-\pi \i k u - \pi \i k \tau/2} \  \wh{h}_{k,k-r}(\tau; u+\tau) $ as
\begin{equation}
 \i \lp\int \limits_{\IR} -  \int\limits_{\IR + \i} \rp \dd x \ e^{\pi \i \, k \tau x^2/2} \, 
 e^{- \pi k u x}
   \lp  \frac{1}{1 - e^{\pi x - \pi \i r/k} } - \frac{1}{1 - e^{\pi x + \pi \i r/k} } \rp.
\end{equation}
This integral can be easily computed as
\begin{equation}
2 \pi \i \,   \underset{x = \i r / k}{\mathrm{Res}} 
\lp \frac{\i \, e^{\pi \i \, k \tau x^2/2}\, e^{- \pi k u x}}{1 - e^{\pi x - \pi \i r/k} } \rp = 
2\, e^{-\pi \i r u - \pi \i r^2 \tau/ 2k}. 
\end{equation}

\item[(c)] Changing the integration variable $x \to -x$ we get
\begin{equation}\label{eq:B10c}
  \wh{h}_{k,r}(\tau; -u) 
     = \i \int\limits_{\IR} \dd x\ e^{\pi \i \, k \tau x^2/2} \, e^{- \pi k u x}
   \lp  \frac{1}{1 - e^{-\pi x - \pi \i r/k} } - \frac{1}{1 - e^{-\pi x + \pi \i r/k} } \rp.
\end{equation}
Using
\begin{equation}
\lp \frac{1}{1 - e^{-\pi x - \pi \i r/k} } -1 \rp + \lp 1 - \frac{1}{1 - e^{-\pi x + \pi \i r/k} }  \rp
 = - \frac{1}{1 - e^{\pi x +  \pi \i r/k} } + \frac{1}{1 - e^{\pi x - \pi \i r/k} } 
\end{equation}
gives us $\wh{h}_{k,r}(\tau; u)$ from equation (\ref{eq:B10c}).
\end{description}
\end{proof}

\begin{remark}
$u \to \wh{h}_{k,r}(\tau,u)$ functions are the only entire functions satisfying parts (a) and (b) of proposition \ref{prop:hath_ell}. To see this suppose there are two entire functions, 
$\wh{h}^{(1)}_{k,r}(\tau; u)$ and $\wh{h}^{(2)}_{k,r}(\tau; u)$, satisfying these two properties. Then their difference 
$\wh{f}_{k,r}(\tau; u) \equiv \wh{h}^{(1)}_{k,r}(\tau; u) - \wh{h}^{(2)}_{k,r}(\tau; u)$
satisfies
\begin{equation}
\wh{f}_{k,r}(\tau; u)  - (-1)^{r} \,  \wh{f}_{k,r}(\tau; u+1) =0
\quad \mathrm{and} \quad
\wh{f}_{k,r}(\tau; u)  + e^{-\pi \i k u - \pi \i k \tau/2} \  \wh{f}_{k,k-r}(\tau; u+\tau) = 0
\end{equation}
 for $r = 1, \ldots, k-1$. The second equation then gives
 \begin{equation}
 \wh{f}_{k,r}(\tau; u)  - e^{-2\pi \i k u - 2\pi \i k \tau} \  \wh{f}_{k,r}(\tau; u+2\tau) = 0.
 \end{equation}
 By considering $\wh{f}_{k,r}(\tau; u + m + 2n\tau)$ with integer $m,n$ and restricting u to a parallelogram formed by $1$ and $2\tau$ we see that $u \to \wh{f}_{k,r}(\tau; u)$ is a bounded function which goes to zero as $u \to i \infty$.  Liouville's theorem implies then  $\wh{f}_{k,r}(\tau; u)=0$.
\end{remark}

\begin{definition}
 For $u \in \IC$ and $\tau \in \IH$, \cite{zwegers2008mock} defines
 \begin{equation}
  R_{k,r}(\tau;u) = \sum_{n \,\equiv\, r\, \mathrm{mod} 2k} \lb \mathrm{sgn} \lp n+\frac{1}{2} \rp - 
  E \lp (n + 2 k u_2/\tau_2) \sqrt{\tau_2/k} \rp    \rb e^{- \pi \i n^2 \tau /2k - 2 \pi \i n u},
 \end{equation}
 where
 \begin{equation}
 E(z) \equiv 2 \int\limits_0^z \dd u \, e^{-\pi u^2}
 \end{equation}
 We also define
 \begin{equation}
  \wh{R}_{k,r}(\tau; u) \equiv  R_{k,r}(\tau;u/2) - R_{k,-r}(\tau;u/2).
 \end{equation}
 More explicitly, $  \wh{R}_{k,r}(\tau; u)$ is
 \begin{equation}\label{eq:Rhat_defn}
 \lp \sum_{n \in r + 2k \ZZ} -  \sum_{n \in -r + 2k \ZZ} \rp
  \lb \mathrm{sgn} \lp n+\frac{1}{2} \rp - 
  E \lp (n +  k u_2/\tau_2) \sqrt{\tau_2/k} \rp    \rb e^{- \pi \i n^2 \tau /2k - \pi \i n u}.
 \end{equation}
\end{definition}
\begin{remark}
For $r = 1, \ldots, k-1$ we can write $ \wh{R}_{k,r}(\tau; 0)$ as
\begin{equation}
 \wh{R}_{k,r}(\tau; 0) = 2 \sum_{n \in r + 2k \ZZ} 
  \lb \mathrm{sgn} \lp n \rp - 
  E \lp n \sqrt{\tau_2/k} \rp    \rb e^{- \pi \i n^2 \tau /2k}.
\end{equation}
Since $E(z) = \mathrm{sgn} \lp z \rp \lb 1 -  \Gamma \lp \frac{1}{2}, \pi z^2  \rp /\sqrt{\pi}\rb$, we see that
\begin{equation}\label{eq:Rhat_Eichler_Int}
\wh{R}_{k,r}(\tau; 0) = \lp \pi k \rp^{-1/2} S^*_{k,r}(\tau),
\end{equation}
where $S^*_{k,r}(\tau)$ is the non-holomorphic Eichler integral 
of $S_{k,r}(\tau)$, which solves the equation
\begin{equation}
\lp 4 \pi \tau_2 \rp^{1/2}  \frac{\partial S^*_{k,r}(\tau)}{\partial \bar{\tau}}  = - 2 \pi \i\, \overline{S_{k,r}(\tau)}.
\end{equation}
\end{remark}

 It is easy to see from this definition that 
\begin{itemize}
\item $R_{k,r}(\tau; u) = R_{k,r+2k}(\tau; u)$ and  $\wh{R}_{k,r}(\tau; u) = \wh{R}_{k,r+2k}(\tau; u)$,
\item $\wh{R}_{k,r}(\tau; u) = -\, \wh{R}_{k,-r}(\tau; u)$,
\item For $r = 0\, (\mathrm{mod } k)$, $\wh{R}_{k,r}(\tau; u) = 0$.
\end{itemize}

\begin{proposition}\label{prop:Rhat_ell}
For $r = 1, \ldots, k-1$
\begin{description}
  \item[(a)] $\wh{R}_{k,r}(\tau; u) = \wh{R}_{k,r}(\tau; -u)$.
  \item[(b)] $ \wh{R}_{k,r}(\tau; u)  + (-1)^{r+1} \,  \wh{R}_{k,r}(\tau; u+1) = 0$.
  \item[(c)] $ \wh{R}_{k,r}(\tau; u)  + e^{-\pi \i k u - \pi \i k \tau/2} \  \wh{R}_{k,k-r}(\tau; u+\tau) =
  2\, e^{-\pi \i r u - \pi \i r^2 \tau/ 2k}$.
\end{description}
\end{proposition}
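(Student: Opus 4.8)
The plan is to establish all three parts by direct manipulation of the defining series \eqref{eq:Rhat_defn}. Write
\begin{equation*}
\wh{R}_{k,r}(\tau;u) \= \lp \sum_{n \in r + 2k\ZZ} - \sum_{n \in -r + 2k\ZZ} \rp \beta_{\tau,u}(n), \qquad \beta_{\tau,u}(n) \= \lb \mathrm{sgn}\lp n+\tfrac{1}{2}\rp - E\lp \lp n + k u_2/\tau_2\rp\sqrt{\tau_2/k}\rp \rb e^{-\pi \i n^2\tau/2k - \pi\i n u}.
\end{equation*}
The one fact used throughout is that the summation index $n$ ranges over $\pm r + 2k\ZZ$ with $1 \le r \le k-1$, so $n$ is never $0$ and $n \equiv r \pmod{2}$ for every term; all series converge absolutely for $\tau\in\IH$, so the rearrangements below are legitimate and there are no analytic subtleties.

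For part (a) I would substitute $n \to -n$ in $\wh{R}_{k,r}(\tau;-u)$: this swaps the two residue sums (an overall sign), replaces $\mathrm{sgn}(n+\tfrac{1}{2})$ by $\mathrm{sgn}(-n+\tfrac{1}{2}) = -\mathrm{sgn}(n+\tfrac{1}{2})$ (valid because $n\neq 0$), and, since $E$ is odd, replaces the $E$-term by $-E\lp\lp n + k u_2/\tau_2\rp\sqrt{\tau_2/k}\rp$ against the same exponential; the three sign changes cancel and invariance follows. For part (b), the shift $u\to u+1$ multiplies $\beta_{\tau,u}(n)$ by $e^{-\pi\i n} = (-1)^n = (-1)^r$ and leaves $u_2$ unchanged, so $\wh{R}_{k,r}(\tau;u+1) = (-1)^r\,\wh{R}_{k,r}(\tau;u)$, which is precisely the asserted identity.

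Part (c) is the only substantial computation, and it follows the pattern of part (b) of Proposition \ref{prop:hath_ell}. Under $u \to u+\tau$ I would complete the square $n^2 + 2kn = (n+k)^2 - k^2$ and substitute $m = n+k$; the crucial point is that the argument of $E$ then becomes $\lp m + k u_2/\tau_2\rp\sqrt{\tau_2/k}$, exactly the one occurring in $\beta_{\tau,u}(m)$, so the $E$-contributions cancel and
\begin{equation*}
\beta_{\tau,u}(m) - e^{-\pi\i k\tau/2 - \pi\i k u}\, \beta_{\tau,u+\tau}(m-k) \= \lb \mathrm{sgn}\lp m+\tfrac{1}{2}\rp - \mathrm{sgn}\lp m-k+\tfrac{1}{2}\rp \rb e^{-\pi\i m^2\tau/2k - \pi\i m u}.
\end{equation*}
Summing over $m \in \pm r + 2k\ZZ$ and using $r-k \equiv -(k-r)$ and $-r-k\equiv k-r \pmod{2k}$, the shifted-$\beta$ piece reassembles as $-\,e^{-\pi\i k\tau/2 - \pi\i k u}\,\wh{R}_{k,k-r}(\tau;u+\tau)$ (here $1\le k-r\le k-1$, so $\wh{R}_{k,k-r}$ is defined). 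Meanwhile $\mathrm{sgn}(m+\tfrac{1}{2}) - \mathrm{sgn}(m-k+\tfrac{1}{2})$ equals $2$ exactly for $1\le m \le k-1$ and vanishes for all other $m\neq 0$, so the residual finite sum picks up only the term $m = r$ from the $r+2k\ZZ$ class — and nothing from $-r+2k\ZZ$, whose smallest positive element is $2k-r > k-1$ — producing $2\,e^{-\pi\i r u - \pi\i r^2\tau/2k}$. Rearranging yields (c). I expect the only real obstacle to be the bookkeeping in this last step: keeping the substitution $m=n+k$ consistent with the residue classes modulo $2k$ and correctly isolating the finitely many nonzero terms of the $\mathrm{sgn}$-difference sum. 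Everything else is routine, and in particular one could instead simply quote the level-$k$ form of the elliptic transformation of Zwegers' $R$-function, of which this is a repackaging.
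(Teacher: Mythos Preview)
Your proof is correct and follows essentially the same approach as the paper: part (a) via the substitution $n\to -n$, part (b) from the phase $e^{-\pi\i n}=(-1)^r$, and part (c) by shifting the summation index by $k$ so that the $E$-terms cancel and the residual $\mathrm{sgn}$-difference sum collapses to the single term $m=r$. The bookkeeping you flagged as the potential obstacle is handled exactly as in the paper, and your slightly more explicit tracking of the residue classes $r-k\equiv -(k-r)$, $-r-k\equiv k-r\pmod{2k}$ makes the argument cleaner.
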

\begin{proof} \hfill
\begin{description}
\item[(a)] This property follows by changing the dummy summation variable $n \to -n$. 

\item[(b)] The only term in the equation (\ref{eq:Rhat_defn}) that is affected by the $u \to u+1$ transformation is $e^{- \pi \i n u}$, which produces a factor of 
$e^{- \pi \i (\pm r + 2k\ZZ)} = (-1)^r$.

\item[(c)] We write $ e^{-\pi \i k u - \pi \i k \tau/2} \  \wh{R}_{k,k-r}(\tau; u+\tau)$ as
\begin{align}
 \lp \sum_{n \in k - r + 2k \ZZ} -  \sum_{n \in r + k + 2k \ZZ} \rp
  \Big[ &\mathrm{sgn} \lp n+\frac{1}{2} \rp - 
  E \lp (n +  k (u_2+\tau_2)/\tau_2) \sqrt{\tau_2/k} \rp    \Big]  \notag \\
  &\times e^{-\pi \i k u - \pi \i k \tau/2}  e^{- \pi \i n^2 \tau /2k - \pi \i n (u +\tau)}.
 \end{align}
 Shifting the dummy summation variable as $n \to n-k$ we get
 \begin{equation}
  \lp \sum_{n \in - r + 2k \ZZ} -  \sum_{n \in r + 2k \ZZ} \rp
  \lb \mathrm{sgn} \lp n-k+\frac{1}{2} \rp - 
  E \lp (n +  \frac{k u_2}{\tau_2}) \sqrt{\frac{\tau_2}{k}} \rp    \rb e^{- \frac{\pi \i n^2 \tau}{2k} - \pi \i n u}.
 \end{equation}
 We can write this expression as
 \begin{equation}
 - \wh{R}_{k,r}(\tau; u) + \lp \sum_{n \in r + 2k \ZZ} -  \sum_{n \in - r + 2k \ZZ} \rp
 				\lb \mathrm{sgn}\lp n+\frac{1}{2} \rp -  \mathrm{sgn} \lp n-k+\frac{1}{2} \rp \rb
 				e^{- \frac{\pi \i n^2 \tau}{2k} - \pi \i n u}.
 \end{equation}
 $\lb \mathrm{sgn}( n+\frac{1}{2} ) -  \mathrm{sgn} \lp n-k+\frac{1}{2} \rp \rb$ term is only nonzero (and is equal to $2$) for integers in the interval $-\frac{1}{2} < n < k - \frac{1}{2}$ and hence it picks the $n=+r$ term in the sum giving us the desired result.
\end{description}
\end{proof}

\begin{proposition} \label{prop:Rhat_mod}
For $r = 1, \ldots, k-1$
\begin{description}
 \item[(a)] $ \wh{R}_{k,r}(\tau+1; u) = e^{- \pi \i r^2 /2k} \, \wh{R}_{k,r}(\tau+1; u)$.
 \item[(b)] $\wh{R}_{k,r}(\tau; u) + \displaystyle
 \frac{ e^{\pi \i k u^2/2\tau}}{\sqrt{-\i \tau}} \, 
             \sum_{p=1}^{k-1} \wh{{\CS}}^{(k)}_{r,p} \, \wh{R}_{k,p}\lp -\frac{1}{\tau}; \frac{u}{\tau} \rp
             = \wh{h}_{k,r}(\tau; u).$
\end{description}
\end{proposition}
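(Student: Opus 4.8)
This is immediate from the series \eqref{eq:Rhat_defn}. Under $\tau\mapsto\tau+1$ the imaginary part $\tau_2$ — and hence the $\mathrm{sgn}$ and $E$ factors — is unchanged, while $e^{-\pi\i n^2\tau/2k}$ acquires $e^{-\pi\i n^2/2k}$; since every $n$ occurring in the two sums satisfies $n\equiv\pm r\pmod{2k}$, one has $n^2\equiv r^2\pmod{4k}$, so $e^{-\pi\i n^2/2k}=e^{-\pi\i r^2/2k}$ is a common factor that pulls out of the sum, giving $\wh R_{k,r}(\tau+1;u)=e^{-\pi\i r^2/2k}\,\wh R_{k,r}(\tau;u)$ (the argument on the right in the statement should read $\tau$, not $\tau+1$).

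\textbf{Part (b).} Write $G_{k,r}(\tau;u)$ for the left-hand side, so the claim is $G_{k,r}=\wh h_{k,r}$. The plan is to use the uniqueness noted in the remark following Proposition \ref{prop:hath_ell}: since the $\wh h_{k,r}$, $r=1,\dots,k-1$, are the only entire functions of $u$ obeying properties (a) and (b) of that proposition, it suffices to show (i) that each $u\mapsto G_{k,r}(\tau;u)$ is entire, and (ii) that the collection $\{G_{k,r}\}$ satisfies exactly Proposition \ref{prop:hath_ell}(a) and (b).

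Step (ii) is the routine bookkeeping. For the shift $u\mapsto u+1$ I would act on the $\wh R_{k,r}(\tau;u)$ summand of $G_{k,r}$ with Proposition \ref{prop:Rhat_ell}(b), and on the $\wh R_{k,p}(-1/\tau;u/\tau)$ summand observe that $u/\tau\mapsto u/\tau+1/\tau=(u/\tau)-\sigma$ with $\sigma=-1/\tau$, which is governed by Proposition \ref{prop:Rhat_ell}(c) at modular parameter $\sigma$; combining this with the change of the prefactor $e^{\pi\i k u^2/2\tau}$ and using the symmetry $\wh\CS^{(k)}_{r,k-p}=(-1)^{r+1}\wh\CS^{(k)}_{r,p}$ (applied twice) together with $\wh R_{k,-p}=-\wh R_{k,p}$, the $\wh R$-contributions reassemble into $-(-1)^{r+1}G_{k,r}(\tau;u+1)$ and the remaining pure-Gaussian terms combine, after completing a square in the exponent, into $\tfrac{2}{\sqrt{-\i\tau}}\sum_{p=1}^{k-1}\wh\CS^{(k)}_{r,p}\,e^{\i\pi k(u+p/k)^2/2\tau}$, which is Proposition \ref{prop:hath_ell}(a). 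For $u\mapsto u+\tau$ the two roles interchange: Proposition \ref{prop:Rhat_ell}(c) (which couples $r$ to $k-r$, matching the coupling in Proposition \ref{prop:hath_ell}(b)) acts on the first summand and Proposition \ref{prop:Rhat_ell}(b) on the second, yielding $G_{k,r}(\tau;u)+e^{-\pi\i ku-\pi\i k\tau/2}G_{k,k-r}(\tau;u+\tau)=2e^{-\pi\i ru-\pi\i r^2\tau/2k}$.

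Step (i) is where the real work lies, and I expect it to be the main obstacle. Since $\wh R_{k,r}$ is smooth and nonsingular, entirety of $G_{k,r}$ is equivalent to $\partial_{\bar u}G_{k,r}\equiv0$. In \eqref{eq:Rhat_defn} only the error function $E$ depends on $\bar u$ (through $u_2$); differentiating it and completing the square in the exponent gives
\[
\partial_{\bar u}\,\wh R_{k,r}(\tau;u)=-\,\i\,\sqrt{k/\tau_2}\;e^{-\pi k u_2^2/\tau_2}\;\overline{\wh\vth_{k,r}(\tau;u/2)}\,,
\]
with $\wh\vth_{k,r}$ as in \eqref{defthhat}. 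Applying this same formula at modular argument $-1/\tau$ to the second summand of $G_{k,r}$ (with $v=u/\tau$, so that $\partial_{\bar u}\bar v=1/\bar\tau$ and $\mathrm{Im}(-1/\tau)=\tau_2/|\tau|^2$), then rewriting $\wh\vth_{k,p}(-1/\tau;v/2)=-\i\sqrt{-\i\tau}\,e^{\pi\i k u^2/2\tau}\sum_{r'=1}^{k-1}\wh\CS^{(k)}_{p,r'}\,\wh\vth_{k,r'}(\tau;u/2)$ by the $S$-transformation \eqref{eq:theta_kr_modS}, and collapsing the resulting double sum via the unitarity $\sum_{p=1}^{k-1}\wh\CS^{(k)}_{r,p}\,\wh\CS^{(k)}_{p,r'}=\delta_{r,r'}$, one finds that $\partial_{\bar u}$ of the second summand equals $+\,\i\,\sqrt{k/\tau_2}\,e^{-\pi k u_2^2/\tau_2}\,\overline{\wh\vth_{k,r}(\tau;u/2)}$, which cancels $\partial_{\bar u}\wh R_{k,r}(\tau;u)$. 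The delicate point inside this step is matching the branch of $\sqrt{-\i\tau}$ against the conjugated factors $\overline{\sqrt{-\i\tau}}=\sqrt{\i\bar\tau}$ and the two Gaussians; the exponents reconcile because of the elementary identity $\tau_2\,\mathrm{Im}(u^2\bar\tau)+(\mathrm{Im}\,u\bar\tau)^2=|\tau|^2 u_2^2$. Once (i) and (ii) are established, the uniqueness remark after Proposition \ref{prop:hath_ell} forces $G_{k,r}=\wh h_{k,r}$ for all $r$, which is the claim.
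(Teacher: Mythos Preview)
Your proof is correct and follows essentially the same strategy as the paper: you define the left-hand side $G_{k,r}=\wt h_{k,r}$, verify that it satisfies the two functional equations of Proposition~\ref{prop:hath_ell}(a),(b) (using Proposition~\ref{prop:Rhat_ell}(b),(c) and the symmetry $\wh\CS^{(k)}_{r,k-p}=(-1)^{r+1}\wh\CS^{(k)}_{r,p}$), show $\partial_{\bar u}G_{k,r}=0$ via the explicit derivative of $\wh R_{k,r}$ together with the $S$-transformation \eqref{eq:theta_kr_modS} and unitarity of $\wh\CS^{(k)}$, and then invoke the uniqueness remark after Proposition~\ref{prop:hath_ell}. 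You also correctly flag the typo in part~(a): the right-hand side should read $\wh R_{k,r}(\tau;u)$.
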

\begin{proof}\hfill

\noindent (a) In equation (\ref{eq:Rhat_defn}), $\tau \to \tau+1$ transformation only affects the
 $e^{- \pi \i n^2 \tau /2k }$ term which produces a factor of 
$e^{- \pi \i (\pm r + 2k \ZZ)^2 /2k } = e^{- \pi \i r^2 /2k }$.

\noindent (b) Let us define $\wt{h}_{k,r}(\tau; u)$ as 
\begin{equation}
\wt{h}_{k,r}(\tau; u) \equiv
\wh{R}_{k,r}(\tau; u) + \frac{e^{\pi \i k u^2/2\tau}}{\sqrt{-\i \tau}}\, 
             \sum_{p=1}^{k-1} \wh{{\CS}}^{(k)}_{r,p} \, 
             \wh{R}_{k,p}\lp -\frac{1}{\tau}; \frac{u}{\tau} \rp.
\end{equation}
Our proof will start by showing that $\wt{h}_{k,r}(\tau; u)$ has the same behavior under $u \to u+1$ and $u \to u+\tau$ as $\wh{h}_{k,r}(\tau; u)$ (see parts (a) and (b) of proposition \ref{prop:hath_ell}). Since $u \to \wh{h}_{k,r}(\tau; u)$ are the unique entire functions having this behavior we will finish our proof by showing that $u \to \wt{h}_{k,r}(\tau; u)$ are entire functions.

We start with $ \wt{h}_{k,r}(\tau; u)  + e^{-\pi \i k u - \pi \i k \tau/2} \  \wt{h}_{k,k-r}(\tau; u+\tau) $.
There are two contributions to this object
which are
\begin{equation}
\wh{R}_{k,r}(\tau; u)  + e^{-\pi \i k u - \pi \i k \tau/2} \  \wh{R}_{k,k-r}(\tau; u+\tau)
\end{equation}
and
\begin{equation}
 \frac{e^{\pi \i k u^2/2\tau}}{\sqrt{-\i \tau}}\,  \sum_{p=1}^{k-1} \wh{{\CS}}^{(k)}_{r,p} \, 
             \wh{R}_{k,p}\lp -\frac{1}{\tau}; \frac{u}{\tau} \rp
   + e^{-\pi \i k u - \pi \i k \tau/2} \, \frac{e^{\pi \i k (u+\tau)^2/2\tau}}{\sqrt{-\i \tau} }
   	\,  \sum_{p=1}^{k-1} \wh{{\CS}}^{(k)}_{k-r,p} \, 
   	\wh{R}_{k,p}\lp -\frac{1}{\tau}; \frac{u+\tau}{\tau} \rp.
\end{equation}
The first factor gives $  2\, e^{-\pi \i r u - \pi \i r^2 \tau/ 2k}$ by part (c) of proposition \ref{prop:Rhat_ell} and the second one is zero by 
$\wh{{\CS}}^{(k)}_{k-r,p} = (-1)^{p+1} \, \wh{{\CS}}^{(k)}_{r,p} $ and using part (b) of proposition \ref{prop:Rhat_ell}.

Similarly, we can separate 
$\wt{h}_{k,r}(\tau; u)  + (-1)^{r+1} \,  \wt{h}_{k,r}(\tau; u+1)$
into the sum of two factors. The first one is $ \wh{R}_{k,r}(\tau; u)  + (-1)^{r+1} \,  \wh{R}_{k,r}(\tau; u+1)$ which is zero by part (b) of proposition \ref{prop:Rhat_ell}.
The second contribution is 
\begin{equation}
\frac{e^{\pi \i k u^2/2\tau}}{\sqrt{-\i \tau}}\,  \sum_{p=1}^{k-1} \wh{{\CS}}^{(k)}_{r,p} \, 
             \wh{R}_{k,p}\lp -\frac{1}{\tau}; \frac{u}{\tau} \rp
   + (-1)^{r+1} \, \frac{e^{\pi \i k (u+1)^2/2\tau}}{\sqrt{-\i \tau} }
   	\,  \sum_{p=1}^{k-1} \wh{{\CS}}^{(k)}_{r,p} \, 
   	\wh{R}_{k,p}\lp -\frac{1}{\tau}; \frac{u+1}{\tau} \rp.
\end{equation}
Using $(-1)^{r+1} \, \wh{{\CS}}^{(k)}_{r,p} = \wh{{\CS}}^{(k)}_{r,k-p}$ and then changing the dummy summation variable of the second factor as $p \to k-p$ we get
\begin{equation}
\frac{e^{\pi \i k u^2/2\tau}}{\sqrt{-\i \tau}}\,  \sum_{p=1}^{k-1} \wh{{\CS}}^{(k)}_{r,p} \, \lb 
             \wh{R}_{k,p}\lp -\frac{1}{\tau}; \frac{u}{\tau} \rp
      				+ e^{\pi \i k u/\tau} e^{\pi \i k/2\tau}  
      							\wh{R}_{k,k-p}\lp -\frac{1}{\tau}; \frac{u}{\tau} + \frac{1}{\tau} \rp
             \rb. 
\end{equation}
The part (c) of proposition \ref{prop:Rhat_ell} then tells us that the factor in square brackets is just $2\,e^{\pi \i p u/\tau} e^{\pi \i p^2/ 2k\tau} $. This finally gives
\begin{equation} 
\wt{h}_{k,r}(\tau; u)  + (-1)^{r+1} \,  \wt{h}_{k,r}(\tau; u+1) = \displaystyle\frac{2}{\sqrt{-\i \tau}} 
      \sum\limits_{p=1}^{k-1} \wh{{\CS}}^{(k)}_{r,p}\  e^{\i \pi k (u+p/k)^2/2\tau}.
\end{equation}

Our final task is to show that $u \to \wt{h}_{k,r}(\tau; u)$ are entire functions. 
It is straightforward to work out
\begin{equation}\label{eq:htilde_analyticity1}
\frac{\partial}{\partial\, \overline{u}} \, 
  \wh{R}_{k,r}\lp \tau; u \rp = 
  \i \sqrt{\frac{k}{\tau_2}} \, e^{- \pi k u_2^{\, 2}/ \tau_2}
  \lb   
  \vth_{k,r} ( -\overline{\tau}, \overline{u}/2 ) -
  \vth_{k,-r} (-\overline{\tau}, \overline{u}/2) 
  \rb
\end{equation}
and
\begin{equation}
\frac{\partial}{\partial\, \overline{u}} \, 
  \wh{R}_{k,p}\lp -\frac{1}{\tau}; \frac{u}{\tau} \rp = 
   \sqrt{\frac{k}{\tau_2}} \, \frac{\sqrt{- \i \, \tau}}{\sqrt{\i \, \overline{\tau}}} 
    \, e^{ \pi k (u \overline{\tau} - \overline{u} \tau)^2/ 4 \tau_2 | \tau |^2}
  \lb   
  \vth_{k,p} \lp \frac{1}{\overline{\tau}}, -\frac{\overline{u}}{2\overline{\tau}} \rp -
  \vth_{k,-p} \lp \frac{1}{\overline{\tau}}, -\frac{\overline{u}}{2\overline{\tau}} \rp 
  \rb.
\end{equation}
Then we rewrite
\begin{equation}
\sum_{p=1}^{k-1} \wh{{\CS}}^{(k)}_{r,p} \, 
\frac{\partial}{\partial\, \overline{u}} \,
             \wh{R}_{k,p}\lp -\frac{1}{\tau}; \frac{u}{\tau} \rp =
 \i \, \sum_{p \in \ZZ_{2k}} {\CS}^{(k)}_{r,p} \, 
\frac{\partial}{\partial\, \overline{u}} \,
             \wh{R}_{k,p}\lp -\frac{1}{\tau}; \frac{u}{\tau} \rp
\end{equation}
as
\begin{equation}
 \i \, \sqrt{\frac{k}{\tau_2}} \, \frac{\sqrt{- \i \, \tau}}{\sqrt{\i \, \overline{\tau}}} 
    \, e^{ \pi k (u \overline{\tau} - \overline{u} \tau)^2/ 4 \tau_2 | \tau |^2}
  \lb   
   \sum_{p \in \ZZ_{2k}} {\CS}^{(k)*}_{-r,p} \, 
  \vth_{k,p} \lp \frac{1}{\overline{\tau}}, -\frac{\overline{u}}{2\overline{\tau}} \rp -
   \sum_{p \in \ZZ_{2k}} {\CS}^{(k)*}_{r,-p} \, 
  \vth_{k,-p} \lp \frac{1}{\overline{\tau}}, -\frac{\overline{u}}{2\overline{\tau}} \rp 
  \rb.
\end{equation}
Using equation (\ref{eq:theta_kr_modS}) this expression is equal to
\begin{equation}\label{eq:htilde_analyticity2}
- \i \, \sqrt{\frac{k}{\tau_2}} \, \sqrt{- \i \, \tau} 
    \, e^{ \pi k (u \overline{\tau} - \overline{u} \tau)^2/ 4 \tau_2 | \tau |^2}
     \, e^{ - \pi \i k {\overline{u}}^2 / 2 \overline{\tau} }
   \lb   
  \vth_{k,r} ( -\overline{\tau}, \overline{u}/2 ) -
  \vth_{k,-r} (-\overline{\tau}, \overline{u}/2) 
  \rb.
\end{equation}
Combining equations (\ref{eq:htilde_analyticity1}) and (\ref{eq:htilde_analyticity2}) we get
\begin{equation}
\frac{\partial}{\partial\, \overline{u}} \wh{R}_{k,r}(\tau; u)
 + \frac{e^{\pi \i k u^2/2\tau}}{\sqrt{-\i \tau}}\, 
             \sum_{p=1}^{k-1} \wh{{\CS}}^{(k)}_{r,p} \, 
          \frac{\partial}{\partial\, \overline{u}}
             \wh{R}_{k,p}\lp -\frac{1}{\tau}; \frac{u}{\tau} \rp = 0
\end{equation}
as was to be shown.     
\end{proof}

We finally relate $\wh{R}_{k,r}$ and $\wh{h}_{k,r}$ to the $\mu^{k,j,j'}$'s modular transformation properties.
\begin{proposition}\label{prop:mu_Sdeficit}
For $j,j' = 1, \ldots, k-1$ and $j = j' \, (\mathrm{mod}\, 2)$
 \begin{align}
\frac{1}{2} \Big[   \mu^{k,j,j'}\lp \tau; v, u, w \rp  
									&+ \mu^{k,k-j,k-j'}\lp \tau; v, u, w \rp  \Big] \notag \\
									&+  
  \frac{e^{\pi \i (u-v)^2 /\tau}\, e^{-\pi \i w^2 (1-2/k)/\tau}}{\sqrt{-\i \tau}}  \sum_{p,p'=1}^{k-1}  
   \wh{{\CS}}^{(k)}_{j,p}  \, \wh{{\CS}}^{(k)}_{j',p'}\,
  \mu^{k,p,p'}\Big( -\frac{1}{\tau}; \frac{v}{\tau}, \frac{u}{\tau}, \frac{w}{\tau} \Big)
  \notag \\
 = &\frac{1}{4} \Big[
 \  \chi^{(k-2)}_{j-1}(\tau; u-v+w) \ \wh{h}_{k,j'}(\tau;u-v+w(1-2/k)) \notag\\
 &+ \chi^{(k-2)}_{k-j-1}(\tau; u-v+w) \ \wh{h}_{k,k-j'}(\tau;u-v+w(1-2/k)) \ 
 \Big].
 \end{align}
\end{proposition}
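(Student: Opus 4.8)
The plan is to prove Proposition~\ref{prop:mu_Sdeficit} by the ``unique entire function'' strategy that was used to characterise $\wh h_{k,r}$ itself (the Remark following Proposition~\ref{prop:hath_ell}), now carried out in the three elliptic variables $v,u,w$. Write $\Psi_{j,j'}(\tau;v,u,w)$ for the left-hand side, $\Xi_{j,j'}(\tau;v,u,w)$ for the right-hand side, and $\Phi_{j,j'}:=\Psi_{j,j'}-\Xi_{j,j'}$; the goal is $\Phi_{j,j'}\equiv 0$. The right-hand side is manifestly entire in $v,u,w$, since $\chi^{(k-2)}_{\ell}(\tau;\cdot)$ is entire (see~\eqref{defchikj}) and $\wh h_{k,r}(\tau;\cdot)$ is entire (Proposition~\ref{prop:hath_ell}).

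\emph{Step 1 (analyticity of the left-hand side).} I would first show $\Psi_{j,j'}$ is entire in each of $v,u,w$. By Proposition~\ref{prop:mu_analyticity}, and using $\mu^{k,k-j,k-j'}(\tau;v,u,w)=\mu^{k,j,j'}(\tau;-v,-u,-w)$ (Proposition~\ref{prop:mu_algell1}(e)), the symmetrised $\mu$-term has at most simple poles on $\IZ\tau+\IZ$ in $v$ and in $u$; in the $S$-transformed term the factor $1/\theta_{11}(-1/\tau;u/\tau)$ and the pole of $v\mapsto\mu^{k,p,p'}(-1/\tau;v/\tau,u/\tau,w/\tau)$ lie on the same lattice, by~\eqref{eq:theta11_modS}. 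I would compute the residues at, say, $u=0$ from Proposition~\ref{prop:mu_analyticity} (equivalently from the $1/\theta_{11}$ prefactors), and match the two contributions by feeding the $S$-transformation~\eqref{eq:chi_modS} of the $SU(2)_{k-2}$ characters (equivalently the branching relation~\eqref{eq:su2_branch} together with~\eqref{eq:theta_kr_modS}) and the unitarity identity $\sum_{r}\wh{\CS}^{(k)}_{l,r}\wh{\CS}^{(k)}_{r,l'}=\delta_{l,l'}$ through the $\wh{\CS}$-weighted sum. The residues cancel, so $\Psi_{j,j'}$ is entire.

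\emph{Step 2 (elliptic transformations).} Next I would compute, for both $\Psi_{j,j'}$ and $\Xi_{j,j'}$, the behaviour under the six shifts $u\to u+1$, $u\to u+\tau$, $v\to v+1$, $v\to v+\tau$, $w\to w+1$, $w\to w+\tau$. For $\Psi_{j,j'}$: the symmetrised $\mu$-term is governed by Proposition~\ref{prop:mu_algell1}(a)--(d) for the pure (anti)periodicities and Proposition~\ref{prop:mu_algell2}(a),(b) for the $\tau$-shift in $u$ and in $w$, which produce $\chi^{(k-2)}_{\bullet}$-valued corrections; the $S$-transformed term is handled by applying the same propositions at modular parameter $-1/\tau$ with variables $(v/\tau,u/\tau,w/\tau)$---so that $u\to u+\tau$ becomes the unit shift $u/\tau\to u/\tau+1$, $u\to u+1$ becomes the $\tau$-shift for $-1/\tau$, and similarly for $v$---together with the elementary transformation of the Gaussian prefactor $e^{\pi i(u-v)^2/\tau}e^{-\pi i w^2(1-2/k)/\tau}$. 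For $\Xi_{j,j'}$: one uses~\eqref{eq:chi_ell1},\eqref{eq:chi_elltau} for the $\chi^{(k-2)}_{\bullet}$ factors and Proposition~\ref{prop:hath_ell}(a)--(c) for the $\wh h_{k,\bullet}$ factors, which are genuinely inhomogeneous and produce Gaussian-sum terms $\tfrac{2}{\sqrt{-i\tau}}\sum_{p}\wh{\CS}^{(k)}_{r,p}e^{i\pi k(\,\cdot\,+p/k)^2/2\tau}$. The substance of this step is that the $\chi^{(k-2)}_{\bullet}(-1/\tau;\cdot)$ corrections produced by Proposition~\ref{prop:mu_algell2} on the $S$-transformed $\mu$'s, once rewritten in level-$\tau$ data via~\eqref{eq:chi_modS} and combined with the $\wh{\CS}^{(k)}$ weights and the prefactor, reproduce exactly those Gaussian-sum anomalies of $\wh h$. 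Granting this, $\Phi_{j,j'}$ becomes doubly periodic up to a sign in $u$---hence bounded and entire in $u$, so constant in $u$ by Liouville's theorem, and identically zero if any period multiplier equals $-1$---and iterating in $v$ and $w$ reduces $\Phi_{j,j'}$ to a function of $\tau$ alone, which vanishes on evaluation at a convenient configuration of the variables (for instance using $C^{l-1,k}_{r}\bmat{1\\1}(\tau;0)=i(\delta_{r,l}-\delta_{-r,l})$). Thus $\Psi_{j,j'}=\Xi_{j,j'}$.

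I expect the main obstacle to be precisely the anomaly-matching in Step~2: one must check that the minimal-model branching relation~\eqref{eq:su2_branch}, the modular behaviour~\eqref{eq:chi_modS} of the $SU(2)_{k-2}$ characters, the symmetry and unitarity identities for $\wh{\CS}^{(k)}$, and the Mordell-type integral defining $\wh h_{k,r}$ all conspire so that every inhomogeneous term generated on the two sides cancels. An alternative, more computational route---substitute~\eqref{eq:mu_defn2} into the $\wh{\CS}$-weighted $S$-sum, transform each ingredient ($\theta_{11}$, $C^{p-1,k}_{\bullet}\bmat{1\\1}$, $\vth_{k,\bullet}$) under $\tau\to-1/\tau$, and Poisson-resum the resulting $n$-sum into the integral appearing in $\wh h_{k,r}$---would bypass the uniqueness argument but requires the same analytic input; I would fall back on it only for the pieces where the elliptic-transformation bookkeeping becomes unwieldy.
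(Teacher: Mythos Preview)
Your strategy is sound and close to the paper's, but you are missing its main organizing step. The paper does not establish analyticity by cancelling residues, nor does it run Liouville in all three elliptic variables. Instead it applies Lemma~\ref{lem:mu_z_shift} twice --- once at modular parameter $\tau$ and once at $-1/\tau$ --- and uses the $S$-transformations of $\theta_{11}$, $\eta$, and $C^{p-1,k}_{p'}\bmat{1\\1}$ (equations~\eqref{eq:theta11_modS} and~\eqref{eq:Cfnc_modS}) to show that the simultaneous-shift difference $(v,u)\to(v+z,u+z)$ of the $S$-transformed $\mu$-sum equals the negative of that of the level-$\tau$ term. Hence the entire left-hand side depends on $v,u$ only through $z:=u-v$. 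Entirety in $z$ is then automatic (a pole sitting at a fixed lattice point in $u$ is incompatible with dependence only on $u-v$), and one is left with two functional equations in the single variable $z$: under $z\to z+\tau$ (Proposition~\ref{prop:mu_algell2}(a) at $\tau$ together with Proposition~\ref{prop:mu_algell1}(a) at $-1/\tau$) and under $z\to z+1$ (the same ingredients with the roles of $\tau$ and $-1/\tau$ swapped). These are exactly your two $u$-shifts, and the anomaly-matching you anticipate is precisely the computation the paper carries out for~\eqref{eq:H^kjj_prop1} and~\eqref{eq:H^kjj_prop2}.

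Consequently your $v$-shifts are redundant, and your $w$-shifts should be dropped: Proposition~\ref{prop:mu_algell1}(c) gives a $j'$-dependent phase $-e^{2\pi i j'/k}$ that does not respect the symmetrised combination, and under $w\to w+1$ the argument of $\wh h_{k,j'}$ shifts by $1-2/k$ rather than by $1$, so no clean homogeneous relation in $w$ is available. Finally, the residue you want in Step~1 is at $v=0$ (given directly by Proposition~\ref{prop:mu_analyticity}), not at $u=0$: the pole in $u$ comes from $1/\theta_{11}(\tau;u)$ and its residue carries the full infinite series defining $\mu$, which is only tractable via the $u\leftrightarrow v$ corollary of Lemma~\ref{lem:mu_z_shift}. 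Had you invoked that lemma, you would already have the paper's reduction to $z=u-v$ in hand.
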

\begin{remark}
We will find it convenient to define 
\begin{equation}
\mu_{\mathrm{sym}}^{k,j,j'}\lp \tau; v, u, w \rp = 
\frac{1}{2} \Big[   \mu^{k,j,j'}\lp \tau; v, u, w \rp  + \mu^{k,k-j,k-j'}\lp \tau; v, u, w \rp  \Big].
\end{equation}
Since $ \wh{{\CS}}^{(k)}_{j,p} = (-1)^{j+1} \wh{{\CS}}^{(k)}_{j,k-p}$ and since we are considering the $j = j' \, (\mathrm{mod}\, 2)$ case (as otherwise $ \mu^{k,j,j'}$ is zero) we have
\begin{equation} \label{eq:prop_proof3}
 \sum_{p,p'=1}^{k-1}  \wh{{\CS}}^{(k)}_{j,p} \,  \wh{{\CS}}^{(k)}_{j',p'} \,
  \mu_{\mathrm{sym}}^{k,p,p'}\Big( -\frac{1}{\tau}; \frac{v}{\tau}, \frac{u}{\tau}, \frac{w}{\tau} \Big) = 
   \sum_{p,p'=1}^{k-1}  \wh{{\CS}}^{(k)}_{j,p} \,  \wh{{\CS}}^{(k)}_{j',p'}\,
 \mu^{k,p,p'}\Big( -\frac{1}{\tau}; \frac{v}{\tau}, \frac{u}{\tau}, \frac{w}{\tau} \Big).
\end{equation}
Furthermore, since $ \wh{\Omega}^{(k,d)}_{j,j'} =  \wh{\Omega}^{(k,d)}_{k-j,k-j'}$ we have
\begin{equation}
\sum_{j,j'=1}^{k-1} \wh{\Omega}^{(k,d)}_{j,j'} \mu_{\mathrm{sym}}^{k,j,j'}\lp \tau; v, u, w \rp = 
  \sum_{j,j'=1}^{k-1} \wh{\Omega}^{(k,d)}_{j,j'} \mu^{k,j,j'}\lp \tau; v, u, w \rp.
\end{equation}
Using (\ref{eq:mu_Z2symm}) we also see that 
\begin{equation}
\mu_{\mathrm{sym}}^{k,j,j'}\lp \tau; u, u, 0 \rp = \mu^{k,j,j'}\lp \tau; u, u, 0 \rp.
\end{equation}
Then with this definition, the proposition above can be equivalently expressed as
\begin{align}
 \mu_{\mathrm{sym}}^{k,j,j'}\lp \tau; v, u, w \rp  
									&+  \frac{e^{\pi \i (u-v)^2 /\tau}\, 
									e^{-\pi \i w^2 (1-2/k)/\tau}}{\sqrt{-\i \tau}}  \sum_{p,p'=1}^{k-1}  
  									 \wh{{\CS}}^{(k)}_{j,p}  \, \wh{{\CS}}^{(k)}_{j',p'} \,
  \mu_{\mathrm{sym}}^{k,p,p'}\Big( -\frac{1}{\tau}; \frac{v}{\tau}, \frac{u}{\tau}, \frac{w}{\tau} \Big)
  \notag \\
 = &\frac{1}{4} \Big[
 \  \chi^{(k-2)}_{j-1}(\tau; u-v+w) \ \wh{h}_{k,j'}(\tau;u-v+w(1-2/k)) \notag\\
 &+ \chi^{(k-2)}_{k-j-1}(\tau; u-v+w) \ \wh{h}_{k,k-j'}(\tau;u-v+w(1-2/k)) \ 
 \Big].
 \end{align}
\end{remark}

\begin{proof}
Using lemma \ref{lem:mu_z_shift} with $z \to z/\tau$, $v \to v/\tau$, $u \to u/\tau$, 
$w \to w/\tau$ and $\tau \to -1/\tau$ we get
\begin{align}
  \sum_{p,p'=1}^{k-1}   &\wh{{\CS}}^{(k)}_{j,p}  \, \wh{{\CS}}^{(k)}_{j',p'} \,
  \lb \mu_{\mathrm{sym}}^{k,p,p'}\Big( -\frac{1}{\tau}; \frac{v+z}{\tau},
  															 \frac{u+z}{\tau}, \frac{w}{\tau} \Big)
  -  \mu_{\mathrm{sym}}^{k,p,p'}\Big( -\frac{1}{\tau}; \frac{v}{\tau},
  															 \frac{u}{\tau}, \frac{w}{\tau} \Big) \rb   \notag \\
  = \  &\frac{\i \, \eta \lp -\frac{1}{\tau} \rp^3  \,  
  \theta_{11} \lp  -\frac{1}{\tau}; \frac{z}{\tau} \rp  \, 
  \theta_{11}  \lp  -\frac{1}{\tau}; \frac{v+u+z}{\tau} \rp}
 { \theta_{11} \lp  -\frac{1}{\tau}; \frac{v}{\tau} \rp \, 
  \theta_{11} \lp  -\frac{1}{\tau}; \frac{u}{\tau} \rp \,
    \theta_{11} \lp  -\frac{1}{\tau}; \frac{v+z}{\tau} \rp \,
    \theta_{11} \lp  -\frac{1}{\tau}; \frac{u+z}{\tau} \rp} \notag \\
    \label{eq:prop_proof1}
  &\times \frac{1}{2} \,\sum_{p,p'=1}^{k-1}   \wh{{\CS}}^{(k)}_{j,p}  \, \wh{{\CS}}^{(k)}_{j',p'} \,
  \lb  C^{p-1,k}_{p'}\bmat{1 \\ 1} \lp -\frac{1}{\tau};\frac{w}{\tau} \rp
  				 + \  C^{k-p-1,k}_{k-p'}\bmat{1 \\ 1} \lp -\frac{1}{\tau};\frac{w}{\tau} \rp \rb \, .
\end{align}
The first factor
\begin{equation}
\frac{\i \, \eta \lp -\frac{1}{\tau} \rp^3  \,  
  \theta_{11} \lp  -\frac{1}{\tau}; \frac{z}{\tau} \rp  \, 
  \theta_{11}  \lp  -\frac{1}{\tau}; \frac{v+u+z}{\tau} \rp}
 { \theta_{11} \lp  -\frac{1}{\tau}; \frac{v}{\tau} \rp \, 
  \theta_{11} \lp  -\frac{1}{\tau}; \frac{u}{\tau} \rp \,
    \theta_{11} \lp  -\frac{1}{\tau}; \frac{v+z}{\tau} \rp \,
    \theta_{11} \lp  -\frac{1}{\tau}; \frac{u+z}{\tau} \rp} 
\end{equation}
can be rewritten using $\eta$ and $\theta_{11}$'s modular transformation as
\begin{equation}
- \, \sqrt{- \i \tau} \, e^{- \,\pi \i (u-v)^2 /\tau} \,  
\frac{\i \, \eta \lp \tau \rp^3  \,  
  \theta_{11} \lp  \tau; z \rp  \, 
  \theta_{11}  \lp  \tau; v+u+z \rp}
 { \theta_{11} \lp \tau; v \rp \, 
  \theta_{11} \lp  \tau; u \rp \,
    \theta_{11} \lp \tau; v+z \rp \,
    \theta_{11} \lp \tau; u+z \rp} \, .
\end{equation}
We can rewrite the second factor,
\begin{equation}
\frac{1}{2} \,\sum_{p,p'=1}^{k-1}   \wh{{\CS}}^{(k)}_{j,p}  \, \wh{{\CS}}^{(k)}_{j',p'} \,
  \lb  C^{p-1,k}_{p'}\bmat{1 \\ 1} \lp -\frac{1}{\tau};\frac{w}{\tau} \rp
  				 + \  C^{k-p-1,k}_{k-p'}\bmat{1 \\ 1} \lp -\frac{1}{\tau};\frac{w}{\tau} \rp \rb 
\end{equation} 
using equations (\ref{eq:su2u1_prop2}), (\ref{eq:su2u1_prop3}), (\ref{eq:Cfnc_modS}) and properties of $ \wh{{\CS}}^{(k)}_{r,r'}$ and $ {\CS}^{(k)}_{r,r'}$ to get
\begin{align}
\frac{1}{2} \,\sum_{p,p'=1}^{k-1}   &\wh{{\CS}}^{(k)}_{j,p}  \, \wh{{\CS}}^{(k)}_{j',p'} \,
  \lb  C^{p-1,k}_{p'}\bmat{1 \\ 1} \lp -\frac{1}{\tau};\frac{w}{\tau} \rp
  				 - \  C^{p-1,k}_{-p'}\bmat{1 \\ 1} \lp -\frac{1}{\tau};\frac{w}{\tau} \rp \rb 
  				   																			\notag  \\
&= \frac{\i}{2} \,\sum_{p,p'=1}^{k-1}  \wh{{\CS}}^{(k)}_{j,p}  \, 
\lb {\CS}^{(k)}_{j',p'} - {\CS}^{(k)}_{j',- p'} \rb \,
  \lb  C^{p-1,k}_{p'}\bmat{1 \\ 1} \lp -\frac{1}{\tau};\frac{w}{\tau} \rp
  				 - \  C^{p-1,k}_{-p'}\bmat{1 \\ 1} \lp -\frac{1}{\tau};\frac{w}{\tau} \rp \rb 
  				   																			\notag  \\
&= \frac{\i}{2}  \,\sum_{p=1}^{k-1}  \,\sum_{p'=-k+1}^{k}   \wh{{\CS}}^{(k)}_{j,p}  \, 
\lb {\CS}^{(k)}_{j',p'}  \, C^{p-1,k}_{p'}\bmat{1 \\ 1} \lp -\frac{1}{\tau};\frac{w}{\tau} \rp
  	- \, {\CS}^{(k)}_{-\, j',p'} \, C^{p-1,k}_{p'}\bmat{1 \\ 1} \lp -\frac{1}{\tau};\frac{w}{\tau} \rp \rb 
  				   																			\notag  \\
&=\frac{1}{2} \, e^{\pi \i w^2 (1-2/k)/\tau}
\lb \, C^{j-1,k}_{j'}\bmat{1 \\ 1} \lp \tau; w \rp
  	-  \, C^{j-1,k}_{-j'}\bmat{1 \\ 1} \lp \tau; w \rp \rb     \notag \\
&= \frac{1}{2} \, e^{\pi \i w^2 (1-2/k)/\tau} 
\lb \, C^{j-1,k}_{j'}\bmat{1 \\ 1} \lp \tau; w \rp
  	+  \, C^{k-j-1,k}_{k-j'}\bmat{1 \\ 1} \lp \tau; w \rp \rb.
\end{align}
Combining these two factors and using lemma \ref{lem:mu_z_shift} once more we see that
the left hand side of equation (\ref{eq:prop_proof1}) is equal to
\begin{equation}
- \, \sqrt{- \i \tau} \, e^{- \,\pi \i (u-v)^2 /\tau} \,
 e^{\pi \i w^2 (1-2/k)/\tau} \,
 \lb \mu_{\mathrm{sym}}^{k,p,p'} \lp \tau; v+z,  u+z, w \rp
  -  \mu_{\mathrm{sym}}^{k,p,p'}  \lp \tau; v, u, w \rp \rb.
\end{equation}
Therefore, 
\begin{equation}\label{eq:H^kjj_defn}
\mu_{\mathrm{sym}}^{k,j,j'}\lp \tau; v, u, w \rp  
									+  \frac{e^{\pi \i (u-v)^2 /\tau}\, 
									e^{-\pi \i w^2 (1-2/k)/\tau}}{\sqrt{-\i \tau}}  \sum_{p,p'=1}^{k-1}  
  									 \wh{{\CS}}^{(k)}_{j,p}  \, \wh{{\CS}}^{(k)}_{j',p'} \,
  \mu_{\mathrm{sym}}^{k,p,p'}\Big( -\frac{1}{\tau}; \frac{v}{\tau}, \frac{u}{\tau}, \frac{w}{\tau} \Big)
\end{equation}
depends on $u$ and $v$ only through $u-v$. Let us call it $\frac{1}{2} H^{k,j,j'} (\tau; u-v,w)$.

Looking at its definition in equation (\ref{eq:H^kjj_defn}),
 $u \to H^{k,j,j'} (\tau; u-v,w)$ can have poles only at 
$u \in \ZZ \tau + \ZZ$. However, since $H^{k,j,j'} (\tau; u-v,w)$'s dependence on $u$ is only 
through $u-v$, it can not have any poles at all. Therefore, $ z \to  H^{k,j,j'} (\tau; z,w)$ is an entire function. We will now find two properties for $H^{k,j,j'} (\tau; z,w)$ which will turn out 
to characterize these functions completely.

The first one of these two properties is
\begin{align}
e^{-\pi \i \tau - 2 \pi \i z}  H^{k,j,j'} \lp \tau; z+\tau, w \rp
&+  H^{k,j,j'} \lp \tau; z, w \rp
  =
 q^{-{j'}^2/4k} \lp y_z y_w^{1-2/k} \rp^{-{j'}/2} \chi^{(k-2)}_{j-1}(\tau; z+w)  \notag \\
 \label{eq:H^kjj_prop1}
 &+  q^{-{(k-j')}^2/4k} \lp y_z y_w^{1-2/k} \rp^{-{(k-j')}/2} \chi^{(k-2)}_{k-j-1}(\tau; z+w)  \, 
.
\end{align}
To see this, let us use the definition of $\frac{1}{2} H^{k,j,j'} \lp \tau; u-v, w \rp$ as given in equation (\ref{eq:H^kjj_defn}) to write $\frac{1}{2} H^{k,j,j'} \lp \tau; u-v+\tau, w \rp$ as
\begin{equation}
\mu_{\mathrm{sym}}^{k,j,j'}\lp \tau; v, u+\tau, w \rp  
									+  \frac{e^{\pi \i (u-v+\tau)^2 /\tau}\, 
									e^{-\pi \i w^2 (1-2/k)/\tau}}{\sqrt{-\i \tau}}  \sum_{p,p'=1}^{k-1}  
  									 \wh{{\CS}}^{(k)}_{j,p}  \, \wh{{\CS}}^{(k)}_{j',p'} \,
  \mu_{\mathrm{sym}}^{k,p,p'}\Big( -\frac{1}{\tau}; \frac{v}{\tau}, \frac{u}{\tau}+1, \frac{w}{\tau} \Big).
\end{equation}
The first factor, $\mu_{\mathrm{sym}}^{k,j,j'}\lp \tau; v, u+\tau, w \rp$, is equal to
\begin{align}
q^{1/2} y_u^{-1} y_v
\Big[ -\, &\mu_{\mathrm{sym}}^{k,j,j'}\lp \tau; v, u, w \rp
+
\frac{1}{2} \, q^{-j'^2/4k} \lp \frac{y_v}{y_u y_w^{1-2/k}} \rp^{j'/2} \chi_{j-1}^{(k-2)}(\tau; w+u-v) \notag \\
&+
\frac{1}{2} \,  q^{-(k-j')^2/4k} \lp \frac{y_v}{y_u y_w^{1-2/k}} \rp^{(k-j')/2} \chi_{k-j-1}^{(k-2)}(\tau; w+u-v) \Big]
\end{align}
by part (a) of Proposition \ref{prop:mu_algell2}. Then, using part (b) of Proposition \ref{prop:mu_algell1} we see that 
\begin{equation}
 \mu_{\mathrm{sym}}^{k,p,p'}\Big( -\frac{1}{\tau}; \frac{v}{\tau}, \frac{u}{\tau}+1, \frac{w}{\tau} \Big) = 
  - \, \mu_{\mathrm{sym}}^{k,p,p'}\Big( -\frac{1}{\tau}; \frac{v}{\tau}, \frac{u}{\tau}, \frac{w}{\tau} \Big).
\end{equation}
Finally, using that
$e^{\pi \i (u-v+\tau)^2 /\tau} = q^{1/2}\,  y_u^{-1} \, y_v \, e^{\pi \i (u-v)^2 /\tau}$
we rewrite $\frac{1}{2} H^{k,j,j'} \lp \tau; \tau+u-v, w \rp$ as
\begin{align}
q^{1/2} y_u^{-1} y_v
\Big[ -\, &\mu_{\mathrm{sym}}^{k,j,j'}\lp \tau; v, u, w \rp 
+ \frac{1}{2} \, q^{-j'^2/4k} \lp \frac{y_v}{y_u y_w^{1-2/k}} \rp^{j'/2} \chi_{j-1}^{(k-2)}(\tau; w+u-v) \notag \\
&+
\frac{1}{2} \,  q^{-(k-j')^2/4k} \lp \frac{y_v}{y_u y_w^{1-2/k}} \rp^{(k-j')/2} \chi_{k-j-1}^{(k-2)}(\tau; w+u-v)  \notag \\
&-  \frac{e^{\pi \i (u-v)^2 /\tau}\, 
									e^{-\pi \i w^2 (1-2/k)/\tau}}{\sqrt{-\i \tau}}  \sum_{p,p'=1}^{k-1}  
  									 \wh{{\CS}}^{(k)}_{j,p}  \, \wh{{\CS}}^{(k)}_{j',p'} \,
  \mu_{\mathrm{sym}}^{k,p,p'}\Big( -\frac{1}{\tau}; \frac{v}{\tau}, \frac{u}{\tau}, \frac{w}{\tau} \Big) \Big].
\end{align}
Combining the first and fourth terms contained in the brackets as 
$- \, \frac{1}{2} H^{k,j,j'} \lp \tau; u-v, w \rp$ and replacing $u$ with $v+z$ we get to the statement of equation (\ref{eq:H^kjj_prop1}).

The second property we would like to use is
\begin{align}
H^{k,j,j'} &\lp \tau; z+1, w \rp
+  H^{k,j,j'} \lp \tau; z, w \rp \notag\\
  =& \,
\chi^{(k-2)}_{j-1}(\tau; z+w) \, \frac{1}{\sqrt{-\i \tau}} \,
      \sum\limits_{p=1}^{k-1} \wh{{\CS}}^{(k)}_{j',p'}\  e^{\pi \i k (z+w(1-2/k)+p'/k)^2/2\tau}
 \notag \\
 \label{eq:H^kjj_prop2}
 &+  \chi^{(k-2)}_{k-j-1}(\tau; z+w) \, \frac{1}{\sqrt{-\i \tau}} \,
      \sum\limits_{p=1}^{k-1} \wh{{\CS}}^{(k)}_{k-j',p'}\  e^{\pi \i k (z+w(1-2/k)+p'/k)^2/2\tau} 
\, .
\end{align}
To see this, we will again use the definition of $\frac{1}{2} H^{k,j,j'} \lp \tau; u-v, w \rp$
from equation (\ref{eq:H^kjj_defn}) to write $\frac{1}{2} H^{k,j,j'} \lp \tau; u-v+1, w \rp$ as
\begin{equation}\label{eq:prop_proof2}
\mu_{\mathrm{sym}}^{k,j,j'}\lp \tau; v, u+1, w \rp  
									+  \frac{e^{\pi \i (u-v+1)^2 /\tau}\, 
									e^{-\pi \i w^2 (1-2/k)/\tau}}{\sqrt{-\i \tau}}  \sum_{p,p'=1}^{k-1}  
  									 \wh{{\CS}}^{(k)}_{j,p}  \, \wh{{\CS}}^{(k)}_{j',p'} \,
  \mu_{\mathrm{sym}}^{k,p,p'}\Big( -\frac{1}{\tau}; \frac{v}{\tau}, \frac{u+1}{\tau}, \frac{w}{\tau} \Big).
\end{equation}
The first factor $\mu_{\mathrm{sym}}^{k,j,j'}\lp \tau; v, u+1, w \rp$ is simply equal to
$-\, \mu_{\mathrm{sym}}^{k,j,j'}\lp \tau; v, u, w \rp$ by part (b) of Proposition \ref{prop:mu_algell1}.

The second factor is more complicated and we should be very careful applying part (a) of Proposition \ref{prop:mu_algell2}.
$ \delta_{p, p' \, (\mathrm{mod}\, 2)}$ factor in 
\begin{align}
\mu^{k,p,p'}&(\tau;v,u-\tau,w) + y_v y_u^{-1} q^{1/2} \mu^{k,p,p'}(\tau;v,u,w) =  \notag \\
&q^{-p'^2/4k} \, q^{p'/2} \,
\lp \frac{y_v}{y_u y_w^{1-2/k}} \rp^{p'/2} \chi_{p-1}^{(k-2)}(\tau; w+u-v-\tau)
\ \delta_{p, p' \, (\mathrm{mod}\, 2)}
\end{align}
is especially important.
Using equation (\ref{eq:chi_elltau}) we can express this as
\begin{equation}
 y_v \, y_u^{-1} \, q^{1/2} \,
\lb q^{-(k-p')^2/4k} \,
\lp \frac{y_u y_w^{1-2/k}}{y_v} \rp^{(k-p')/2} \chi_{k-p-1}^{(k-2)}(\tau; w+u-v)
\ \delta_{p, p' \, (\mathrm{mod}\, 2)} \rb.
\end{equation}
If we substitute $v \to v/\tau$, $u \to u/\tau$, $w \to w/\tau$ and $\tau \to -1/\tau$ we get
\begin{align}
\mu^{k,p,p'}&\lp -\frac{1}{\tau};\frac{v}{\tau},\frac{u+1}{\tau},\frac{w}{\tau} \rp  = 
e^{- \pi \i / \tau} e^{-2 \pi \i (u-v)/\tau} 
\Big[
- \mu^{k,p,p'}\lp -\frac{1}{\tau};\frac{v}{\tau},\frac{u}{\tau},\frac{w}{\tau} \rp   
+ e^{ \pi \i (k-p')^2/2k \tau} \, \notag \\
&\times  e^{ \pi \i (u - v + w(1-2/k))(k-p')/\tau} \,
 \chi_{k-p-1}^{(k-2)} \lp -\frac{1}{\tau}; \frac{w+u-v}{\tau} \rp
 \, \delta_{p, p' \, (\mathrm{mod}\, 2)} \Big].
\end{align}
Using this expression in (\ref{eq:prop_proof2}) together with equation (\ref{eq:prop_proof3}) we find $\frac{1}{2} \, H^{k,j,j'} \lp \tau; u-v+1, w \rp$ to be equal to
\begin{align}
-\, \mu_{\mathrm{sym}}^{k,j,j'}&\lp \tau; v, u, w \rp 
 - \, \frac{e^{\pi \i (u-v)^2 /\tau}\, 
									e^{-\pi \i w^2 (1-2/k)/\tau}}{\sqrt{-\i \tau}}  
\sum_{p,p'=1}^{k-1}   \wh{{\CS}}^{(k)}_{j,p}  \, \wh{{\CS}}^{(k)}_{j',p'} \,
  \mu_{\mathrm{sym}}^{k,p,p'}\Big( -\frac{1}{\tau}; \frac{v}{\tau}, \frac{u}{\tau}, \frac{w}{\tau} \Big)  \notag\\
  &+ \frac{e^{\pi \i (u-v)^2 /\tau}\, 
									e^{-\pi \i w^2 (1-2/k)/\tau}}{\sqrt{-\i \tau}} 
	\sum_{p,p'=1}^{k-1}   \wh{{\CS}}^{(k)}_{j,p}  \, \wh{{\CS}}^{(k)}_{j',p'} \,
	e^{ \pi \i (k-p')^2/2k \tau} \, 
e^{ \pi \i (u - v + w(1-2/k))(k-p')/\tau} \, \notag\\
&\times \chi_{k-p-1}^{(k-2)} \lp -\frac{1}{\tau}; \frac{w+u-v}{\tau} \rp
 \, \delta_{p, p' \, (\mathrm{mod}\, 2)} 
\end{align}
We recognize the first line as $- \, \frac{1}{2} \, H^{k,j,j'} \lp \tau; u-v, w \rp$. At this point
we replace $u$ with $v+z$ and change the dummy summation variables in the second line as
$p \to k-p$, $p' \to k- p'$. Since $\wh{{\CS}}^{(k)}_{j,k-p} = (-1)^{j+1} \wh{{\CS}}^{(k)}_{j,p}$
$\wh{{\CS}}^{(k)}_{j',k-p'} = (-1)^{j'+1} \wh{{\CS}}^{(k)}_{j',p'}$ and since we are assuming $j = j' \, (\mathrm{mod}\, 2)$ we find
\begin{equation}
H^{k,j,j'} \lp \tau; z+1, w \rp  + H^{k,j,j'} \lp \tau; z, w \rp 
\end{equation}
to be
\begin{align}
 \frac{2\, e^{\pi \i z^2 /\tau}   \,   e^{-\pi \i w^2 (1-2/k)/\tau}}{\sqrt{-\i \tau}} 
	\sum_{p,p'=1}^{k-1}   \wh{{\CS}}^{(k)}_{j,p}  \, \wh{{\CS}}^{(k)}_{j',p'} \,
	&e^{ \pi \i {p'}^2/2k \tau} \, 
e^{ \pi \i (z + w(1-2/k))p'/\tau} \, \notag \\
&\times \chi_{p-1}^{(k-2)} \lp -\frac{1}{\tau}; \frac{w+z}{\tau} \rp
 \, \delta_{p, p' \, (\mathrm{mod}\, 2)}.
\end{align}
We can rearrange the exponential factors to get
\begin{align}
\frac{2}{\sqrt{-\i \tau}} 
	\sum_{p,p'=1}^{k-1}   \wh{{\CS}}^{(k)}_{j,p}  \, \wh{{\CS}}^{(k)}_{j',p'} \,
&e^{ \pi \i k (z + w(1-2/k) + p'/k)^2/2\tau} \,
e^{- \pi \i (k-2) (z + w)^2/2\tau} \, \notag \\
&\times \chi_{p-1}^{(k-2)} \lp -\frac{1}{\tau}; \frac{w+z}{\tau} \rp
 \, \delta_{p, p' \, (\mathrm{mod}\, 2)}.
\end{align}
Next, we use $2\,  \delta_{p, p' \, (\mathrm{mod}\, 2)} = 
1 + (-1)^{p+p'}$, $\wh{{\CS}}^{(k)}_{j,p}  \, \wh{{\CS}}^{(k)}_{j',p'} (-1)^{p+p'} = 
\wh{{\CS}}^{(k)}_{k-j,p}  \, \wh{{\CS}}^{(k)}_{k-j',p'}$ and equation (\ref{eq:chi_modS}) to find
\begin{equation}
\frac{1}{\sqrt{-\i \tau}} 
	\sum_{r,p,p'=1}^{k-1}   
	\lb \wh{{\CS}}^{(k)}_{j,p}  \, \wh{{\CS}}^{(k)}_{j',p'} 
	+ \wh{{\CS}}^{(k)}_{k-j,p}  \, \wh{{\CS}}^{(k)}_{k-j',p'} \rb 
e^{ \pi \i k (z + w(1-2/k) + p'/k)^2/2\tau} \,
\wh{{\CS}}^{(k)}_{p,r} \, 
 \chi_{r-1}^{(k-2)} \lp \tau; w+z \rp.
\end{equation}
Summing over $p$ first, then over $r$ finally proves (\ref{eq:H^kjj_prop2}).

Equations ( \ref{eq:H^kjj_prop1}) and (\ref{eq:H^kjj_prop2}) completely determine
$H^{k,j,j'} \lp \tau; z, w \rp $ because any entire function of $z$ satisfying these two 
equations should be equal to $H^{k,j,j'} \lp \tau; z, w \rp $. To prove that, suppose there are two entire functions $g_1(z)$ and $g_2(z)$ obeying them. Then their difference 
$f(z) = g_1(z) - g_2(z)$ is an entire function satisfying 
\begin{equation}
f(z) + f(z+1) = 0  \qquad \mathrm{and} \qquad
f(z) + e^{- 2 \pi \i z - \pi \i \tau} f(z+\tau) = 0.
\end{equation}
Then for arbitrary integers $m$ and $n$ we get
\begin{equation}\label{eq:prop_proof4}
f(z_0 + m \tau + n) = (-1)^{m+n} e^{\pi \i m^2 \tau + 2 \pi \i m z_0} f(z_0).
\end{equation}
Varying $z_0$ over $0$, $1$, $\tau$, $1+\tau$ parallelogram and $m$ and $n$ over all integers for equation (\ref{eq:prop_proof4}) to see that $f(z)$ is a bounded function and hence has no $z$ dependence at all by Liouville's theorem. Letting $m$ to infinity in (\ref{eq:prop_proof4}) shows that it is in fact zero.

As our final task let us define 
\begin{align}
G^{k,j,j'} \lp \tau; z, w \rp \equiv \frac{1}{2} \Big[
 \  &\chi^{(k-2)}_{j-1}(\tau; z+w) \ \wh{h}_{k,j'}(\tau; z+w(1-2/k)) \notag\\
 &+ \chi^{(k-2)}_{k-j-1}(\tau; z+w) \ \wh{h}_{k,k-j'}(\tau; z+w(1-2/k)) \ 
 \Big]
\end{align}
for $j,j' = 1, \ldots, k-1$ and $j = j' \, (\mathrm{mod}\, 2)$. Since 
$z \to G^{k,j,j'} \lp \tau; z, w \rp$ is an entire function, our proof of (\ref{prop:mu_Sdeficit}) will be complete if we can show  $G^{k,j,j'} \lp \tau; z, w \rp$ satisfies equations (\ref{eq:H^kjj_prop1}) and (\ref{eq:H^kjj_prop2}).
This, in turn, follows from equations (\ref{eq:chi_ell1}), (\ref{eq:chi_elltau} )and parts (a) and (b) of Proposition \ref{prop:hath_ell}.
\end{proof}


\begin{thm}\label{thm:mu_completion}
We define a completion for 
 $\mu_{\mathrm{sym}}^{k,j,j'} = \frac{1}{2} \lb \mu^{k,j,j'} + \mu^{k,k-j,k-j'} \rb$ functions as:
 \begin{align}
  \tilde{\mu}^{k,j,j'}(\tau;v,u,w) 
  \equiv \frac{1}{2} \Big[   \mu^{k,j,j'}&\lp \tau; v, u, w \rp  
									+ \mu^{k,k-j,k-j'}\lp \tau; v, u, w \rp  \Big]
- \, \frac{1}{4} \, \delta_{j, j' \, (\mathrm{mod}\, 2)}  \notag \\ \times
\, \Big[  &\chi^{(k-2)}_{j-1}(\tau; u-v+w) \ \wh{R}_{k,j'}(\tau;u-v+w(1-2/k)) \notag\\
 &+ \chi^{(k-2)}_{k-j-1}(\tau; u-v+w) \ \wh{R}_{k,k-j'}(\tau;u-v+w(1-2/k)) \ 
 \Big].
 \end{align}
for $\tau \in \IH$ and $u,v \in \IC - (\IZ \tau + \IZ)$.
Then this function satisfies:
\begin{description}
 \item [(a)]  $\tilde{\mu}^{k,j,j'}(\tau+1;v,u,w) = e^{-\pi \i /4} \,
 e^{\pi \i (j^2 - {j'}^2) /2k}  \, \tilde{\mu}^{k,j,j'}(\tau;v,u,w)$.
 \item [(b)]  $\displaystyle\sum_{p,p'=1}^{k-1}  \wh{{\CS}}^{(k)}_{j,p}  \, \wh{{\CS}}^{(k)}_{j',p'}  
 \, \tilde{\mu}^{k,p,p'}
 \Big( -\frac{1}{\tau}; \frac{v}{\tau}, \frac{u}{\tau}, \frac{w}{\tau} \Big) = - \displaystyle
 \sqrt{-\i \tau}\,  e^{-\pi \i (u-v)^2 /\tau} \, e^{\pi \i w^2 (1-2/k) /\tau} \,
 \tilde{\mu}^{k,j,j'}(\tau;v,u,w)$.
\end{description}
\end{thm}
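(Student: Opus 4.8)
The plan is to prove part~(a) by a direct multiplier-system computation and part~(b) by combining the ``deficit'' formula of Proposition~\ref{prop:mu_Sdeficit} with the modular behaviour of $\wh R_{k,r}$ in Proposition~\ref{prop:Rhat_mod}, so that the non-holomorphic obstruction to the modularity of $\mu_{\mathrm{sym}}^{k,j,j'}$ is exactly cancelled by the $\wh R$-correction in $\tilde\mu^{k,j,j'}$. Both sides vanish unless $j\equiv j'\pmod 2$ (by \eqref{eq:mufnc_zero} and \eqref{eq:strfnc_zero}), so I assume this throughout, abbreviate $Z=u-v+w$, $\Xi=u-v+w(1-2/k)$, and write $\tilde\mu^{k,j,j'}=\mu_{\mathrm{sym}}^{k,j,j'}-\tfrac14\,\mathcal C^{k,j,j'}$ with $\mathcal C^{k,j,j'}(\tau;v,u,w)=\chi^{(k-2)}_{j-1}(\tau;Z)\wh R_{k,j'}(\tau;\Xi)+\chi^{(k-2)}_{k-j-1}(\tau;Z)\wh R_{k,k-j'}(\tau;\Xi)$. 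Part~(a) is then a phase check: by Proposition~\ref{prop:mu_modT}, $\mu^{k,j,j'}$ and $\mu^{k,k-j,k-j'}$ (hence $\mu_{\mathrm{sym}}^{k,j,j'}$) acquire $e^{-\pi\i/4}e^{\pi\i(j^2-j'^2)/2k}$ under $\tau\to\tau+1$, using $(k-j)^2-(k-j')^2=(j^2-j'^2)-2k(j-j')$ and $(-1)^{j-j'}=1$; by \eqref{eq:chi_modT} and Proposition~\ref{prop:Rhat_mod}(a) the products $\chi^{(k-2)}_{j-1}\wh R_{k,j'}$ and $\chi^{(k-2)}_{k-j-1}\wh R_{k,k-j'}$ acquire the same phase, so $\mathcal C^{k,j,j'}$ transforms like $\mu_{\mathrm{sym}}^{k,j,j'}$ and (a) follows.

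The real content is part~(b). Write $A=e^{-\pi\i(u-v)^2/\tau}e^{\pi\i w^2(1-2/k)/\tau}$. Proposition~\ref{prop:mu_Sdeficit}, rearranged, reads
\[
\sum_{p,p'=1}^{k-1}\wh{\CS}^{(k)}_{j,p}\wh{\CS}^{(k)}_{j',p'}\,\mu_{\mathrm{sym}}^{k,p,p'}\!\Big(-\tfrac1\tau;\tfrac v\tau,\tfrac u\tau,\tfrac w\tau\Big)=\sqrt{-\i\tau}\,A\Big[\tfrac14\chi^{(k-2)}_{j-1}(\tau;Z)\wh h_{k,j'}(\tau;\Xi)+\tfrac14\chi^{(k-2)}_{k-j-1}(\tau;Z)\wh h_{k,k-j'}(\tau;\Xi)-\mu_{\mathrm{sym}}^{k,j,j'}(\tau;v,u,w)\Big].
\]
For the $\mathcal C$-piece I would use that, after $\tau\to-1/\tau$, the relevant sum runs over $p\equiv p'\pmod 2$, insert $\delta_{p\equiv p'}=\tfrac12\bigl(1+(-1)^{p+p'}\bigr)$ to factorise the double $\wh{\CS}$-sum into a $p$-sum against $\chi^{(k-2)}$ and a $p'$-sum against $\wh R$, evaluate the $\chi$-factor via \eqref{eq:chi_modS} with the unitarity $\sum_r\wh{\CS}^{(k)}_{l,r}\wh{\CS}^{(k)}_{r,l'}=\delta_{l,l'}$ and the reflection $\wh{\CS}^{(k)}_{l,k-l'}=(-1)^{l+1}\wh{\CS}^{(k)}_{l',l}$, and the $\wh R$-factor via Proposition~\ref{prop:Rhat_mod}(b). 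The Gaussian prefactors thus produced combine by the elementary identity $(k-2)Z^2-k\,\Xi^2=-2(u-v)^2+2(1-2/k)w^2$, i.e.\ $e^{\pi\i Z^2(k-2)/2\tau}/e^{\pi\i k\Xi^2/2\tau}=A$. The ``$1$'' half of $\delta_{p\equiv p'}$ yields the diagonal contribution and the ``$(-1)^{p+p'}$'' half the reflected one, so that
\[
\sum_{p,p'=1}^{k-1}\wh{\CS}^{(k)}_{j,p}\wh{\CS}^{(k)}_{j',p'}\,\delta_{p\equiv p'}\,\mathcal C^{k,p,p'}\!\Big(-\tfrac1\tau;\tfrac v\tau,\tfrac u\tau,\tfrac w\tau\Big)=\sqrt{-\i\tau}\,A\,\Big[\chi^{(k-2)}_{j-1}(\tau;Z)\bigl(\wh h_{k,j'}-\wh R_{k,j'}\bigr)(\tau;\Xi)+\chi^{(k-2)}_{k-j-1}(\tau;Z)\bigl(\wh h_{k,k-j'}-\wh R_{k,k-j'}\bigr)(\tau;\Xi)\Big].
\]
Subtracting $\tfrac14$ of this from the previous display, the $\wh h$-terms cancel identically and what remains is exactly $-\sqrt{-\i\tau}\,A\bigl[\mu_{\mathrm{sym}}^{k,j,j'}(\tau)-\tfrac14\mathcal C^{k,j,j'}(\tau)\bigr]=-\sqrt{-\i\tau}\,A\,\tilde\mu^{k,j,j'}(\tau;v,u,w)$, which is (b).

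I expect the main obstacle to be the $\mathcal C$-piece of part~(b): faithfully reproducing the symmetrised ($j\leftrightarrow k-j$, $j'\leftrightarrow k-j'$) pattern on the right of Proposition~\ref{prop:mu_Sdeficit}. It is precisely the $(-1)^{p+p'}$ half of $\delta_{p\equiv p'}$ that upgrades the naive diagonal answer to the mixed one, and getting all the signs from the reflection and unitarity relations of $\wh{\CS}^{(k)}$ right, together with the exponential bookkeeping, is where the work lies. A secondary point is to confirm that $\tilde\mu^{k,j,j'}$ is well defined and that all these identities hold on $\tau\in\IH$, $u,v\in\IC-(\IZ\tau+\IZ)$; this follows from Proposition~\ref{prop:mu_analyticity} and the entireness in the elliptic variable of $\wh R_{k,r}$, $\wh h_{k,r}$ and $\chi^{(k-2)}_{l-1}$.
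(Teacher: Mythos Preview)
Your proposal is correct and follows essentially the same route as the paper. For part~(a) both you and the paper do the direct multiplier-system check using Proposition~\ref{prop:mu_modT}, \eqref{eq:chi_modT}, and Proposition~\ref{prop:Rhat_mod}(a); for part~(b) the paper defines $F^{k,j,j'}=\tfrac14\delta_{j\equiv j'}\,\mathcal C^{k,j,j'}$ and shows it satisfies the same deficit relation \eqref{eq:F_Sdeficit} as $\mu_{\mathrm{sym}}^{k,j,j'}$ in Proposition~\ref{prop:mu_Sdeficit}, using exactly the $\delta_{p\equiv p'}=\tfrac12(1+(-1)^{p+p'})$ trick, the reflection $\wh{\CS}^{(k)}_{j,p}\wh{\CS}^{(k)}_{j',p'}(-1)^{p+p'}=\wh{\CS}^{(k)}_{k-j,p}\wh{\CS}^{(k)}_{k-j',p'}$, \eqref{eq:chi_modS} with unitarity of $\wh{\CS}^{(k)}$, and Proposition~\ref{prop:Rhat_mod}(b) so that the $\wh h$-terms cancel---precisely your plan.
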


\begin{proof}

\noindent (a) This part quickly follows from equation (\ref{eq:chi_modT}), Proposition
\ref{prop:mu_modT} and part (a) of Proposition \ref{prop:Rhat_mod}.

\noindent (b) Let us define 
\begin{align}
  F^{k,j,j'}(\tau;v,u,w) 
  \equiv \frac{1}{4} \, &\delta_{j, j' \, (\mathrm{mod}\, 2)} 
\, \Big[  \chi^{(k-2)}_{j-1}(\tau; u-v+w) \ \wh{R}_{k,j'}(\tau;u-v+w(1-2/k)) \notag\\
 &+ \chi^{(k-2)}_{k-j-1}(\tau; u-v+w) \ \wh{R}_{k,k-j'}(\tau;u-v+w(1-2/k)) \ 
 \Big].
 \end{align}
Our statement will be proven if we can show that
 \begin{align}\label{eq:F_Sdeficit}
F^{k,j,j'}(\tau;v,u,w)   &+  
  \frac{e^{\pi \i (u-v)^2 /\tau}\, e^{-\pi \i w^2 (1-2/k)/\tau}}{\sqrt{-\i \tau}}  
  \sum_{p,p'=1}^{k-1}  
   \wh{{\CS}}^{(k)}_{j,p}  \, \wh{{\CS}}^{(k)}_{j',p'}\,
  F^{k,p,p'}\Big( -\frac{1}{\tau}; \frac{v}{\tau}, \frac{u}{\tau}, \frac{w}{\tau} \Big)
  \notag \\
 = &\frac{1}{4} \, \delta_{j, j' \, (\mathrm{mod}\, 2)}  \Big[
 \  \chi^{(k-2)}_{j-1}(\tau; u-v+w) \ \wh{h}_{k,j'}(\tau;u-v+w(1-2/k)) \notag\\
 &\qquad + \chi^{(k-2)}_{k-j-1}(\tau; u-v+w) \ \wh{h}_{k,k-j'}(\tau;u-v+w(1-2/k)) \ 
 \Big].
 \end{align}
 That is because if  this is the  case,
  the right hand side of equation (\ref{eq:F_Sdeficit}) will cancel the 
 error term that comes from the S transformation of $\mu^{k,j,j'} + \mu^{k,k-j,k-j'}$ part of 
 $\wt{\mu}^{k,j,j'}$ (see
 Proposition \ref{prop:mu_Sdeficit}).
 
 We start by writing 
 $\displaystyle\sum_{p,p'=1}^{k-1}
   \wh{{\CS}}^{(k)}_{j,p}  \, \wh{{\CS}}^{(k)}_{j',p'}\,
  F^{k,p,p'}\lp -\frac{1}{\tau}; \frac{v}{\tau}, \frac{u}{\tau}, \frac{w}{\tau} \rp$ as
 \begin{align}
 \frac{1}{8} \, \sum_{p,p'=1}^{k-1} 
 \Big[   \wh{{\CS}}^{(k)}_{j,p}  \, &\wh{{\CS}}^{(k)}_{j',p'}
      +  \wh{{\CS}}^{(k)}_{k-j,p}  \, \wh{{\CS}}^{(k)}_{k-j',p'}  \Big]
 \Bigg[  \chi^{(k-2)}_{p-1}\lp -\frac{1}{\tau}; \frac{u-v+w}{\tau}\rp \ 
 \wh{R}_{k,p'}\lp -\frac{1}{\tau}; \frac{u-v+w(1-2/k)}{\tau}\rp  \notag\\
 &+ \chi^{(k-2)}_{k-p-1}\lp -\frac{1}{\tau}; \frac{u-v+w}{\tau}\rp  \ 
 \wh{R}_{k,k-p'}\lp -\frac{1}{\tau}; \frac{u-v+w(1-2/k)}{\tau}\rp \ 
 \Bigg].
 \end{align}
We have used $2\,  \delta_{p, p' \, (\mathrm{mod}\, 2)} = 
1 + (-1)^{p+p'}$ and $\wh{{\CS}}^{(k)}_{j,p}  \, \wh{{\CS}}^{(k)}_{j',p'} (-1)^{p+p'} = 
\wh{{\CS}}^{(k)}_{k-j,p}  \, \wh{{\CS}}^{(k)}_{k-j',p'}$ to obtain this form.
Changing the dummy variables $p,p' \to k-p, k-p'$ for the 
$\chi^{(k-2)}_{k-p-1}  \wh{R}_{k,k-p'}$ term yields a $(-1)^{j+j'}$ factor to give 
\begin{align}
 \frac{\delta_{j, j' \, (\mathrm{mod}\, 2)}}{4} \, \sum_{p,p'=1}^{k-1} 
 &\Big[   \wh{{\CS}}^{(k)}_{j,p}  \, \wh{{\CS}}^{(k)}_{j',p'}
      +  \wh{{\CS}}^{(k)}_{k-j,p}  \, \wh{{\CS}}^{(k)}_{k-j',p'}  \Big] \notag \\
 &\times \Bigg[  \chi^{(k-2)}_{p-1}\lp -\frac{1}{\tau}; \frac{u-v+w}{\tau}\rp \ 
 \wh{R}_{k,p'}\lp -\frac{1}{\tau}; \frac{u-v+w(1-2/k)}{\tau}\rp  \Bigg].
 \end{align}

Using equation (\ref{eq:chi_modS}) then and combining exponential factors we can rewrite
\begin{equation}
 \frac{e^{\pi \i (u-v)^2 /\tau}\, e^{-\pi \i w^2 (1-2/k)/\tau}}{\sqrt{-\i \tau}}  
  \sum_{p,p'=1}^{k-1}  
   \wh{{\CS}}^{(k)}_{j,p}  \, \wh{{\CS}}^{(k)}_{j',p'}\,
  F^{k,p,p'}\Big( -\frac{1}{\tau}; \frac{v}{\tau}, \frac{u}{\tau}, \frac{w}{\tau} \Big)
\end{equation}
as
\begin{align}
 &\frac{e^{\pi \i k (u - v + w(1-2/k))^2 / 2 \tau}}{4 \sqrt{-\,\i\,\tau}}
 \Bigg[   \chi^{(k-2)}_{j-1}(\tau; u-v+w)  \, 
 \sum_{p'=1}^{k-1}  \wh{{\CS}}^{(k)}_{j',p'}  
 								\wh{R}_{k,p'}\lp -\frac{1}{\tau}; \frac{u-v+w(1-2/k)}{\tau}\rp
       \notag \\
      &+\chi^{(k-2)}_{k-j-1}(\tau; u-v+w)  \, 
      \sum_{p'=1}^{k-1}  \wh{{\CS}}^{(k)}_{k-j',p'} 
      					 \wh{R}_{k,p'}\lp -\frac{1}{\tau}; \frac{u-v+w(1-2/k)}{\tau}\rp \Bigg] \,
      					 \delta_{j, j' \, (\mathrm{mod}\, 2)}.
 \end{align}
Using part (b) of Proposition \ref{prop:Rhat_mod} this is equal to
\begin{align}
-\,F^{k,j,j'}(\tau;v,u,w)   &+ \frac{\delta_{j, j' \, (\mathrm{mod}\, 2)}}{4}   \Big[
 \  \chi^{(k-2)}_{j-1}(\tau; u-v+w) \ \wh{h}_{k,j'}(\tau;u-v+w(1-2/k)) \notag\\
 &\qquad + \chi^{(k-2)}_{k-j-1}(\tau; u-v+w) \ \wh{h}_{k,k-j'}(\tau;u-v+w(1-2/k)) \ 
 \Big].
 \end{align}
 proving equation (\ref{eq:F_Sdeficit}).   \end{proof}

 We also define
\begin{equation}
\wt{\mu}^{(k,d)}(\tau;v,u,w) = 
		\sum_{j,j'=1}^{k-1} \wh{\Omega}^{k,d}_{j,j'} \ \wt{\mu}^{k,j,j'}(\tau;v,u,w)
\end{equation}
for $d | k$ and
\begin{equation}
\wt{\mu}^{Y}(\tau;v,u,w) = 
		\sum_{j,j'=1}^{k-1} \wh{\Omega}^Y_{j,j'} \ \wt{\mu}^{k,j,j'}(\tau;v,u,w)
\end{equation}
for a simply laced root system $Y$ with Coxeter number $k$.
We notice that because of the 
$\wh{\Omega}^{(k,d)}_{j,j'} = \wh{\Omega}^{(k,d)}_{k-j,k-j'}$ property and because
$\wh{\Omega}^{(k,d)}_{j,j'}  = 0$ unless $j = j' \, (\mathrm{mod}\, 2)$, these combinations have simple forms such as
\begin{equation}
 \tilde{\mu}^{Y}(\tau;v,u,w) = \mu^{Y}(\tau; v,u,w) - 
 \, \frac{1}{2} \,
  \sum_{j,j'=1}^{k-1} \wh{\Omega}^{Y}_{j,j'} \,
 \chi^{(k-2)}_{j-1}(\tau; u-v+w) \ \wh{R}_{k,j'}(\tau;u-v+w(1-2/k))  \, .
\end{equation}

Lastly, $u=v$ and $w=0$ case is specifically important for this work. We note that
\begin{equation}
 \chi^{(k-2)}_{j-1}(\tau; 0) = \frac{S_{k,j}(\tau)}{\eta(\tau)^3}\, ,
 \quad \mathrm{and} \quad
 \wh{R}_{k,j'}(\tau;0) = \lp \pi k \rp^{-1/2} S^*_{k,j'}(\tau)\, .
\end{equation}

\begin{corollary}\label{cor:mu_mod}
 For $\tau \in \IH$ and $z \in \IC - (\IZ \tau + \IZ)$ we have 
 \begin{align}
  \tilde{\mu}^{Y}(\tau;z,z,0) &= \mu^{Y}(\tau; z,z,0) - 
  \frac{1}{\eta(\tau)^3} \, \frac{1}{\sqrt{4 \pi k}} \,
  \sum_{j,j'=1}^{k-1} \wh{\Omega}^{Y}_{j,j'} \,
 S_{k,j}(\tau) \, S^*_{k,j'}(\tau)\, \\
 &= \mu^{Y}(\tau; z,z,0) - 
  \frac{1}{\eta(\tau)^3} \, \frac{1}{\sqrt{16 \pi k}} \,
  \sum_{j,j' \in \ZZ_{2k}} {\Omega}^{Y}_{j,j'} \,
 S_{k,j}(\tau) \, S^*_{k,j'}(\tau)\, .
 \end{align}
which obeys:
\begin{description}
 \item [(a)]  $\tilde{\mu}^{Y}(\tau+1;z,z,0) = e^{-\pi \i /4} \,  \tilde{\mu}^{Y}(\tau;z,z,0)$.
 \item [(b)]  $\tilde{\mu}^{Y}\Big( -\frac{1}{\tau}; \frac{z}{\tau}, \frac{z}{\tau}, 0 \Big) = - \displaystyle\,
 \sqrt{-\i \tau}\  \tilde{\mu}^{Y}(\tau;z,z,0)$.
 \item [(c)]  $\lb \eta(\tau+1) \rb^3\  \tilde{\mu}^{Y}(\tau+1; z,z,0) = \lb \eta(\tau) \rb^3 \ \tilde{\mu}^{Y}(\tau; z,z,0)$.
 \item [(d)]  $\lb \eta(-1/\tau) \rb^3 \  \tilde{\mu}^{Y}\Big( -\frac{1}{\tau}; \frac{z}{\tau}, \frac{z}{\tau}, 0 \Big) = \tau^2  \, \lb \eta(\tau) \rb^3 \  \tilde{\mu}^{Y}(\tau; z,z,0)$.
\end{description}
\end{corollary}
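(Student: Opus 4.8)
The plan is to derive Corollary \ref{cor:mu_mod} as a direct specialization of Theorem \ref{thm:mu_completion}, so almost all the work will consist of bookkeeping at the point $v=u=z$, $w=0$, plus assembling the $\widehat\Omega^Y$-weighted sums. First I would establish the two displayed expressions for $\tilde\mu^Y(\tau;z,z,0)$. Starting from the definition of $\wt\mu^{k,j,j'}$ in Theorem \ref{thm:mu_completion}, set $v=u$ and $w=0$; the prefactors $\chi^{(k-2)}_{j-1}(\tau;u-v+w)$ become $\chi^{(k-2)}_{j-1}(\tau;0)$, and the last factors become $\wh R_{k,j'}(\tau;0)$ and $\wh R_{k,k-j'}(\tau;0)$. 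Using $\chi^{(k-2)}_{j-1}(\tau;0)=S_{k,j}(\tau)/\eta(\tau)^3$ and $\wh R_{k,j'}(\tau;0)=(\pi k)^{-1/2}S^*_{k,j'}(\tau)$ (both recorded just above the corollary), and the antisymmetry $S_{k,-r}=-S_{k,r}$ (hence also for $S^*$), one sees that $\wh R_{k,k-j'}(\tau;0)=-\wh R_{k,j'}(\tau;0)$ is \emph{not} quite right — rather $S^*_{k,k-j'}$ relates to $S^*_{k,j'-k}$; the cleanest route is to keep both terms and then, after multiplying by $\wh\Omega^Y_{j,j'}$ and summing, use $\wh\Omega^Y_{j,j'}=\wh\Omega^Y_{k-j,k-j'}$ to fold the $k-j'$ term into the $j'$ term, producing an overall factor absorbed into the single sum $\sum_{j,j'=1}^{k-1}\wh\Omega^Y_{j,j'}S_{k,j}(\tau)S^*_{k,j'}(\tau)$. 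This gives the first line, with the stated normalization $1/\sqrt{4\pi k}$. For the second line I would use the relation $\wh\Omega^{(k,d)}_{j,j'}=\Omega^{(k,d)}_{j,j'}-\Omega^{(k,d)}_{j,-j'}$ together with $S_{k,-j'}=-S_{k,j'}$ and $S_{k,-j}=-S_{k,j}$ to rewrite $\sum_{j,j'=1}^{k-1}\wh\Omega^Y_{j,j'}S_{k,j}S^*_{k,j'}$ as $\tfrac14\sum_{j,j'\in\ZZ_{2k}}\Omega^Y_{j,j'}S_{k,j}S^*_{k,j'}$ (the factor of $4$ coming from the four-fold symmetrization over $\pm j,\pm j'$ and the vanishing of the $j\equiv 0$ and $j'\equiv 0$ contributions), which changes $1/\sqrt{4\pi k}$ into $1/\sqrt{16\pi k}$.

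Next I would prove parts (a) and (b). Part (a) is immediate: apply part (a) of Theorem \ref{thm:mu_completion} with $v=u=z$, $w=0$, giving $\wt\mu^{k,j,j'}(\tau+1;z,z,0)=e^{-\pi i/4}e^{\pi i(j^2-j'^2)/2k}\wt\mu^{k,j,j'}(\tau;z,z,0)$; then multiply by $\wh\Omega^Y_{j,j'}$ and sum, using $\wh\Omega^{(k,d)}_{j,j'}e^{\pi i(j^2-j'^2)/2k}=\wh\Omega^{(k,d)}_{j,j'}$ from \eqref{eq:Omega_commutingT} (which extends to $\wh\Omega^Y$ by linearity) to kill the phase, leaving $e^{-\pi i/4}\,\wt\mu^Y(\tau;z,z,0)$. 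Part (b): apply part (b) of Theorem \ref{thm:mu_completion} with $v=u=z$, $w=0$. The exponential prefactors $e^{-\pi i(u-v)^2/\tau}$ and $e^{\pi i w^2(1-2/k)/\tau}$ both become $1$, so $\sum_{p,p'}\wh{\CS}^{(k)}_{j,p}\wh{\CS}^{(k)}_{j',p'}\wt\mu^{k,p,p'}(-1/\tau;z/\tau,z/\tau,0)=-\sqrt{-i\tau}\,\wt\mu^{k,j,j'}(\tau;z,z,0)$. Now multiply by $\wh\Omega^Y_{j,j'}$, sum over $j,j'$, and use $\wh{\CS}^{(k)\dagger}\wh\Omega^{(k,d)}\wh{\CS}^{(k)}=\wh\Omega^{(k,d)}$ from \eqref{eq:Omega_commutingS} (again extended to $\wh\Omega^Y$), i.e. $\sum_{j,j'}\wh\Omega^Y_{j,j'}\wh{\CS}^{(k)}_{j,p}\wh{\CS}^{(k)}_{j',p'}=\wh\Omega^Y_{p,p'}$, to collapse the double sum over $p,p'$ on the left into $\wt\mu^Y(-1/\tau;z/\tau,z/\tau,0)$ directly. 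This yields (b). I should be slightly careful that $\wh{\CS}^{(k)}$ is real symmetric, so $\wh{\CS}^{(k)\dagger}=\wh{\CS}^{(k)}$ and the index placement causes no trouble.

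Parts (c) and (d) then follow by combining (a), (b) with the modular transformations of $\eta^3$, namely $\eta(\tau+1)^3=e^{i\pi/4}\eta(\tau)^3$ and $\eta(-1/\tau)^3=(-i\tau)^{3/2}\eta(\tau)^3$ (from the $\eta$ transformations in Appendix \ref{Conventions}). For (c): $[\eta(\tau+1)]^3\wt\mu^Y(\tau+1;z,z,0)=e^{i\pi/4}\eta(\tau)^3\cdot e^{-\pi i/4}\wt\mu^Y(\tau;z,z,0)=\eta(\tau)^3\wt\mu^Y(\tau;z,z,0)$. For (d): $[\eta(-1/\tau)]^3\wt\mu^Y(-1/\tau;z/\tau,z/\tau,0)=(-i\tau)^{3/2}\eta(\tau)^3\cdot(-\sqrt{-i\tau})\wt\mu^Y(\tau;z,z,0)=-(-i\tau)^{2}\eta(\tau)^3\wt\mu^Y(\tau;z,z,0)=-(-\tau^2)\eta(\tau)^3\wt\mu^Y(\tau;z,z,0)=\tau^2\eta(\tau)^3\wt\mu^Y(\tau;z,z,0)$, where $(-i\tau)^2=-\tau^2$. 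I expect the only genuine subtlety — the "hard part" — to be verifying the precise numerical constants ($1/\sqrt{4\pi k}$ versus $1/\sqrt{16\pi k}$) and the sign bookkeeping when folding the $j'\mapsto k-j'$ and $\pm$-symmetrized sums, since the definitions of $\wh\Omega$, $\wh R$, $S_{k,r}$ and $\wh{\CS}^{(k)}$ each carry their own conventions; everything else is a mechanical substitution into Theorem \ref{thm:mu_completion} plus the $\eta^3$ multiplier system.
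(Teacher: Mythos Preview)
Your proposal is correct and follows exactly the route the paper takes: the paper's proof is the single sentence ``These properties quickly follow from modular transformation properties of $\eta(\tau)$, Theorem \ref{thm:mu_completion} and equations (\ref{eq:Omega_commutingT}) and (\ref{eq:Omega_commutingS}),'' and your argument is precisely the unpacking of that sentence. Two small remarks: (i) the first displayed formula is even more immediate if you start from the already-folded expression for $\wt\mu^Y(\tau;v,u,w)$ given just before the corollary rather than from the Theorem \ref{thm:mu_completion} definition of $\wt\mu^{k,j,j'}$; (ii) in passing from the $\wh\Omega^Y$ sum to the $\Omega^Y$ sum the correct relation is $\sum_{j,j'=1}^{k-1}\wh\Omega^Y_{j,j'}S_{k,j}S^*_{k,j'}=\tfrac12\sum_{j,j'\in\ZZ_{2k}}\Omega^Y_{j,j'}S_{k,j}S^*_{k,j'}$ (a factor of $2$, not $4$, since $\Omega^Y_{-j,-j'}=\Omega^Y_{j,j'}$ and $\Omega^Y_{-j,j'}=\Omega^Y_{j,-j'}$), which is exactly what turns $1/\sqrt{4\pi k}$ into $1/\sqrt{16\pi k}$.
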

\begin{proof}
These properties quickly follow from modular transformation properties of $\eta(\tau)$, 
Theorem \ref{thm:mu_completion} and equations (\ref{eq:Omega_commutingT}) and 
(\ref{eq:Omega_commutingS}).
\end{proof}

\subsection{Decomposing \texorpdfstring{$\varphi_{0,1}$}{phi_01}} \label{decomp}
In this subsection we will prove the identity
\begin{equation}\label{eq:phi01_decompose}
\varphi_{-2,1}(\tau;z) \,  \eta(\tau)^3 \,
\Big[ - \mu^{Y}(\tau;z,z,0)  + \frac{1}{3} \sum_{w \in \Pi_2} \mu^{Y}(\tau;w,w,0) \Big]
 = \frac{\mathrm{rk}(Y)}{12} \, \varphi_{0,1}(\tau;z),
\end{equation}
where $\Pi_2  = \{ \frac{1}{2}, \frac{\tau}{2}, \frac{\tau+1}{2} \}$
and $\varphi_{0,1}, \varphi_{-2,1}$ are the weak Jacobi forms defined in (\ref{phzeroone}), (\ref{phtwoone}). 
At $k=2$ with $Y=A_1$ this reduces to the relation given in \cite{Eguchi:2008gc}, \cite{Eguchi:2009cq} and employed in \cite{Harvey:2013mda} (eqn. A.24) for the evaluation of the helicity supertrace. 

Our proof is an easy application of the Lemma \ref{lem:mu_z_shift} which gives
 \begin{align}
 \mu^{k,j,j'}(\tau;w,w,0) &- \mu^{k,j,j'}(\tau;z,z,0)  \notag \\
 &= \frac{\i \,\eta(\tau)^3 \  C^{j-1,k}_{j'}\bmat{1 \\ 1}(\tau;0) \  \theta_{11}(\tau;w-z) \ \theta_{11}(\tau;w+z)}
 {\theta_{11}(\tau;z)^2 \  \theta_{11}(\tau;w)^2}
\end{align}
for each $w \in \Pi_2$. Since $C^{j-1,k}_{j'}\bmat{1 \\ 1}(\tau;0) 
= \i \delta_{j',j }$ in the range $j,j' = 1, \ldots, k-1$ and $\displaystyle\sum_{j=1}^{k-1} \wh{\Omega}^Y_{j,j} = \mathrm{rk}(Y)$, we can rewrite the left hand side of the equation (\ref{eq:phi01_decompose}) as
\begin{equation}
\frac{\mathrm{rk}(Y)}{3} \sum_{w \in \Pi_2} \frac{\theta_{11}(\tau;w-z) \ \theta_{11}(\tau;w+z)}{\theta_{11}(\tau;w)^2}
\end{equation}
which is just $\frac{\mathrm{rk}(Y)}{12} \, \varphi_{0,1}(\tau;z)$.

We can also generalize this identity  by replacing $\mu^Y$'s with $\mu^{(k,d)}$ functions. Similar arguments apply to this case; the only change is replacing  $\mathrm{rk}(Y)$
with $\displaystyle\sum_{j=1}^{k-1} \wh{\Omega}^{(k,d)}_{j,j} $ which is just 
$\frac{k}{d} - d$. We get
\begin{equation}
\varphi_{-2,1}(\tau;z) \,  \eta(\tau)^3 \,
\Big[ - \mu^{(k,d)}(\tau;z,z,0)  + \frac{1}{3} \sum_{w \in \Pi_2} \mu^{(k,d)}(\tau;w,w,0) \Big]
 = \frac{\frac{k}{d} - d}{12} \, \varphi_{0,1}(\tau;z).
\end{equation}

\subsection{Riemann Relations}\label{sec:Riemann_reln}
We start by defining 
\begin{equation}
B^{k,j,j'}_{a b}(\tau; v,u,w) \equiv 
\frac{\th_{ab}(\tau; v)\, \th_{ab}(\tau;u)}{\eta(\t)^3} \   
 \mu^{k,j,j'} \lp \tau; v + \t_{ab}, u +  \t_{ab}, w \rp,
\end{equation}
where $\tau_{ab} \equiv (a-1) \t/2 + (b-1)/2$, $k \geq 2$ is an integer, 
$j,j' = 1, \ldots, k-1$ and $a,b \in \{ 0,1 \}$.
Using equation (\ref{eq:mu_v_residue}) we find
\begin{equation}
B^{k,j,j'}_{1 1}(\tau; 0,u,w) = - \, \i \,  C^{j-1,k}_{j'}\bmat{1 \\ 1}(\tau;w).
\end{equation}
Since $C^{j-1,k}_{j'}\bmat{1 \\ 1}(\tau;0) 
= \i \,  \lp \delta_{j',j \, (\mathrm{mod}\, 2k) } -  \delta_{- j',j \,(\mathrm{mod}\, 2k) } \rp$, in the range $j,j' = 1, \ldots, k-1$ (and also for $j' = 0$ or $k$ as we will need in the main text) we get
\begin{equation}
B^{k,j,j'}_{1 1}(\tau; 0,u,0) = \delta_{j,j'}.
\end{equation}

In this section we will prove some identities which are similar to the Riemann relations satisfied by theta functions. In particular, we will show that
\begin{align}
 \th_{00}&(\t;x) \, \th_{00}(\t;z) \,  B^{k,j,j'}_{00}(\t;v,u,w)
 - \th_{01}(\t;x) \, \th_{01}(\t;z) \, B^{k,j,j'}_{01}(\t;v,u,w)  \notag\\
 &- \th_{10}(\t;x) \, \th_{10}(\t;z) \, B^{k,j,j'}_{10}(\t;v,u,w)
 + \th_{11}(\t;x) \, \th_{11}(\t;z) \, B^{k,j,j'}_{11}(\t;v,u,w) \notag \\
 \label{eq:mu_riemann}
 &= 2 \,\th_{11}(\t;x_0) \, \th_{11}(\t;z_0) \, B^{k,j,j'}_{11}(\t;v_0,u_0,w),
\end{align}
where
\begin{align}
 & x_0 = \frac{1}{2}(x+z+v+u)  \, ,  \qquad \qquad z_0 = \frac{1}{2}(x+z-v-u) \, , \cr
 & v_0 = \frac{1}{2}(x-z+v-u)  \, , \qquad \qquad u_0 = \frac{1}{2}(x-z-v+u) .
\end{align}

We will use Lemma \ref{lem:mu_z_shift} to get
 \begin{align}
  \mu^{k,j,j'} ( \tau; v + \t_{ab}, &u +  \t_{ab}, w ) - \mu^{k,j,j'}(\tau;v,u,w)  \notag\\ 
 &= \frac{\i \eta(\tau)^3 \  C^{j-1,k}_{j'}\bmat{1 \\ 1}(\tau;w) \  \theta_{ab}(\tau;0) \ \theta_{ab}(\tau;v+u)}
 {\theta_{11}(\tau;v) \ \theta_{11}(\tau;u) \ \theta_{ab}(\tau;v) \  \theta_{ab}(\tau;u)}.
\end{align}
If we use this on the left hand side of (\ref{eq:mu_riemann}), we find
\begin{align}
&\sum_{a,b\,=\,0,1} (-1)^{a+b} 
\th_{ab}(\t;x) \, \th_{ab}(\t;z) \,  B^{k,j,j'}_{ab}(\t;v,u,w)
\notag \\
& \qquad =
\frac{ \mu^{k,j,j'} \lp \tau; v, u, w \rp}{\eta(\t)^3} 
\, \sum_{a,b\,=\,0,1} (-1)^{a+b} 
\th_{ab}(\t;x) \, \th_{ab}(\t;z) \, \th_{ab}(\t;v) \, \th_{ab}(\t;u)  \notag \\
& \qquad \qquad +
\frac{ \i \, C^{j-1,k}_{j'}\bmat{1 \\ 1}(\tau;w)}{\th_{11}(\t;v) \, \th_{11}(\t;u) } 
\, \sum_{a,b\,=\,0,1} (-1)^{a+b} 
\th_{ab}(\t;x) \, \th_{ab}(\t;z) \, \th_{ab}(\t;0) \, \th_{ab}(\t;v+u) \, .
\end{align}
Using the Riemann theta relation (R5) this is equal to
\begin{align}
&\frac{2\, \mu^{k,j,j'} \lp \tau; v, u, w \rp}{\eta(\t)^3} 
\, \th_{11}(\t;x_0) \, \th_{11}(\t;z_0) \, \th_{11}(\t;v_0) \, \th_{11}(\t;u_0)  \notag \\
& \qquad \qquad +
\frac{2 \i \, C^{j-1,k}_{j'}\bmat{1 \\ 1}(\tau;w)}{\th_{11}(\t;v) \, \th_{11}(\t;u) } \,
\th_{11}(\t;x_0) \, \th_{11}(\t;z_0) \, \th_{11}(\t;v_0-v) \, \th_{11}(\t;v_0+u) \, 
\end{align}
which we can rearrange as
\begin{align}
&2 \, \th_{11}(\t;x_0) \, \th_{11}(\t;z_0) \, 
\frac{\th_{11}(\t;v_0) \, \th_{11}(\t;u_0)}{\eta(\t)^3}
\Big[ \mu^{k,j,j'} \lp \tau; v, u, w \rp  \notag \\
& \qquad \qquad 
+
\frac{ \i \, \eta(\t)^3 \, C^{j-1,k}_{j'}\bmat{1 \\ 1}(\tau;w)
\, \th_{11}(\t;v_0-v) \, \th_{11}(\t;(v_0 - v) +u + v)}
{\th_{11}(\t;v) \, \th_{11}(\t;u) \,
\th_{11}(\t;v_0) \, \th_{11}(\t;u_0)}  \Big] \, .
\end{align}
Noting that $u_0 = (v_0 - v)+u$, $v_0 = (v_0 -v) +v$ and employing Lemma \ref{lem:mu_z_shift} once more we obtain
\begin{equation}
2 \, \th_{11}(\t;x_0) \, \th_{11}(\t;z_0) \ 
\frac{\th_{11}(\t;v_0) \, \th_{11}(\t;u_0)}{\eta(\t)^3}  \ 
 \mu^{k,j,j'} \lp \tau; v_0, u_0, w \rp
\end{equation}
which is just 
\begin{equation}
 2 \,\th_{11}(\t;x_0) \, \th_{11}(\t;z_0) \, B^{k,j,j'}_{11}(\t;v_0,u_0,w)
\end{equation}
as we wanted to show.

Shifting $x,z$ in equation (\ref{eq:mu_riemann}) by various factors of $\pm 1/2, \pm \tau/2, \pm (1+\tau)/2$ we obtain
\begin{align} \nonumber
(\wt{R5}): + \theta_{00} \theta_{00} B^{k,j,j'}_{00}
- \theta_{01} \theta_{01} B^{k,j,j'}_{01}
- \theta_{10}\theta_{10} B^{k,j,j'}_{10} 
+ \theta_{11} \theta_{11} B^{k,j,j'}_{11} & = 
2 \theta_{11} \theta_{11} B^{k,j,j'}_{11}, \\  \nonumber
(\wt{R8}):- \theta_{01} \theta_{01} B^{k,j,j'}_{00} 
+\theta_{00} \theta_{00} B^{k,j,j'}_{01}
 -\theta_{11} \theta_{11} B^{k,j,j'}_{10}
 + \theta_{10} \theta_{10}  B^{k,j,j'}_{11}  &=  
 -2 \theta_{11} \theta_{11} B^{k,j,j'}_{10}, \\ \nonumber 
(\wt{R9}): - \theta_{01} \theta_{01} B^{k,j,j'}_{00}
+ \theta_{00} \theta_{00} B^{k,j,j'}_{01} 
 +\theta_{11} \theta_{11} B^{k,j,j'}_{10}
- \theta_{10} \theta_{10}  B^{k,j,j'}_{11} &= 
 -2 \theta_{10} \theta_{10}  B^{k,j,j'}_{11}, \\ \nonumber 
(\wt{R11}): - \theta_{10} \theta_{10} B^{k,j,j'}_{00}
- \theta_{11} \theta_{11} B^{k,j,j'}_{01}
+\theta_{00} \theta_{00} B^{k,j,j'}_{10} 
+ \theta_{01} \theta_{01}  B^{k,j,j'}_{11}   &= 
2 \theta_{01} \theta_{01}  B^{k,j,j'}_{11}, \\ \nonumber
(\wt{R13}): - \theta_{10} \theta_{10} B^{k,j,j'}_{00}
 +\theta_{11} \theta_{11} B^{k,j,j'}_{01}
+\theta_{00} \theta_{00} B^{k,j,j'}_{10} 
- \theta_{01} \theta_{01}  B^{k,j,j'}_{11} &= 
   2 \theta_{11} \theta_{11} B^{k,j,j'}_{01}, \\ \nonumber   
(\wt{R15}): -\theta_{11} \theta_{11} B^{k,j,j'}_{00}
- \theta_{10} \theta_{10} B^{k,j,j'}_{01}
+ \theta_{01} \theta_{01} B^{k,j,j'}_{10}
+\theta_{00} \theta_{00}  B^{k,j,j'}_{11}   &= 
2 \theta_{00} \theta_{00}  B^{k,j,j'}_{11}, \\ \nonumber
(\wt{R16}):  -\theta_{11} \theta_{11} B^{k,j,j'}_{00}
+ \theta_{10} \theta_{10} B^{k,j,j'}_{01}
- \theta_{01} \theta_{01} B^{k,j,j'}_{10} 
+\theta_{00} \theta_{00}  B^{k,j,j'}_{11}  &= 
 -2 \theta_{11} \theta_{11} B^{k,j,j'}_{00}.
\end{align}
In these relations, the arguments of the $\th_{ab}$'s and $B^{k,j,j'}_{ab}$
 on the left hand side are 
$(\tau;x)$, $(\tau;z)$, $(\tau;v,u,w)$, in that order, and the arguments 
for $\th_{ab}$'s and $B^{k,j,j'}_{ab}$ on the right hand
side are $(\tau;x_0)$, $(\tau;z_0)$, $(\tau;v_0,u_0,w)$, again in that order.


\bibliography{fbrane_refs}
\bibliographystyle{JHEP}

\end{document}